\newcommand{\bc}{\begin{center}}
\newcommand{\ec}{\end{center}}
\newcommand{\ba}{\begin{array}}
\newcommand{\ea}{\end{array}}
\newcommand{\be}{\begin{eqnarray}}
\newcommand{\ee}{\end{eqnarray}}
\newcommand{\bel}{\begin{eqnarray}\label}
\newcommand{\El}{\end{eqnarray}}
\newcommand{\bes}{\begin{eqnarray*}}
\newcommand{\Es}{\end{eqnarray*}}
\newcommand{\bn}{\begin{enumerate}}
\newcommand{\en}{\end{enumerate}}
\newcommand{\drho}{\partial_\rho}
\newcommand{\prok}{\p_{\hspace{-.6ex}\rho_k}}
\definecolor{MIT}{cmyk}{.24, 1.00, .78, .17} 
\definecolor{pink}{cmyk}{0, 1, 0, 0} 
\definecolor{darkgreen}{cmyk}{1,0, 1, 0} 
\newtheorem{lemma}{Lemma}
\newtheorem{definition}{Definition}
\newtheorem{proposition}{Proposition}
\newtheorem{theorem}{Theorem}
\newtheorem{remark}{Remark}
\newcommand{\abs}[1]{\left|#1\right|}
\newcommand{\cD}{\mathcal{D}}
\newcommand{\cF}{\mathcal{F}}
\newcommand{\cN}{\mathcal{N}}
\newcommand{\bX}{\mathbf{X}}
\newcommand{\R}{{\rm I}\kern-0.18em{\rm R}}
\newcommand{\h}{{\rm I}\kern-0.18em{\rm H}}
\newcommand{\K}{{\rm I}\kern-0.18em{\rm K}}
\newcommand{\p}{{\rm I}\kern-0.18em{\rm P}}
\newcommand{\E}{{\rm I}\kern-0.18em{\rm E}}
\newcommand{\Z}{{\rm Z}\kern-0.18em{\rm Z}}
\newcommand{\1}{{\rm 1}\kern-0.24em{\rm I}}
\newcommand{\N}{{\rm I}\kern-0.18em{\rm N}}
\newcommand{\pn}{\p_{\kern-0.25em n}}
\newcommand{\pnm}{\p_{\kern-0.25em n,m}}
\newcommand{\psubm}{\p_{\kern-0.25em m}}
\newcommand{\psubp}{\p_{\kern-0.25em p}}
\newcommand{\cfi}{\cF_{\kern-0.25em \infty}}
\newcommand{\argmin}{\mathop{\mathrm{argmin}}}
\newcommand{\ud}{\mathrm{d}}
\newcommand{\eps}{\varepsilon}
\newlength{\minipagewidth}
\newcolumntype{d}{D{.}{.}{6}}
\newcommand{\flo}[1]{\lfloor #1 \rfloor} 
\newcommand{\ceil}[1]{\lceil #1 \rceil} 
\newcommand{\mb}[1]{\mathbb{#1}} 
\newcommand{\mc}[1]{\mathcal{#1}} 
\newcommand{\rp}[1]{^{(#1)}} 
\newcommand{\uu}{\mathbf{u}\,} 
\newcommand{\y}{\mathbf{y}\,} 
\newcommand{\z}{\mathbf{z}\,} 
\newcommand{\mf}[1]{\mathbf{#1}} 
\newcommand{\sgn}{\text{sgn}} 
\newcommand{\mr}[1]{\mathrm{#1}} 
\newtheorem{claim}{Claim}[section]
\newtheorem{question}{Question}
\begin{document}

\begin{frontmatter}
		\title{Balancing Gaussian vectors in high dimension}
		\runtitle{Balancing Gaussian vectors}
		
		\author{ 
            \fnms{Paxton} \snm{Turner}\ead[label=paxton]{pax@mit.edu},
			\fnms{Raghu}
			\snm{Meka}\ead[label=raghu]{raghuvardhan@gmail.com}, \text{and}
			\fnms{Philippe} \snm{Rigollet}\thanksref{t2}\ead[label=rigollet]{rigollet@math.mit.edu}		
		}

 		\affiliation{UCLA and Massachusetts Institute of Technology  }
		\thankstext{t2}{Supported by NSF awards IIS-BIGDATA-1838071, DMS-1712596 and CCF-TRIPODS- 1740751; ONR grant N00014-17-1-2147.
		}

		\address{{Paxton Turner}\\
			{Department of Mathematics} \\
			{Massachusetts Institute of Technology}\\
			{77 Massachusetts Avenue,}\\
			{Cambridge, MA 02139-4307, USA}\\
			\printead{paxton}
		}
		
		\address{{Raghu Meka}\\
		    {Department of Computer Science} \\
			{UCLA and Massachusetts Institute of Technology}\\
			{ 3732H Boelter Hall } \\
			{ Los Angeles, CA 90095, USA } \\
			\printead{raghu}
		}
		
		\address{{Philippe Rigollet}\\
			{Department of Mathematics} \\
			{Massachusetts Institute of Technology}\\
			{77 Massachusetts Avenue,}\\
			{Cambridge, MA 02139-4307, USA}\\
			\printead{rigollet}
		}
		
 		\runauthor{Turner et al.}
		
		\begin{abstract}
		Motivated by problems in controlled experiments, we study the discrepancy of random matrices with continuous entries where the number of columns $n$ is much larger than the number of rows $m$. Our first result shows that if $\omega(1) = m = o(n)$,  a matrix with i.i.d. standard Gaussian entries has discrepancy $\Theta(\sqrt{n} \, 2^{-n/m})$ with high probability. This provides sharp guarantees for Gaussian discrepancy in a regime that had not been considered before in the existing literature. Our results also apply to a more general family of random matrices with continuous i.i.d entries, assuming that $m = O(n/\log{n})$. The proof is non-constructive and is an application of the second moment method. Our second result is algorithmic and applies to random matrices whose entries are i.i.d. and have a Lipschitz density. We present a randomized polynomial-time algorithm that achieves discrepancy $e^{-\Omega(\log^2(n)/m)}$ with high probability, provided that $m = O(\sqrt{\log{n}})$. In the one-dimensional case, this matches the best known algorithmic guarantees due to Karmarkar--Karp. For higher dimensions $2 \leq m = O(\sqrt{\log{n}})$, this establishes the first efficient algorithm achieving discrepancy smaller than $O( \sqrt{m} )$.
		\end{abstract}
		
		\begin{keyword}[class=AMS]
			\kwd[Primary ]{68R01}
			\kwd[; secondary ]{62F12}
		\end{keyword}
		\begin{keyword}[class=KWD]
		    Controlled experiments, covariate balance, discrepancy, random matrix, second moment method, number partitioning, greedy algorithm
		\end{keyword}
		
	\end{frontmatter} 

    \section{Introduction}

Randomized controlled experiments are often dubbed the ``gold standard" for estimating treatment effects because of their ability to create a treatment and a control group that have the same features on average. Indeed,  pure randomization, i.e., assigning each observation uniformly at random between the treatment and control group, leads to two groups with approximately the same size, the same average age, the same average height, etc. Unfortunately, because of random fluctuations, this approach may not lead to the best balance between the attributes of the control group and those of the treatment group. Yet, near perfect balance is highly desirable since it often leads to a more accurate estimator of the treatment effect. This quest for balance was initiated at the dawn of controlled experiments. Indeed, W.S. Gosset, a.k.a Student (of $t$-test fame) already questioned the use of pure randomization when it leads to unbalanced covariates~\citep{Stu38}, and R.A. Fisher proposed randomized block designs as a better solution in certain cases~\citep{Fis35}. One traditional approach to overcome this limitation is to simply \emph{rerandomize} the allocation until the generated assignment is deemed balanced enough~\citep{MorRub12,LiDinRub18}. Rerandomization is effectively a primitive form of optimization that consists in keeping the best of several random solutions. However, it was not until recently that covariate balancing was recognized for the  combinatorial optimization problem that it really is. With this motivation,~\cite{BerJohKal15, Kal18} proposed algorithms based on mixed integer programming that, while flexible, did not come with theoretical guarantees. More recently,~\cite{HarSavSpiZha19} used new algorithms from \cite{BanDadGarLov18} with theoretical guarantees to generate experimental designs with a tunable degree of randomization versus covariate balance and characterized the resulting trade-off between model robustness and efficiency for a specific treatment effect estimator computed on data collected in such experiments.       

In this work, we investigate both the theoretical and algorithmic aspects associated to this question by framing it in the broader scope of \emph{vector balancing}. In particular, this question bears strong theoretical footing in discrepancy theory.\footnote{The recent work~\cite{HarSavSpiZha19} takes a similar point of view, though here our purpose is to focus purely on optimal covariate balance.} 
	
	Let $X_1, \ldots, X_n \in \mathbb{R}^m$ denote a collection of vectors and let $\bX$ denote the $m \times n$ matrix whose column $i$ is $X_i$. The \textit{discrepancy} $\mc{D}(X_1, \ldots, X_n)$ of this collection is defined as follows.\footnote{In the interest of clarity, we free ourselves from important considerations in the practical design of controlled experiments such as having two groups of exactly the same size.}
\begin{equation}
    \label{eq:defdisc}
	\mc{D}_n:=\mc{D}(X_1, \ldots, X_n) = \min_{\sigma \in \{ \pm 1 \}^n } \abs{\sum_{i = 1}^n \sigma_i X_i }_{\infty}= \min_{\sigma \in \{ \pm 1 \}^n } \abs{\bX\sigma}_{\infty} 
\end{equation}
	
	Discrepancy theory is a rich and well-studied area with applications to combinatorics, optimization, geometry, and statistics, among many others \citep[see the comprehensive texts][]{matousek,chazelle}. A fundamental result in the area due to \cite{spencer} states that if $\max_i |X_i|_\infty \leq 1$ and $m = n$, then $\mc{D}_n\le 6\sqrt{n}$.
	Spencer's proof is nonconstructive and relies on a technique known as \textit{partial coloring}. In the last decade, starting with the breakthrough work of~\cite{Ban10}, several algorithmic versions of the partial coloring method have been introduced to efficiently find a signing $\sigma$ that approximately attains the minimum in~\eqref{eq:defdisc}. These include approaches based on random walks \citep{Ban10,LovMek12}, random projections \citep{Rot17}, and multiplicative weights \citep{LevRamRot17}. In the regime where $m \geq n$, these algorithms can be used to compute a signing (or allocation) $\sigma \in \{-1,1\}^n$ with objective value $O(\sqrt{n \log(2m/n)} \,)$. Moreover, this guarantee is tight in the sense that examples are known with discrepancy matching this bound.
	

The aforementioned results make minimal structural assumptions on the vectors $X_1, \ldots, X_n$ and treat the input as worst-case. However, in the context of controlled experiments, it is natural to assume that $X_1, \ldots, X_n$ are, in fact, independent copies of a random vector $X \in \R^m$. While more general results are possible, the reader should keep in mind the canonical example where $X \sim \cN_m(0,I_m)$ is a standard Gaussian vector, and in particular where the entries of $X$ are of order 1. We dub the study of $\mc{D}_n$ in this context \emph{average-case discrepancy}. 

It was first shown in \cite{kklo} via a nonconstructive application of the second moment method that when $m = 1$, the average-case discrepancy is $\mc{D}_n=\Theta(\sqrt{n} \, 2^{-n})$ with high probability, assuming that the underlying distribution has a sufficiently regular density.  This result was extended to specific multidimensional regimes. First,~\cite{costello} showed that  $\cD_n=\Theta(\sqrt{n} \, 2^{-n/m})$ in the constant dimension regime $m=O(1)$. The optimal discrepancy is also known  in the super-linear regime $m \ge 2n$   where it was shown that $\cD_n= O(\sqrt{n \log(2m/n)})$.\footnote{The upper bound established in \cite{ChaVem14} presents additional polylogarithmic terms that are negligible for most of the range $m \ge 2n$. This is also the regime considered by~\cite{HarSavSpiZha19}.} In particular, there is a striking gap between this benchmark and the discrepancy $|\bX\sigma^{\mathsf{rdm}}|_\infty=\Theta(\sqrt{n \log m })$ achieved by a random signing $\sigma^{\mathsf{rdm}}$, especially in the sub-linear regime. Motivated by applications to controlled experiments, \cite{KriAzrKap19} studied the average-case discrepancy problem with the aim to improve on this gap. The authors devised a simple and efficient greedy scheme that, in the univariate case, outputs an allocation $\sigma^{\mathsf{gree}}$ satisfying $|\bX\sigma^{\mathsf{gree}}|=O(n^{-2})$. In addition,~\cite{KriAzrKap19} argue that  $|\bX\sigma^{\mathsf{gree}}|=O(n^{-2/m})$ for any \emph{constant} dimension $m$.

This state of the art leaves three important questions open: 
\begin{enumerate}
\item Can a sub-polynomial discrepancy be achieved in polynomial time even in dimension 1? 
\item What is the optimal discrepancy in the intermediate regime where $\omega(1) = m = o(n)$?
\item Do there exist efficient allocations that perform better than the random allocation in super-constant dimension? 
\end{enumerate}
The answer to the first question is well known. Indeed, the best known algorithm for number partitioning is due to \cite{KarKar82} and yields $\sigma \in \{-1,1\}^n$ such that  $|\bX\sigma|_\infty=e^{-\Omega(\log^2 n)}$ with high probability \citep[see also][]{BoeMer08}. While this result provides a super-polynomial improvement over algorithms built for the worst case, a significant gap remains between the information-theoretic bounds and the algorithmic ones despite extensive work on the subject \citep{BoeMer08,bcp,HobRamRotYan17}. This suggests the possibility of a statistical-to-computational gap similar to those that have been observed starting with sparse PCA~\citep{BerRig13,BerRig13b} and more recently in other planted problems~\citep{BreBreHul18,BanPerWei18}. Moreover, while the greedy algorithm of~\cite{KriAzrKap19} is loosely based on ideas from this algorithm, no multivariate extension of this algorithm is known even for the case $m=2$. Note that in the super-linear regime $m \ge 2n$, the work of~\cite{ChaVem14} also proposes a polynomial-time algorithm based on~\cite{LovMek12} showing an absence of substantial statistical-to-computational gaps. 

In this paper, we provide answers to the remaining two questions raised above. First, we
show that the discrepancy of standard Gaussian vectors is $\Theta(\sqrt{n} \, 2^{-n/m})$ with high probability for the remaining regime $\omega(1) = m = o(n)$. Moreover, we complement this existential result by giving the first randomized polynomial-time algorithm that achieves discrepancy $e^{-\Omega(\log^2(n)/m)}$ when $2\le m = O(\sqrt{ \log{n} })$. Note that while this remains an intrinsically low-dimensional result, it covers already super-constant dimension.  This first algorithmic result paves the way for potential algorithmic advances in a wider range of high-dimensional problems. In particular, our existential result sets an information-theoretic benchmark against which future algorithmic results can be compared as well as a baseline to establish potential statistical-to-computational gaps in high dimensions. These improved discrepancy bounds also have direct applications to randomized control trials. For example, in the case of an additive linear response with all covariates observed, the discrepancy attained by the allocation controls the fluctuations of the difference-in-means treatment effect estimator~\citep{KriAzrKap19}.  
	
Another point of view on balancing covariates in randomized trails is that of pairwise matching. In this setup, the experimenter first divides the sample into two equal-sized groups and then pairs up individuals who have similar covariates. For the unidimensional case, \cite{GreLuSil04} proposed a scheme that consists of performing a minimum cost matching that leads to a bounded discrepancy. This result may be extended to yield a discrepancy of order $n^{1-1/m}$ in dimension $m$ using results on random combinatorial optimization~\cite{Ste92}. Unlike matching algorithms, bipartite matching algorithms can be implemented in near-linear time using modern tools from computational optimal transport~\citep{CutPey18,AltWeeRig17,AltBacRud19}. We leave it as an interesting open question to study allocation schemes based on random bipartite matching problems for which sharp results have recently been discovered~\citep{LedZhu19}.



	\section{Main results}
In this section, we give an overview of our main results. Detailed computations and proofs are postponed to subsequent sections.

\subsection{Existential result}
	Our first main result shows that when $X_1, \ldots, X_n \stackrel{iid}{\sim} \mc{N}(0, I_m)$ and $\omega(1) = m = o(n)$, then the discrepancy is asymptotically $\sqrt{\frac{\pi n}{2}} \, 2^{-n/m} $ with high probability. As in the one-dimensional case \citep{kklo}, this result highlights that drastic cancellations are possible, with high probability, when the number of vectors grows asymptotically faster than the dimension. 
	
	\begin{theorem}
		\label{thm:sub-linear_discrepancy}
		Fix an absolute constant $\gamma>1$ and suppose that $\omega(1) = m = o(n).$ Let $X_1, \ldots, X_{n} \stackrel{iid}{\sim} \cN(0,I_m)$ be independent standard Gaussian random vectors. Then
		\begin{equation}	
		\label{eqn:upper_bound}
		\lim_{n \to \infty}\p\Big[\mc{D}(X_1, \ldots, X_n) \leq\gamma\sqrt{\frac{\pi n}{2}}2^{-n/m}\Big]=1 \,.
		\end{equation}
		If $\gamma' < 1$, then 
		\begin{equation}
		\label{eqn:lower_bound}
		\lim_{n \to \infty} \p\left[ \mc{D}(X_1, \ldots, X_n) \geq \gamma' \sqrt{\frac{\pi n}{2}} 2^{-n/m} \right] = 1.
		\end{equation}
	\end{theorem}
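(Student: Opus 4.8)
The plan is to apply the second moment method to the random variable
$$
N_t \;=\; \#\bigl\{\sigma\in\{\pm1\}^n:\ |\bX\sigma|_\infty\le t\bigr\},
$$
for which $\{\mc D_n\le t\}=\{N_t\ge 1\}$. For fixed $\sigma$ the vector $\bX\sigma=\sum_i\sigma_iX_i$ is exactly $\cN(0,nI_m)$, so $\p[|\bX\sigma|_\infty\le t]=q(t)^m$ with $q(t):=\p[|\cN(0,n)|\le t]$ (a scalar Gaussian of variance $n$), and hence $\rE[N_t]=2^nq(t)^m$. Writing $s=t/\sqrt n$ one has $q(t)=\p[|\cN(0,1)|\le s]=\tfrac{2s}{\sqrt{2\pi}}\bigl(1-\tfrac{s^2}6+O(s^4)\bigr)$; since $m=o(n)$ makes $s\to0$ at the threshold, the calibration $t_\gamma=\gamma\sqrt{\pi n/2}\,2^{-n/m}$ gives $\tfrac{2s}{\sqrt{2\pi}}=\gamma\,2^{-n/m}$ and therefore $\rE[N_{t_\gamma}]=\gamma^{\,m(1+o(1))}$. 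The lower bound \eqref{eqn:lower_bound} is then immediate: using $q(t')\le\tfrac{2s'}{\sqrt{2\pi}}$ and Markov's inequality, $\p[\mc D_n<t_{\gamma'}]\le\rE[N_{t_{\gamma'}}]\le(\gamma')^m\to0$ whenever $\gamma'<1$, since $m=\omega(1)$. It remains to prove \eqref{eqn:upper_bound}, i.e.\ that $N_{t_\gamma}\ge1$ with high probability when $\gamma>1$.

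For the second moment I would use that $\bX\sigma$ and $\bX\tau$ are \emph{jointly} Gaussian: the $m$ coordinate pairs $\bigl((\bX\sigma)_j,(\bX\tau)_j\bigr)$ are i.i.d.\ centered bivariate Gaussians with variances $n$ and covariance $\langle\sigma,\tau\rangle$, so $\p[|\bX\sigma|_\infty\le t,\ |\bX\tau|_\infty\le t]=g(\rho_{\sigma\tau},t)^m$, where $\rho_{\sigma\tau}=\langle\sigma,\tau\rangle/n$ and $g(\rho,t)=\p[|U|\le t,\ |V|\le t]$ for a bivariate normal of variance $n$ and correlation $\rho$. Grouping the $\tau$'s by their Hamming distance to $\sigma$ and then summing over $\sigma$,
$$
\frac{\rE[N_t^2]}{\rE[N_t]^2}=\rE_\tau\!\bigl[R(\bar\tau,t)^m\bigr],\qquad R(\rho,t):=\frac{g(\rho,t)}{q(t)^2},
$$
where $\bar\tau=\tfrac1n\sum_i\tau_i$ for a uniformly random $\tau\in\{\pm1\}^n$. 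The key pointwise estimate is
$$
R(\rho,t)\ \le\ \min\!\Bigl(q(t)^{-1},\ (1-\rho^2)^{-1/2}\Bigr),\qquad |\rho|<1,
$$
which I would obtain by conditioning on $U$ (so that $V\mid U\sim\cN(\rho U,\,n(1-\rho^2))$, whose mass in $[-t,t]$ is largest when the conditional mean is $0$), giving $g(\rho,t)\le q(t)\,\p[|\cN(0,1)|\le s/\sqrt{1-\rho^2}]$, and then combining the trivial bound $\p[\,\cdot\,]\le1$ with the concavity of $u\mapsto\p[|\cN(0,1)|\le u]$ through the origin, which gives $\p[|\cN(0,1)|\le s/\sqrt{1-\rho^2}]\le (1-\rho^2)^{-1/2}\,q(t)$.

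The crux is to show $\rE_\tau[R(\bar\tau,t_\gamma)^m]\le 1+o(1)$; combined with $\rE_\tau[R^m]\ge1$ (Jensen) and $\rE[N_{t_\gamma}]\to\infty$, Chebyshev's inequality then forces $N_{t_\gamma}\ge1$ w.h.p. Write $\bar\tau=\pm(1-2\alpha)$ with $\alpha=d/n$ the minority fraction, so $1-\bar\tau^2=4\alpha(1-\alpha)$, $\p[d=\alpha n]=\binom n{\alpha n}2^{-n}$, and $R(\bar\tau,t_\gamma)\le\min\!\bigl(q^{-1},(4\alpha(1-\alpha))^{-1/2}\bigr)$ with $q=\gamma 2^{-n/m}(1+o(1))$. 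I would then split according to $|\bar\tau|$. On the \emph{central} range $|\bar\tau|\le n^{-1/2}\log n$ the local central limit theorem gives $\binom n{\alpha n}2^{-n}=\sqrt{2/\pi n}\,e^{-2j^2/n}(1+o(1))$ uniformly (with $j=\bar\tau n/2$), while $(1-\bar\tau^2)^{-m/2}=e^{(2m/n^2)j^2(1+o(1))}$, so the contribution becomes $\sqrt{2/\pi n}\sum_je^{-2(n-m)j^2/n^2}(1+o(1))=\sqrt{n/(n-m)}\,(1+o(1))=1+o(1)$ because $m=o(n)$; this is what produces the sharp constant $\sqrt{\pi n/2}$. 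On the \emph{far} range $|\bar\tau|>\sqrt{1-q^2}$ one uses $R\le q^{-1}$ together with $\p[|\bar\tau|>\sqrt{1-q^2}]\le 2\cdot2^{-n(1-H(\alpha_0))}$, where $H$ is the binary entropy and $\alpha_0\asymp q^2\asymp2^{-2n/m}$; since $q^{-m}2^{-n}=\rE[N_t]^{-1}$ and $nH(\alpha_0)\asymp (n/m)^2\,2^{-2n/m}\cdot m=o(m\log\gamma)$, this part is $\rE[N_t]^{-1+o(1)}=o(1)$. On the remaining \emph{intermediate} range I would bound each term by $\binom n{\alpha n}2^{-n}(4\alpha(1-\alpha))^{-m/2}\le 2^{nF(\alpha)}$, with $F(\alpha):=H(\alpha)-1-\tfrac m{2n}\log_2\!\bigl(4\alpha(1-\alpha)\bigr)$, the point being that $F(\alpha)\le0$ on $[\alpha_0,\tfrac12]$ — equivalently $1-H(\alpha)\ge\tfrac m{2n}\log_2\tfrac1{4\alpha(1-\alpha)}$ — with $F$ first decreasing then increasing there and $F<0$ except at $\alpha=\tfrac12$; a chord bound for $F$ near $\alpha_0$ (where $F$ is convex and steeply decreasing, whence the tail sum is geometric) together with $\max F\asymp-(\log^2 n)/n$ on the part of the range bounded away from $\tfrac12$ then shows this range contributes $O(\gamma^{-m})+O(n\cdot2^{-c\log^2 n})=o(1)$.

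The main obstacle is precisely this intermediate regime: there $\binom n{\alpha n}$ is already exponentially small in $n$, while the variance‑inflation factor $(1-\bar\tau^2)^{-m/2}$ blows up as $\bar\tau\to\pm1$, and one must show the former wins. This is exactly the inequality $1-H(\alpha)\ge\tfrac m{2n}\log_2\tfrac1{4\alpha(1-\alpha)}$, which fails once $m=\Theta(n)$ and dictates both the hypothesis $m=o(n)$ and the choice of cutoff $\alpha_0\asymp2^{-2n/m}$, at which the combinatorial and variance effects balance; the central‑range computation, by contrast, is routine and is what pins down the constant $\pi/2$.
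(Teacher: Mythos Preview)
Your proof is correct and follows the same second-moment-method strategy as the paper, with the same two key pointwise bounds on the correlation ratio (the crude bound $R\le q^{-1}$ and the sharper $R\le(1-\rho^2)^{-1/2}$). Your derivation of the latter via conditioning and concavity of $u\mapsto\p[|\cN(0,1)|\le u]$ is in fact cleaner than the paper's, which bounds the bivariate density pointwise and picks up an extra constant.

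Where you differ is in the decomposition of the sum. The paper truncates at $k=n/4$ and $k=3n/4$, handles the tails by a further split at $k=m^2/n$ (your $A$ and $B$ terms, essentially), and then applies the Laplace method to the integral on $[1/4,3/4]$ after establishing concavity of the exponent $\phi_n$. You instead split at the natural crossover $\alpha_0\asymp 2^{-2n/m}$ (where your two bounds on $R$ meet) and at the edge of a local-CLT window $|\bar\tau|\le n^{-1/2}\log n$; on the central window you compute the discrete Gaussian sum directly, and on the intermediate range you analyze the exponent $F(\alpha)=H(\alpha)-1-\tfrac{m}{2n}\log_2(4\alpha(1-\alpha))$ by hand. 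This avoids the Laplace-method machinery and the concavity lemma at the cost of a slightly longer case analysis. Your claim that $F$ is unimodal on $[\alpha_0,1/2]$ with a minimum at some $\alpha_*<m/(2n)$ is correct (one checks $F$ is convex for $\alpha<m/(2n)$ and concave beyond, with $F'(1/2)=0$ and $F'(0^+)=-\infty$), so the sum over $[\alpha_0,\alpha_*]$ has at most $m/2$ terms each of size $\le\gamma^{-m}$, and on $[\alpha_*,1/2-cn^{-1/2}\log n]$ the maximum of $F$ is at the right endpoint, where the quadratic Taylor expansion gives $nF\asymp -\log^2 n$. Both routes are valid; yours is more elementary and makes the role of the threshold $\alpha_0$ more transparent, while the paper's Laplace approach is more systematic and generalizes more readily (as they use it again in the linear regime $m=\delta n$).
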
 
	
	The work of \cite{costello} handles the case $m = O(1)$, and shows that the limiting probability in \eqref{eqn:upper_bound} is exactly $1 - \exp(-2 \gamma^m)$. We also note that the series of papers by \cite{bcp,bcmn1,bcmn2} provides an even more complete description of the unidimensional case.
	
	Our results are not limited specifically to Gaussian distributions. A mild extension of our techniques allows us to derive a similar result for a more general family of distributions, assuming that $m = O( n/\log n)$.

    \begin{remark}
    Let $C> 0$ denote a sufficiently small absolute constant, and suppose that $m \leq C n/\log n$. Let $\mathbf{X}$ denote an $m \times n$ random matrix whose entries are i.i.d random variables having a common density $f:\mb{R} \to \mb{R}$ such that 
    $$
    \int f(x)^2 \ud x <\infty, \qquad \int  x^4 f(x)\ud x<\infty, \quad \text{and} \quad  f(x)=f(-x), \forall\, x \in \R.\,
    $$
    Then there exist absolute positive constants $c\le c'$ such that
    \[
    \lim_{n \to \infty} \p\left[ c \sqrt{n} 2^{-n/m} \leq   \mc{D}(X_1, \ldots, X_n) \leq c'\sqrt{n} 2^{-n/m}  \right ] = 1. 
    \]
    \end{remark}
	We omit the proof of the above remark and focus on the Gaussian case for simplicity and because for Gaussian vectors, our analysis covers the whole range $m = o(n)$. 
	
	The proof of the upper bound in Theorem \ref{thm:sub-linear_discrepancy} is a nonconstructive application of the second moment method, in a similar spirit to the analysis of \cite{kklo} on the one-dimensional case as well as Achlioptas--Moore's analysis of the threshold for random $k$-SAT \citep{achlioptasmoore}. Recall that the second moment method states that for a nonnegative random variable $S$, we have
	\begin{equation}
	\label{eqn:second_moment}
	\p[S > 0] \geq \frac{\mb{E}[S]^2}{\mb{E}[S^2]}. 
	\end{equation} 
	As described in more detail in Section \ref{sec:sub-linear_discrepancy}, our strategy is to let $S$ count the number of signings with discrepancy at most $\gamma 2^{-n/m} \sqrt{\pi n/2} $ and show that the right-hand-side of \eqref{eqn:second_moment} tends to $1$ asymptotically. We also note that the lower bound in Theorem \ref{thm:sub-linear_discrepancy} is a straightforward consequence of the Markov inequality (first moment method) applied to $S$ (see Proposition \ref{prop:sub-linear_lbd}). 
	
	In addition to our result for $m = o(n)$, using similar techniques we also provide a precise characterization of Gaussian discrepancy in the linear regime $m \leq \delta n$, where $\delta$ is a sufficiently small absolute constant. In Appendix \ref{appendix:small_linear}, we show that the discrepancy is $\Theta(\sqrt{n}2^{-1/\delta})$ with probability at least $99\%$, asymptotically as $n \to \infty$. This provides further evidence of a conjecture of \cite{AubPerZde19} that the discrepancy when $m = \delta n$ is asymptotically $c(\delta) \sqrt{n}$ with high probability for an explicit function $c(\delta)$.\footnote{See Appendix \ref{appendix:small_linear} for a more precise description of their results.} In particular, our result combined with those of \cite{ChaVem14} confirms that the discrepancy is $\Theta(c(\delta) \sqrt{n})$ with asymptotic probability at least $99 \%$  when $m = \delta n$ for all $\delta > 0$. 
	
	Complementary to our work, we discuss recent existential results on average-case discrepancy in the discrete case when $X_1, \ldots, X_n$ are i.i.d vectors in $\{0, 1 \}^m$. Extending prior work of \cite{ezralovett}, \cite{frankssaks} and \cite{hobergrothvoss} use a nonconstructive Fourier-analytic argument to show, for two different models of random sparse binary vectors, that the discrepancy is $O(1)$ if $n= \tilde{\Omega}(m^3)$ \citep{frankssaks} and $n = \tilde{\Omega}(m^2)$ \citep{hobergrothvoss}. In addition, for the continuous case, \cite{frankssaks} show that the discrepancy of random unit vectors is $O( \exp(-\sqrt{n/m^3}))$. \cite{Pot18} uses the second moment method to show the discrepancy is $O(1)$ if $n = \Omega( m \log m )$ in the specific case where the entries of $X_1$ are uniform on $\{0, 1\}$. In other recent work, \cite{BanMek19} establish an average-case version of the Beck--Fiala conjecture, giving an algorithmic proof that the discrepancy of uniformly random $t$-sparse binary vectors is at most $O(\sqrt{t})$ for the entire range of parameters $m,n$ if $t = \Omega(\log \log m)$. It is an open question as to whether there exists a polynomial-time algorithm achieving $O(1)$ discrepancy for random $\{-1, +1\}$ vectors or sparse $\{0, 1\}$ vectors with $n = \mathsf{poly}(m)$ \citep{hobergrothvoss, frankssaks}.

\subsection{Algorithmic result}

	Our second main result is algorithmic and applies to a large family of continuous distributions. We construct a randomized polynomial-time algorithm called Generalized Karmarkar--Karp (\textbf{GKK}) that achieves discrepancy $\exp(-\Omega(\log^2(n)/m))$ with high probability, assuming $m = O(\sqrt{ \log{n}})$. This establishes the first such efficient algorithm achieving quasi-polynomially-small discrepancy for this regime. Our algorithm and analysis extend those of \cite{KarKar82} in the one-dimensional case to higher dimensions.\footnote{\cite{KarKar82} give two algorithms for number partitioning. The first one is a simple greedy heuristic, but its analysis was only performed for the uniform distribution over a decade later by \cite{Yak96}. Our algorithm presented here generalizes the second one which was rigorously analyzed in the original paper of \cite{KarKar82}.}
	
	\begin{theorem}
		\label{thm:gkk}
			Let $\mf{X}$ denote a random $m \times n$ matrix with iid entries having a common density $\rho: [-\Delta,\Delta] \to \mb{R}$ which is $L$-Lipschitz and bounded above by some constant $D > 0$. Suppose that 
			\[
			m \leq C\sqrt{\frac{\log n}{\max(1, \log \Delta)}}\,,
			\]for some sufficiently small absolute constant $C = C(D, L) > 0$. Then the algorithm \textbf{\emph{GKK}} outputs, in polynomial time, a signing $\sigma \in \{-1,+1\}^n$  such that
			\[ \abs{\mf{X}\sigma}_\infty \leq \exp \left( - \frac{c \log^2{n}}{m} \right)\,, \]
		with probability at least $1 - \exp(- c n^{1/4})$ for some absolute constant $c > 0$.
	\end{theorem}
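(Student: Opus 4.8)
The plan is to lift the rigorously analyzed second algorithm of Karmarkar and Karp from $m=1$ to general $m$. The new difficulty is that forcing two vectors to be close must now happen in all $m$ coordinates at once, which both costs an $m$-th root in the per-round progress and forces us to track how the law of the running ``residue'' vectors deforms. The algorithm \textbf{GKK} runs in $L=\Theta(\log n)$ rounds, maintaining at round $j$ a multiset $\mathcal{R}_j$ of residues, each a signed partial sum $\pm X_{i_1}\pm\cdots\pm X_{i_k}$ over a disjoint block of original indices, all lying in a cube $[-s_j,s_j]^m$; initially $\mathcal{R}_0=\{X_1,\dots,X_n\}$ and $s_0=\Delta$. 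In round $j$ we partition $[-s_j,s_j]^m$ into $r_j^m$ congruent sub-cubes, pair up the residues inside each sub-cube, and replace each pair $(R,R')$ by $R-R'$, which lies in a cube of half-side $s_{j+1}:=2s_j/r_j$; these differences form $\mathcal{R}_{j+1}$, and residues left unpaired in a cell are recursively re-inserted (exactly as in Karmarkar--Karp) so that none survives at a coarse scale. After $L$ rounds a pool of $\Theta(n^{1/4})$ residues of half-side $s_L$ remains, on which we perform a uniformly random signing and back-substitute through the difference tree to obtain $\sigma\in\{\pm1\}^n$.

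\textbf{Deterministic bookkeeping.} If round $j$ begins with $N_j$ residues and at most one goes unpaired per cell, then $|\mathcal{R}_{j+1}|\ge (N_j-r_j^m)/2$, so choosing $r_j^m\lesssim N_j$ leaves $N_{j+1}\ge \tfrac12 N_j(1-o(1))$, hence $N_j\ge n\,2^{-O(j)}$; this allows $L=\Theta(\log n)$ rounds while still keeping $N_L\ge n^{1/4}$. Subject to $r_j^m\lesssim N_j$ the largest affordable contraction per round is $r_j\approx N_j^{1/m}=n^{\Theta(1/m)}$, so that
\[
s_L=\Delta\prod_{j<L}\tfrac{2}{r_j}=\Delta\exp\!\Big(-\Theta\big(\tfrac{\log^2 n}{m}\big)\Big).
\]
The final random signing on $N_L\le\mathrm{poly}(n)$ residues of half-side $s_L$ adds at most $O(\sqrt{\log n}\,\sqrt{N_L}\,s_L)$ to $|\mathbf{X}\sigma|_\infty$ with high probability, which is absorbed since $\sqrt{N_L}=e^{O(\log n)}=e^{o(\log^2 n/m)}$; and the hypothesis on $m$ forces $\log\Delta=O(\log n/m^2)=o(\log^2 n/m)$, so the prefactor $\Delta$ is harmless. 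Thus $|\mathbf{X}\sigma|_\infty\le\exp(-c\log^2 n/m)$ \emph{provided each round really forms $\Omega(N_j)$ pairs}; everything else here is deterministic.

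\textbf{The probabilistic part.} The count $|\mathcal{R}_{j+1}|\ge(N_j-r_j^m)/2$ is useful only if the residues at round $j$ are genuinely spread across the grid. We argue this inductively: by a deferred-decisions argument, conditionally on the block/sign structure the residues entering round $j$ behave like an i.i.d.\ sample from a density $g_j$ on $[-s_j,s_j]^m$ that is bounded above and Lipschitz, with constants obeying recursions obtained from the law of a difference of two i.i.d.\ draws conditioned to a common sub-cube — a self-convolution, which preserves boundedness and the Lipschitz property (the Lipschitz hypothesis on $\rho$ is exactly what makes the initialization and the recursion valid, since it keeps the conditional densities close to uniform on cell-sized cubes). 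A bounded Lipschitz density integrating to $1$ over a cube of side $\asymp s_j$ is automatically bounded below by a constant multiple of the uniform value on a sub-cube of comparable size, so a constant fraction of the $r_j^m$ cells have expected occupancy $\Omega(1)$, which we tune to a large constant. A bounded-differences/Chernoff estimate then gives that round $j$ forms $\Omega(N_j)$ pairs with probability $1-e^{-\Omega(N_j)}$; union-bounding over the $O(\log n)$ rounds and using $N_j\ge N_L\ge n^{1/4}$ bounds the total failure probability by $e^{-\Omega(n^{1/4})}$. (Running the recursion to only $\mathrm{polylog}(n)$ residues would shrink $s_L$ further but give a merely polynomially small failure probability; since the exponent $\log^2 n/m$ carries an unspecified constant, stopping at $n^{1/4}$ loses nothing in the discrepancy while giving the stronger probability bound.)

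\textbf{Main obstacle.} The heart of the matter — and where the hypothesis $m=O\big(\sqrt{\log n/\max(1,\log\Delta)}\big)$ comes from — is controlling the cumulative deformation of the residue laws. Conditioning on an $m$-dimensional cell and self-convolving inflates the relevant regularity constants (and the mixture-vs-i.i.d.\ error) by a dimension-dependent factor each round, through the $m$-fold product structure, and the initial density spread over $[-\Delta,\Delta]^m$ contributes an irregularity polynomial in $\Delta$ per coordinate; compounding these over $L=\Theta(\log n)$ rounds, the accumulated irregularity must stay dominated by the contraction already banked, $e^{\Theta(\log^2 n/m)}$, and balancing the two yields $m^2\max(1,\log\Delta)=O(\log n)$. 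Making these self-convolution recursions for the sup-norm and Lipschitz constant of $g_j$ quantitative with explicit constants, and checking that the dimension-dependent loss per round is genuine rather than an artifact of crude estimates, is the main technical work; the rest follows the Karmarkar--Karp template.
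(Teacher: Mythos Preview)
Your high-level architecture---rounds of partition/pair/shrink with $r_j\approx N_j^{\Theta(1/m)}$, yielding $s_L=\exp(-\Theta(\log^2 n/m))$---matches the paper. But two load-bearing ideas from the paper are absent, and the substitutes you offer do not work as stated.

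\textbf{The resampling trick.} You claim that ``by a deferred-decisions argument, conditionally on the block/sign structure the residues entering round $j$ behave like an i.i.d.\ sample from a density $g_j$,'' and then propose to track how $g_j$ evolves under self-convolution. This is the gap. Without further intervention, the residues are \emph{not} i.i.d.\ from a single density: a difference $R-R'$ formed in cell $C_a$ has the law of a self-convolution of $g_{j}|_{C_a}$, which depends on $a$; the collection $\mathcal{R}_{j+1}$ is therefore a mixture over cells, and moreover the residues are not independent of the cell-occupancy pattern. There is no single $g_{j+1}$ to which your recursion applies. The paper fixes this with an explicit \emph{resampling} step (step~2 of \textbf{PRDC}): each point $x\in C_a$ is kept with probability $\min_{y\in C_a}g_j(y)/g_j(x)$, so the retained points are \emph{exactly} uniform on their cell and their pairwise differences are exactly $\mathrm{Tri}[-\alpha_{t+1},\alpha_{t+1}]^m$, i.i.d., at every round. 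This eliminates the density-tracking problem you identify as the ``main obstacle'' entirely; the constraint $m=O(\sqrt{\log n/\max(1,\log\Delta)})$ then comes not from compounded deformation but from a single-round estimate: the rejection probability in resampling is at most $(2\Delta)^m L D^{m-1} m\alpha'$ (Lemma~\ref{lem:lost_bad_points}), which is a small constant precisely under that hypothesis.

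\textbf{Handling leftovers in dimension $m>1$.} You write that unpaired residues are ``recursively re-inserted (exactly as in Karmarkar--Karp).'' The paper explicitly notes that the one-dimensional greedy re-insertion of Karmarkar--Karp does not generalize to $m\ge 2$. Instead, each phase has a \emph{clean-up} step: the rejected points and cell-leftovers are first collapsed to a single vector $v^{(0)}$ via a Beck--Fiala-type routine \textbf{REDUCE} (giving $|v^{(0)}|_\infty\le O(m)\alpha_t$), and then $v^{(0)}$ is driven into the smaller cube by greedily adding $\pm\mathbf{u}_k$ from the pool of triangular differences, minimizing $\ell_2$ norm at each step. The analysis of this step (Lemma~\ref{lem:ell_2_argument}) uses Khintchine to lower-bound $\mathbb{E}|\langle \nu,\mathbf{u}\rangle|$ and Azuma on the resulting submartingale; it is where the $e^{-\Omega(n^{1/4})}$ probability actually comes from, and it has no analogue in your outline.

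In short, the skeleton is right, but the two mechanisms that make it rigorous---rejection-sampling to force an exact i.i.d.\ triangular law (so there is no density recursion to control), and the \textbf{REDUCE}$+\ell_2$-greedy clean-up for the multidimensional leftovers---are missing, and the direct density-tracking you propose in their place does not yield the i.i.d.\ structure the rest of your argument assumes.
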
 
	
This result easily extends to distributions with unbounded support. For example, if $\mf{X}$ has i.i.d standard Gaussian entries, then setting $\Delta = O(\sqrt{\log{n}})$ and conditioning on the (high probability) event $\{ \abs{\mf{X}_{ij}} \leq \Delta \, \, \forall \, i, j\}$, we can apply Theorem \ref{thm:gkk} to show that \textbf{GKK} yields discrepancy $\exp( -c \log^2(n)/m)$ for the Gaussian matrix $\mf{X}$. 
	
	It is an open question as to whether or not the guarantee of Theorem \ref{thm:gkk} can be improved to achieve sub-quasi-polynomial discrepancy efficiently, even in dimension one. Note that for $m = 1$, \cite{HobRamRotYan17} provide evidence of hardness of a $O(2^{\sqrt{n}})$-approximation to the optimal discrepancy in worst case via a reduction from the Minkowski problem and the shortest vector problem. We leave the following question. 
	
	\begin{question}
		Suppose that $m = n^{\gamma}$ for some $\gamma \in (0,1)$. Let $\mf{X}$ denote a random $m \times n$ matrix with independent standard Gaussian entries. What is the smallest possible value of $|\mf{X} \sigma|_\infty$ that can be achieved algorithmically in polynomial time? 
	\end{question} In particular, it is an open problem as to whether the partial coloring method can be used to guarantee subconstant discrepancy for standard Gaussians when $m = n^\gamma$. We suspect that the answer is negative. It seems that even attaining discrepancy $o(\sqrt{m})$ serves as a natural bottleneck for such an approach. 
	

\section{Gaussian discrepancy in sub-linear dimension} 	
\label{sec:sub-linear_discrepancy}

The main goal of this section is to prove the following proposition. Throughout, we adopt the shorthand notation $u_n \lesssim_n v_n$ for $u_n \le v_n(1+o(1))$ and $u_n \simeq_n v_n$ for $u_n = v_n(1+o(1))$. 
	
	\begin{proposition}
		\label{prop:sub-linear}
		Fix $\gamma>1$, $\omega(1) = m = o(n),$ and let $X_1, \ldots, X_{n} \stackrel{iid}{\sim} \cN(0,I_m)$ be independent standard Gaussian random vectors. Then
		\[	\lim_{n \to \infty}\p\Big[\mc{D}(X_1, \ldots, X_n) \leq\gamma\sqrt{\frac{\pi n}{2}}2^{-n/m}\Big]=1 \,.\] 
		
	\end{proposition}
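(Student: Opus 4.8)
The plan is to run the second moment method on the number of signings that already achieve the target discrepancy. Fix $\gamma>1$, set $t_n:=\gamma\sqrt{\pi n/2}\,2^{-n/m}$, and let
\[
S:=\#\bigl\{\sigma\in\{\pm1\}^n:\ |\bX\sigma|_\infty\le t_n\bigr\},
\]
so that $\{\mc{D}_n\le t_n\}=\{S>0\}$ and, by \eqref{eqn:second_moment}, it suffices to prove $\mb{E}[S^2]/\mb{E}[S]^2\to1$ (the reverse inequality $\mb{E}[S^2]\ge\mb{E}[S]^2$ being automatic).

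First I would compute the moments. For fixed $\sigma$ the rows of $\bX$ are independent and each coordinate of $\bX\sigma$ is a $\pm1$-weighted sum of $n$ i.i.d.\ $\cN(0,1)$'s, so $\bX\sigma\sim\cN(0,nI_m)$ and $\mb{E}[S]=2^n p^m$ with $p:=\p\bigl[\,|\cN(0,n)|\le t_n\,\bigr]=\erf\!\bigl(t_n/\sqrt{2n}\bigr)$. As $n/m\to\infty$, the argument $t_n/\sqrt{2n}=\tfrac{\gamma\sqrt\pi}{2}2^{-n/m}$ tends to $0$, so Taylor expansion gives $p=\gamma\,2^{-n/m}\bigl(1-\Theta(4^{-n/m})\bigr)$ and hence $\mb{E}[S]=\gamma^m\bigl(1-\Theta(4^{-n/m})\bigr)^m\to\infty$. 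For the second moment I would group pairs $(\sigma,\tau)$ by their correlation $\rho=\langle\sigma,\tau\rangle/n$: the rows of $(\bX\sigma,\bX\tau)$ are then i.i.d.\ centered bivariate Gaussians of variance $n$ and correlation $\rho$, so letting $q(\rho)$ denote the probability that such a Gaussian lands in $[-t_n,t_n]^2$, and using $q(0)=p^2$ together with the fact that there are $2^n\binom{n}{(1+\rho)n/2}$ pairs at correlation $\rho$,
\[
\frac{\mb{E}[S^2]}{\mb{E}[S]^2}\;=\;2^{-n}\sum_{k=0}^{n}\binom nk\Bigl(\frac{q(\rho_k)}{q(0)}\Bigr)^{m},\qquad\rho_k:=\frac{2k-n}{n}.
\]

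The core of the argument is controlling $q(\rho)$, for which two estimates suffice. Since the square has side $t_n\ll\sqrt n$, the bivariate density is almost constant over it and the leading $O(\rho)$ term is an odd cross term that integrates to $0$, giving $q(\rho)=q(0)(1-\rho^2)^{-1/2}\bigl(1+O(\rho^2 4^{-n/m})\bigr)$ uniformly for $|\rho|\le\tfrac12$; separately, conditioning the second coordinate on the first and using $\erf(x)\le2x/\sqrt\pi$ yields the uniform bound $q(\rho)\le p\,\gamma 2^{-n/m}/\sqrt{1-\rho^2}$, along with the trivial $q(\rho)\le p$. With these I would split the sum at $|\rho_k|=\tfrac12$. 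On $|\rho_k|\le\tfrac12$ the first estimate gives $(q(\rho_k)/q(0))^m\le e^{Cm\rho_k^2}$ for an absolute constant $C$; since under the weight $2^{-n}\binom nk$ the variable $\rho_k$ has exactly the law of $\tfrac1n\sum_{i\le n}\xi_i$ with $\xi_i$ i.i.d.\ Rademacher, a standard Gaussian-smoothing ($\cosh$) computation gives $2^{-n}\sum_{|\rho_k|\le1/2}\binom nk e^{Cm\rho_k^2}\le\mb{E}[e^{Cm\rho^2}]\le(1-2Cm/n)^{-1/2}\to1$, where $m=o(n)$ is used; combined with the lower bound $q(\rho)\ge q(0)$ (Gaussian slabs are positively associated), this central part converges to $1$. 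On $|\rho_k|>\tfrac12$ I would feed the second and trivial bounds on $q$ into the binomial-tail estimate $\binom nk2^{-n}\le\min\bigl(e^{-n\rho_k^2/2},\,(en/j)^j2^{-n}\bigr)$, with $j:=\min(k,n-k)$ — the first factor being effective when $\rho_k$ stays a fixed distance from $\pm1$, the second when $\rho_k$ lies within $o(1)$ of $\pm1$ — and verify that each such term is exponentially small, so the $O(n)$-term tail sum is $o(1)$.

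The step I expect to be the main obstacle is exactly this last range: pairs with $\rho_k$ within $o(1)$ of $\pm1$, i.e.\ $\tau$ obtained from $\pm\sigma$ by flipping only $o(n)$ coordinates, for which $q(\rho_k)$ is far larger than $q(0)$. Using any single one of the bounds on $q(\rho_k)$ or on $\binom nk$ uniformly over this whole range makes the second moment diverge once $m\gg n/\log n$, so the argument must carefully interpolate between them; it is precisely the hypothesis $m=o(n)$ — forcing both $4^{-n/m}\to0$ and $m/n\to0$ — together with $\gamma>1$, that makes the competing exponential factors cancel. The rest is routine, if somewhat lengthy, bookkeeping.
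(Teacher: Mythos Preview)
Your proposal is correct and follows the same overall architecture as the paper (second moment method on the count $S$, moments via the correlation $\rho_k$, split at $|\rho_k|=1/2$), but the treatment of the central window is genuinely different. The paper handles $|\rho_k|\le 1/2$ by writing the sum as $\sum_k e^{\phi_n(k/n)}$ via Stirling, proving that $\phi_n$ is strictly concave with maximum at $1/2$ (their Lemma~\ref{lem:logconcave}), and then applying the Laplace method to the resulting integral. You instead exploit the precise local expansion $q(\rho)/q(0)=(1-\rho^2)^{-1/2}\bigl(1+O(\rho^2 4^{-n/m})\bigr)$ to get $(q(\rho_k)/q(0))^m\le e^{Cm\rho_k^2}$ and then compute $2^{-n}\sum_k\binom nk e^{Cm\rho_k^2}=\mb{E}\bigl[e^{Cm\rho^2}\bigr]\le(1-2Cm/n)^{-1/2}\to 1$ directly via a Gaussian linearisation/$\cosh$ trick. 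Your route is shorter and more elementary; the paper's Laplace approach is heavier but delivers asymptotics with the exact leading constant, which they reuse in Appendix~\ref{appendix:small_linear} for the linear regime $m=\delta n$.

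For the tails $|\rho_k|>1/2$ the two arguments coincide in substance: the paper's Lemma~\ref{lem:truncation} splits at $k=n/g_n^2$ with $g_n=n/m$, using the trivial bound $q\le p$ together with $\sum_{k\le j}\binom nk\le (en/j)^j$ for $k$ very small, and the density bound $q(\rho)\le p^2(1-\rho^2)^{-1/2}$ together with Hoeffding for $k$ in the intermediate range. This is exactly the interpolation you describe, and your diagnosis that the near-degenerate pairs ($\rho_k$ within $o(1)$ of $\pm1$) are where $\gamma>1$ is essential --- via $(1/p)^m\lesssim\bigl(\tfrac{2}{\gamma+1}\bigr)^m 2^{n}$ --- matches the paper's computation \eqref{eqn:A_bound} precisely.
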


	We first outline our proof strategy based on the second moment method. Set $\eps = \eps(n) = \gamma 2^{-n/m}\sqrt{\pi n / 2}$  and define $S$, the number of low discrepancy solutions, to be
	\begin{equation}
	\label{eqn:defSvec}
	S = \sum_{\sigma \in \{\pm 1 \}^n} \1\big( \big|\sum_{i = 1}^{n} \sigma_i X_i\big|_\infty \leq \eps \big).
	\end{equation}
	Our goal is to show that $\E[S^2]/\E[S]^2=1+o(1)$. By the second moment method \eqref{eqn:second_moment}, this implies the desired result. 
	
	
	The next lemma gives a useful form for the first and second moments of $S$ and follows from a straightforward calculation. Its proof is postponed to Appendix \ref{appendix:sub-linear_discrepancy}.
	
	
	\begin{lemma}
		\label{lem:moments}
		The random variable $S$  defined as in \eqref{eqn:defSvec} has its first two moments given by
		\begin{equation}
		\label{eqn:1mm}
		\E[S] = 2^n \p\big( |Z| \leq \frac{\eps}{\sqrt{n}} \big)^m 
		\end{equation}
		where $Z \sim \cN(0,1)$, and
		\begin{equation}
		\label{eqn:2mm}
		\E[S^2] = 2^n  \sum_{k = 0}^n  \binom{n}{k}  \prok\left( \abs{\sqrt{n}X} \leq \eps \, , \, \abs{\sqrt{n}Y} \leq \eps  \right)^m\,.
		\end{equation}
	    Here $\rho_k = 1 - 2k/n$ and $\prok$ denotes the joint distribution of $(X,Y)$ with $X,Y\sim \cN(0,1)$ having correlation $\rho_k$. 
	\end{lemma}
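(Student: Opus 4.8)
The plan is to compute both moments directly by expanding the definition \eqref{eqn:defSvec} of $S$ and exploiting the fact that a $\pm1$-combination of independent $\cN(0,I_m)$ vectors is again Gaussian, together with the independence of the $m$ coordinates.

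For the first moment, I would apply linearity of expectation to get $\E[S] = \sum_{\sigma \in \{\pm1\}^n} \p\big(|\sum_{i=1}^n \sigma_i X_i|_\infty \le \eps\big)$. For a fixed sign pattern $\sigma$, the vector $\bX\sigma = \sum_i \sigma_i X_i$ is a sum of independent $\cN(0,I_m)$ vectors rescaled by $\pm1$, hence $\bX\sigma \sim \cN(0, nI_m)$ because $\sigma_i^2 = 1$; in particular its $m$ coordinates are i.i.d.\ $\cN(0,n)$. Therefore $\p\big(|\bX\sigma|_\infty \le \eps\big) = \p(|Z| \le \eps/\sqrt n)^m$ with $Z \sim \cN(0,1)$, and this does not depend on $\sigma$. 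Summing over the $2^n$ sign patterns yields \eqref{eqn:1mm}.

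For the second moment, I would expand $S^2 = \sum_{\sigma,\tau} \1(|\bX\sigma|_\infty \le \eps)\1(|\bX\tau|_\infty \le \eps)$ and take expectations, so $\E[S^2] = \sum_{\sigma,\tau \in \{\pm1\}^n} \p\big(|\bX\sigma|_\infty \le \eps,\ |\bX\tau|_\infty \le \eps\big)$. The key point is that for fixed $\sigma,\tau$ the pair $(\bX\sigma, \bX\tau) \in \R^m \times \R^m$ is jointly Gaussian, and since the $m$ rows of $\bX$ are independent, the $m$ pairs $\big((\bX\sigma)_j, (\bX\tau)_j\big)$, $j=1,\dots,m$, are i.i.d.\ bivariate centered Gaussians. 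Each such pair has common variance $\sum_i \sigma_i^2 = \sum_i \tau_i^2 = n$ and covariance $\sum_i \sigma_i \tau_i = n - 2k$, where $k := |\{i : \sigma_i \ne \tau_i\}|$ is the Hamming distance between $\sigma$ and $\tau$; hence correlation $\rho_k = 1 - 2k/n$. Writing this pair as $(\sqrt n X, \sqrt n Y)$ with $(X,Y) \sim \prok$, the joint event factorizes over coordinates and contributes $\prok\big(|\sqrt n X| \le \eps,\ |\sqrt n Y| \le \eps\big)^m$, depending on $(\sigma,\tau)$ only through $k$. Finally I would observe that for each of the $2^n$ choices of $\sigma$ there are exactly $\binom{n}{k}$ sign patterns $\tau$ at Hamming distance $k$, so grouping the double sum by $k$ gives \eqref{eqn:2mm}.

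There is no genuine obstacle here — the computation is routine — but the one place to be careful is the bookkeeping of the covariance structure: one must use both that $\sigma_i^2 = 1$ (to get variance exactly $n$, hence the clean $\cN(0,nI_m)$ marginal) and that distinct rows of $\bX$ are independent (to reduce the $2m$-dimensional Gaussian probability to the $m$-th power of a two-dimensional one). The passage between "$|\sqrt n X| \le \eps$ with $(X,Y) \sim \prok$" and the raw variance-$n$ bivariate Gaussian should be written out explicitly so that the correlation $\rho_k$ is correctly identified.
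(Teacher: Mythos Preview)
Your proposal is correct and follows essentially the same approach as the paper: both compute $\E[S]$ by noting that $\bX\sigma \sim \cN(0,nI_m)$ and factorizing over the $m$ independent coordinates, and both compute $\E[S^2]$ by identifying the covariance $\sum_i \sigma_i\tau_i = n - 2k$ of the bivariate Gaussian pair and then grouping the double sum over $(\sigma,\tau)$ by Hamming distance. There is nothing to add.
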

	
\begin{figure}
	    \centering
			\includegraphics[scale = 0.5]{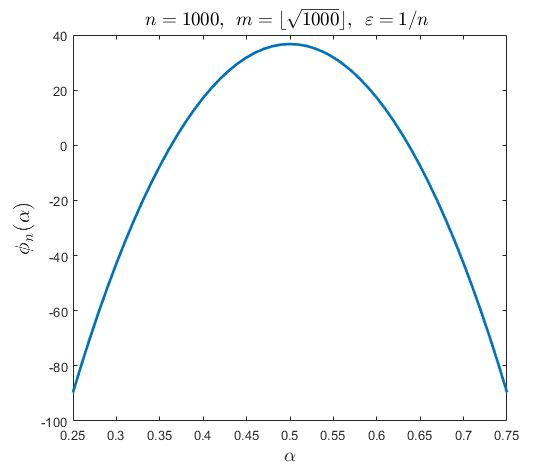}
			\caption{$\alpha \mapsto \phi_n(\alpha)$ for $n = 1000, m = \flo{\sqrt{1000}}$, and $\eps = 1/n.$}
			\label{FIG:log-concave1}
\end{figure}
	Given this representation, we proceed in two steps to prove an upper bound on the second moment $\E[S^2]$:
	%
	
	\begin{enumerate}
		\item[(i)] We first apply a truncation argument to show that the contribution from the $k \le n/4$ and $k \ge 3n/4$ terms in the summand of \eqref{eqn:2mm} is negligible. See Lemma \ref{lem:truncation} and its proof in Appendix \ref{appendix:sub-linear_discrepancy} for details. 
		\item[(ii)] Then we show that the dominant contribution in the summation \eqref{eqn:2mm} is asymptotically bounded by $\mathbb{E}[S]^2$ and comes from an interval of length $\Theta(\sqrt{n})$ around $k \simeq n/2$. This part is somewhat delicate and we apply the \emph{Laplace method} to obtain sharp bounds.
	\end{enumerate}

	
	
	
			
	By step (i), it suffices to control the leading term
	\begin{equation}
	\label{eqn:main_term}
	L:=2^n \sum_{k = n/4}^{3n/4} \binom{n}{k} \prok\left( \abs{\sqrt{n}X} \leq \eps \, , \, \, \abs{\sqrt{n}Y} \leq \eps  \right)^m.
	\end{equation}
	To that end, approximate the above binomial coefficient using~Lemma~C.2 in~\cite{BerRigSri18}: For any $l \in (0,1/2]$, $\alpha \in (l, 1-l)$ such that $n \alpha$ is an integer, it holds
	\[
	\exp\big(-\frac{1}{12 l^2 n}\big) \le  \sqrt{2\pi n \alpha(1-\alpha)}\exp(-n h(\alpha) ) \binom{n}{\alpha n} \le \exp\big(\frac{1}{12 n}\big) \,,
	\]
	where $h(\alpha)=-\alpha\log \alpha  -(1-\alpha)\log(1-\alpha)$ denotes the binary entropy with $h(0)=h(1)=0$. Therefore, it holds that
	\begin{equation}
	\label{eqn:Stirling}
	L \lesssim_n \frac{2^n}{\sqrt{2 \pi n}} \sum_{k = n/4}^{3n/4} \exp(\phi_n(\alpha_k))
	\end{equation}
	where $\alpha_k = k/n$ and 
	\begin{equation}
	\label{eqn:defphi_n}
	\phi_n(\alpha) = n h(\alpha) + m \log ( \p_{1 - 2\alpha}\left[ \abs{\sqrt{n}X} \leq \eps \, , \, \, \abs{\sqrt{n}Y} \leq \eps  \right]) - \frac{1}{2} \log \alpha(1-\alpha).
	\end{equation}

	Moreover, as justified in Lemma \ref{lem:logconcave} (see Appendix \ref{appendix:sub-linear_discrepancy}), for $n$ sufficiently large, $\phi_n(\alpha)$ is a strictly concave function on $[0.25, 0.75]$ with a unique maximum at $\alpha = 0.5$. See Figure \ref{FIG:log-concave1} for the graph of $\phi_n(\alpha)$ for a specific setting of the parameters. Thus we can make the Riemann sum approximation
		\begin{equation}
		\label{eqn:Riemann_sub-linear}
		L \lesssim_n  \frac{2^n}{\sqrt{2 \pi n}} \sum_{k = n/4}^{3n/4} \exp(\phi_n(\alpha_k)) \lesssim_n \frac{\sqrt{n} 2^n}{\sqrt{2 \pi }}   \int_{1/4}^{3/4} \exp(\phi_n(\alpha)) d\alpha.
		\end{equation} 
	
	Our goal now is to employ the Laplace method \citep[see, e.g.,][]{murray}, a well-known technique from asymptotic analysis, to compute explicitly the asymptotic growth of the right-hand-side above.	It consists in performing a second-order Taylor expansion of $\phi_n$ in order to reduce the problem to the computation of a Gaussian integral.
	
\begin{lemma} 
	\label{lem:Laplace}
	Suppose that $m=o(n)$ and set $\eps=\gamma2^{-n/m}\sqrt{n\pi/2}$. Recall the definition of $S$ from \eqref{eqn:defSvec}. Then
	\begin{equation}
	L \lesssim_n \mathbb{E}[S]^2. 
	\end{equation}
\end{lemma}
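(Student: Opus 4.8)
The plan is to estimate the integral in \eqref{eqn:Riemann_sub-linear} by Laplace's method. Write $\delta:=\eps/\sqrt n=\gamma 2^{-n/m}\sqrt{\pi/2}$, which tends to $0$ since $m=o(n)$. First I would identify $\exp(\phi_n(1/2))$ with a multiple of $\E[S]^2/2^n$. By symmetry $\phi_n(\alpha)=\phi_n(1-\alpha)$ --- each summand in \eqref{eqn:defphi_n} is invariant under $\alpha\mapsto1-\alpha$, using $\p_\rho=\p_{-\rho}$ for the two-sided event --- so, consistently with Lemma \ref{lem:logconcave}, the maximum is at $\alpha=1/2$, where $\rho=1-2\alpha=0$, $(X,Y)$ are independent, $h(1/2)=\log2$, and $-\tfrac12\log(\tfrac12\cdot\tfrac12)=\log2$. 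Hence $\p_{0}[\,|\sqrt nX|\le\eps,\,|\sqrt nY|\le\eps\,]=\p(|Z|\le\delta)^2$, and comparing with \eqref{eqn:1mm} gives
\[ \exp(\phi_n(1/2))=2\cdot 2^n\,\p(|Z|\le\delta)^{2m}=\frac{2\,\E[S]^2}{2^n}. \]
So it suffices to prove $\int_{1/4}^{3/4}\exp(\phi_n(\alpha))\,d\alpha\lesssim_n\exp(\phi_n(1/2))\sqrt{\pi/(2n)}$, i.e.\ that Laplace's method supplies the Gaussian factor $\sqrt{2\pi/|\phi_n''(1/2)|}$ with $|\phi_n''(1/2)|\simeq_n4n$; substituting into \eqref{eqn:Riemann_sub-linear} and using $\tfrac{\sqrt n}{\sqrt{2\pi}}\cdot2\sqrt{\tfrac{\pi}{2n}}=1$ then gives $L\lesssim_n\E[S]^2$.

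\textbf{Curvature of $\phi_n$.} Next I would establish a sharp small-ball expansion: for $|\rho|\le\tfrac12$ the exponent of the correlation-$\rho$ bivariate normal density is $O(\delta^2)$ uniformly on $[-\delta,\delta]^2$, so
\[ \p_\rho[\,|X|\le\delta,\ |Y|\le\delta\,]=\frac{2\delta^2}{\pi\sqrt{1-\rho^2}}\,\bigl(1+O(\delta^2)\bigr), \]
the error being analytic in $\rho$ on $[-\tfrac12,\tfrac12]$ together with its first few $\rho$-derivatives. Since $\rho=1-2\alpha$ yields $1-\rho^2=4\alpha(1-\alpha)$, plugging this into \eqref{eqn:defphi_n} gives $\phi_n(\alpha)=nh(\alpha)-\tfrac{m+1}{2}\log(\alpha(1-\alpha))+m\log(1+O(\delta^2))+\mathrm{const}$, hence, uniformly on $[1/4,3/4]$,
\[ \phi_n''(\alpha)=-\frac{n}{\alpha(1-\alpha)}+\frac{m+1}{2}\Bigl(\frac1{\alpha^2}+\frac1{(1-\alpha)^2}\Bigr)+m\,O(\delta^2),\qquad \phi_n'''(\alpha)=O(n). \]
Because $m=o(n)$ and $\delta\to0$, the first term dominates, so $\phi_n''(\alpha)=-\tfrac{n}{\alpha(1-\alpha)}(1+o(1))$ uniformly on $[1/4,3/4]$; in particular $\phi_n''(1/2)=-4n(1+o(1))$ and $\phi_n''\le-3n$ there for all large $n$.

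\textbf{Laplace estimate.} Finally, fix a window $\eta_n=n^{-2/5}$, so that $\sqrt n\,\eta_n\to\infty$ while $n\eta_n^3\to0$. On $|\alpha-\tfrac12|\le\eta_n$, Taylor expansion about the maximiser (where $\phi_n'=0$) and $\phi_n'''=O(n)$ give $\phi_n(\alpha)-\phi_n(\tfrac12)=\tfrac12\phi_n''(\tfrac12)(\alpha-\tfrac12)^2+O(n\eta_n^3)=\tfrac12\phi_n''(\tfrac12)(\alpha-\tfrac12)^2+o(1)$, so this portion of $\int\exp(\phi_n(\alpha)-\phi_n(\tfrac12))\,d\alpha$ is $\le(1+o(1))\int_\R\exp(\tfrac12\phi_n''(\tfrac12)t^2)\,dt=(1+o(1))\sqrt{2\pi/|\phi_n''(1/2)|}$. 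On $|\alpha-\tfrac12|>\eta_n$ (inside $[1/4,3/4]$), concavity and $\phi_n''\le-3n$ give $\phi_n(\alpha)-\phi_n(\tfrac12)\le-\tfrac{3n}{2}(\alpha-\tfrac12)^2$, so this portion is $O(n^{-1}\eta_n^{-1}e^{-3n\eta_n^2/2})=o(n^{-1/2})$, negligible beside the $\Theta(n^{-1/2})$ main term. Multiplying the resulting bound on $\int_{1/4}^{3/4}\exp(\phi_n(\alpha))\,d\alpha$ by $\exp(\phi_n(1/2))=2\E[S]^2/2^n$ and inserting into \eqref{eqn:Riemann_sub-linear} finishes the proof.

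\textbf{Main obstacle.} The delicate point is that every constant must be exact: the conclusion (with constant $1$) requires the Laplace coefficient $\tfrac12\phi_n''(1/2)$ to be identified as $-2n(1+o(1))$, which forces us both to compute the small-ball probability up to a $(1+o(1))$ factor and to exploit the algebraic coincidence $1-\rho^2=4\alpha(1-\alpha)$, whereby the $-\tfrac12\log(1-\rho^2)$ coming from the Gaussian density merges with $nh(\alpha)$ into the effective curvature $-\tfrac{n}{\alpha(1-\alpha)}$; a cruder bound would only yield $L\lesssim C\,\E[S]^2$ for some $C>1$. The remaining care is routine: controlling the $\alpha$-derivatives of the $O(\delta^2)$ remainder so that $m\,O(\delta^2)=o(n)$ in $\phi_n''$ and $\phi_n'''$, and choosing $\eta_n$ so that the cubic Taylor remainder and the Gaussian tail outside the window vanish together.
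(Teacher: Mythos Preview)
Your proof is correct and follows essentially the same Laplace-method approach as the paper: both identify the maximiser $\alpha=1/2$, compute $\phi_n''(1/2)\simeq_n -4n$, show the tails outside a window are negligible, and verify that the Gaussian prefactor combines with $\exp(\phi_n(1/2))=2\,\E[S]^2/2^n$ and the factor $\sqrt{n/(2\pi)}\,2^n$ from \eqref{eqn:Riemann_sub-linear} to give exactly $\E[S]^2$. The only differences are implementational: the paper invokes Lemma~\ref{lem:logconcave} as a black box for the uniform convergence $\phi_n''(\alpha)/n\to h''(\alpha)$ and then uses a \emph{fixed} window $(1/2-\delta,1/2+\delta)$ with an auxiliary parameter $\eta\in(0,1)$ sent to $0$ at the end, whereas you derive the curvature directly from the small-ball expansion $\p_\rho[\,\cdot\,]=\tfrac{2\delta^2}{\pi\sqrt{1-\rho^2}}(1+O(\delta^2))$ and use a \emph{shrinking} window $\eta_n=n^{-2/5}$ with a third-derivative bound $\phi_n'''=O(n)$ to control the Taylor remainder---both are standard variants of Laplace's method and yield the same sharp constant.
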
 
	
	\begin{proof}
		 We apply the Laplace method to the integral in \eqref{eqn:Riemann_sub-linear}. Let $\eta \in (0,1)$ be arbitrary, and define $g_n(\alpha) = \phi_n(\alpha)/n$. Since $h''(\alpha)$ is continuous, Lemma \ref{lem:logconcave} implies that there exists $\delta = \delta(\eta)$ and $N = N(\eta)$ such that
		\begin{equation}
		\label{eqn:phi_bound}
		\frac{1}{n} \abs{ \phi_n''(\alpha) - \phi_n''(1/2) } \leq \eta\, , \quad \forall \alpha \in (1/2 - \delta, 1/2 + \delta), \, n \geq N.  
		\end{equation}
		The above inequality follows by writing $g_n''(\alpha) = h''(\alpha) + r_n(\alpha)$, where $r_n(\alpha)$ is a remainder term that goes to $0$ uniformly in $\alpha \in (0.25, 0.75)$ as $n \to \infty$, using Lemma \ref{lem:logconcave}. Using that the remainder term is small and $h''(\alpha)$ is continuous at $\alpha = 1/2$, we arrive at \eqref{eqn:phi_bound}. 
		
		By \eqref{eqn:phi_bound} and Taylor's theorem,
		\begin{equation}
		\label{eqn:Taylor}
		\phi_n(\alpha) - \phi_n(1/2) \leq \frac{1}{2} (\phi_n''(1/2) + \eta n)(\alpha - 1/2)^2 \, , \quad \forall \alpha \in (1/2 - \delta, 1/2 + \delta), \, n \geq N.
		\end{equation}
		Moreover,
		\begin{equation}
		\label{eqn:negative2derivative}
		\phi_n''(1/2) + \eta n < 0
		\end{equation}
		for $n$ sufficiently large because $\eta \in (0,1)$ and $\phi_n''(1/2) \simeq_n -4n$ by Lemma \ref{lem:logconcave}. Therefore, since $\phi_n$ is increasing on $(0.25, 0.75)$ for $n$ sufficiently large,
		\begin{align}
		\label{eqn:Laplace_truncation}
		\frac{\sqrt{n}}{\exp(\phi_n(1/2))}\int_{1/4}^{1/2 - \delta} \exp( \phi_n(\alpha) ) \, d\alpha 
		& \lesssim_n 10 \sqrt{n} \exp( \phi_n(1/2 - \delta) - \phi_n(1/2) ) \\
		& \lesssim_n 10 \sqrt{n} \exp\left( \frac{1}{2} (\phi_n''(1/2) + \eta n)\delta^2 \right) = o(1) \nonumber,
		\end{align} 
		where we applied \eqref{eqn:Taylor} and \eqref{eqn:negative2derivative}. By symmetry of $\phi_n(\alpha)$ about $\alpha = 1/2$, the integral as in \eqref{eqn:Laplace_truncation} from $1/2 + \delta$ to $3/4$ is negligible. Moreover, by \eqref{eqn:Taylor},
		\begin{align}
		\label{eqn:Laplace_main}
		\int_{1/2 - \delta}^{1/2 + \delta} \exp(\phi_n(\alpha)) \, d\alpha &\lesssim_n 
		\int_{1/2 - \delta}^{1/2 + \delta}  \exp\left(\phi_n(1/2) + \frac{1}{2}(\phi_n''(1/2) + \eta n)(\alpha - 1/2)^2 \right) \, d\alpha \\
		&\lesssim_n \exp(\phi_n(1/2)) \sqrt{ \frac{2 \pi }{\abs{\phi_n''(1/2) + \eta n}} } =  2^n f_n^m \sqrt{\frac{2 \pi}{n(1 - \eta/4)}}, \nonumber
		\end{align}
		where 
		\[
		f_n = \p_{0}( \abs{\sqrt{n} X} \leq \eps, \abs{\sqrt{n} Y} \leq \eps ).
		\]
		Since $\eta \in (0,1)$ was arbitrary, we conclude by \eqref{eqn:1mm}, \eqref{eqn:Stirling}, \eqref{eqn:Riemann_sub-linear}, \eqref{eqn:Laplace_truncation}, \eqref{eqn:Laplace_truncation}, and the definition of $f_n$ that
		\[
		L \lesssim_n \frac{2^n}{\sqrt{2 \pi n}} \cdot n \cdot \int_{1/4}^{3/4} \exp(\phi_n(\alpha)) \, d \alpha \lesssim_n 2^{2n} f_n^m = \mathbb{E}[S]^2. 
		\]
	\end{proof}
	
	
%

	\begin{proof}[Proof of Proposition \ref{prop:sub-linear}]
		We see that $\mathbb{E}[S^2]/\mathbb{E}[S]^2 \lesssim_n 1$ as $n\to \infty$ applying Lemma \ref{lem:moments}, Lemma \ref{lem:truncation}, \eqref{eqn:main_term}, \eqref{eqn:Stirling}, and Lemma \ref{lem:Laplace}. Proposition \ref{prop:sub-linear} follows by the second moment method.
	\end{proof}
	
	We complement Proposition \ref{prop:sub-linear} with a near-matching lower bound.
	
	\begin{proposition}
		\label{prop:sub-linear_lbd}
		Let $\omega(1) = m = o(n)$, fix $\gamma < 1$, and let $X_1, \ldots, X_n \stackrel{iid}{\sim} \mc{N}(0, I_m)$ be independent standard Gaussian random vectors. Then
		\[
		\lim_{n \to \infty} \p\left[ \mc{D}(X_1, \ldots, X_n) \leq \gamma \sqrt{\frac{\pi n}{2}} 2^{-n/m} \right] = 0.
		\]
	\end{proposition}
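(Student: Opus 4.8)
The plan is to apply the first moment method (Markov's inequality) to the very same counting variable $S$ from \eqref{eqn:defSvec} that was used for the upper bound, but now with the smaller threshold $\eps = \gamma 2^{-n/m}\sqrt{\pi n/2}$, where $\gamma < 1$. Since $\{\mc{D}(X_1,\ldots,X_n) \le \eps\} = \{S \ge 1\}$, Markov's inequality gives $\p[\mc{D}(X_1,\ldots,X_n) \le \eps] \le \E[S]$, so it suffices to show that $\E[S] \to 0$ as $n \to \infty$.

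By \eqref{eqn:1mm} in Lemma \ref{lem:moments}, $\E[S] = 2^n\, \p\bigl(|Z| \le \eps/\sqrt{n}\bigr)^m$ with $Z \sim \cN(0,1)$. The one ingredient needed is the elementary small-ball bound $\p(|Z| \le t) \le \sqrt{2/\pi}\, t$, valid for every $t \ge 0$ because the standard Gaussian density is maximized at the origin with value $1/\sqrt{2\pi}$. Substituting $t = \eps/\sqrt{n} = \gamma\, 2^{-n/m}\sqrt{\pi/2}$ gives $\p(|Z| \le \eps/\sqrt{n}) \le \gamma\, 2^{-n/m}$, hence
\[
\E[S] \le 2^n \bigl(\gamma\, 2^{-n/m}\bigr)^m = \gamma^m.
\]
Since $m = \omega(1)$ and $\gamma < 1$, we have $\gamma^m \to 0$, so $\E[S] \to 0$ and the proposition follows.

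There is essentially no obstacle here: the argument is short precisely because the linear upper bound $\p(|Z|\le t) \le \sqrt{2/\pi}\, t$ is asymptotically sharp, matching the first-order expansion $\p(|Z| \le t) \simeq_n \sqrt{2/\pi}\, t$ as $t \to 0$ that drives the matching upper bound via the second moment method. (One can note that $\eps/\sqrt{n} = \gamma 2^{-n/m}\sqrt{\pi/2} \to 0$ since $m = o(n)$, so we are genuinely in the small-ball regime, but this is not even required, as the bound on $\p(|Z|\le t)$ holds for all $t$.) The only point worth a line of care is bookkeeping the constants so that the factors $\sqrt{\pi n/2}$, $\sqrt{n}$, and $2^n$ cancel exactly to leave $\gamma^m$.
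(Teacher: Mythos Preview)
Your proof is correct and follows essentially the same approach as the paper: apply Markov's inequality to the counting variable $S$, use the formula $\E[S] = 2^n\,\p(|Z|\le \eps/\sqrt{n})^m$ from Lemma~\ref{lem:moments}, and bound the small-ball probability by $\sqrt{2/\pi}\,t$ (this is exactly the upper bound in \eqref{eqn:gauss_approx}) to obtain $\E[S]\le \gamma^m\to 0$. If anything, your bookkeeping is slightly cleaner than the paper's terse version.
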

	
	\begin{proof}
		Recall the definition of $S$ as in \eqref{eqn:defSvec}, which counts the number of signings with discrepancy $\eps = \gamma  2^{-n/m} \sqrt{\pi n / 2}$. By the Markov inequality, \eqref{eqn:gauss_approx}, and \eqref{eqn:1mm},
		\[
		\p\left[ S > 1 \right] \leq \mathbb{E}[S] = 2^n \p\left[|Z| < \gamma \sqrt{\frac{\pi n}{2}} 2^{-n/m} \right]^m \lesssim_n \gamma^m \to 0
		\]
		because $\omega(1)= m = o(n)$ and $\gamma < 1$. This completes the proof.
	\end{proof}
	
	Our first main result, Theorem \ref{thm:sub-linear_discrepancy}, is a direct consequence of Propositions \ref{prop:sub-linear} and \ref{prop:sub-linear_lbd}.

	\section{ Algorithmic discrepancy minimization in low dimension } 
	\label{sec:algo_discrepancy}
	
	Now we describe our approach for proving Theorem \ref{thm:gkk}. In this section we introduce the generalized Karmarkar--Karp algorithm \textbf{GKK}. Recall that the goal is to find algorithmically $\sigma \in \{ \pm 1 \}^n$ such that $\abs{\mathbf{X} \sigma}_\infty$ is small. As in \cite{KarKar82}, our algorithm is a \textit{differencing method}, which means that throughout the algorithm, we maintain a set of vectors $S$, and our basic operations consist of removing two vectors, say $x$ and $y$, from $S$ and then adding the difference to $S$ : $S \leftarrow S \cup \{x - y\} \backslash \{x, y\}$. We perform a sequence of these differencing operations in a judicious way until there is a single vector $v$ remaining in $S$. Note that at any given time, the elements of $S$ correspond to (disjoint) partial signed sums of the original vectors $X_1, \ldots, X_n$. Hence, the final vector $v \in S$ is indeed a signed sum of the original vectors. It is possible to keep track of the final signing by tracking the differences, though we do not do so explicitly.
	
	Next, we informally describe the \textbf{GKK} differencing method in detail. For simplicity, we assume that $\Delta = 1$ in this description. The algorithm \textbf{GKK} is a recursive procedure that consists of $\Theta(\log{n})$ phases. For the first phase of the recursion, given a collection of $n$ vectors lying in $[-1,1]^m$, we partition this cube into sub-cubes of side length $\alpha = n^{-\Omega(1/m)}$. The idea is that with sub-cubes of this size, we are likely to have multiple points in each sub-cube, and these points would be very close to each other. We then randomly difference the vectors in each sub-cube until there is at most one point left in each sub-cube. Next, we enter a \emph{clean-up} step to deal with the leftover vectors. First we combine the leftover vectors (at most one per each sub-cube) via a standard differencing algorithm that we call \textbf{REDUCE} into a single `bad' vector $v\rp{0}$ and let $G' \subset [-\alpha, \alpha]^m$ denote the vectors formed from random differencing. Next we make the entries of the bad vector small by adding signed combinations of a few vectors from $G'$. Namely, we draw at random points from $G'$ and greedily difference them against $v\rp{0}$ until the resulting vector is sufficiently small in the Euclidean norm. Specially, our update procedure for this clean-up step is 
	\begin{align}
	    	    v\rp{k} &= v\rp{k-1} + a^* \uu_k 	\label{eqn:clean_up_update} \\
	    	    a^* &= \argmin_{a \in \{ \pm 1 \} } \abs{ v\rp{k-1} + a \uu_k}_2. \nonumber
	\end{align}
    where $\uu_k$ is drawn at random from the remaining vectors is $G'$.
	
	Once we have $v\rp{k} \in [-O_m(\alpha), O_m(\alpha)]^m$, we stop drawing random vectors from $G'$, and this ends the first phase of recursion. The remaining vectors form the input to the second phase, which applies the same procedure as above on the smaller cube $[-\alpha, \alpha]^m$. Moreover, subsequent phases follow the same pattern: \textbf{partition}, \textbf{difference}, and \textbf{clean-up}. After each phase, the input cube shrinks by a factor of $n^{-\Omega(1/m)}$. Hence, after a logarithmic number of phases, the remaining vectors lie in a cube of side length $n^{-\Omega(n/m)} = e^{-\Omega(\log^2n/m)}$. We then apply \textbf{REDUCE} to combine the remaining vectors into a single vector with discrepancy as in Theorem \ref{thm:gkk}. 
	
	We remark that our algorithm also features a resampling step that happens immediately after partitioning. In each phase, this resampling procedure labels points as `good' or `bad' so that the good points are independent and have independent coordinates that have a nice distribution. This same resampling trick was also used in \cite{KarKar82} and is essential for (most of) the remaining random vectors at the end of each phase to have a nice distribution facilitating a recursive analysis. Moreover, the \textbf{partition} and \textbf{difference} steps of our algorithm are also similar to those used in \cite{KarKar82} for the one-dimensional case. 
	
	
	In summary, the algorithm \textbf{GKK} consists of several phases of a subroutine \textbf{PRDC}, which stands for partition, resample, difference, clean-up, that we now define explicitly. In the first part of the clean-up phase, we remark that the aforementioned algorithm \textbf{REDUCE} is applied. However, we defer the explicit description of this algorithm, which uses standard techniques, to Appendix \ref{appendix:reduce}, instead stating its key property of use.
	
	\begin{lemma}
		\label{lem:reduce}
		Given $X_1, \ldots, X_N \in \mathbb{R}^m$, the algorithm \textbf{\emph{REDUCE}} is polynomial-time and outputs $\sigma \in \{\pm 1\}^N$ such that 
		\begin{equation} 
		\label{eq:reduce_upper_bound}
		\abs{\sum_{i = 1}^N \sigma_i X_i}_\infty \leq \max_{ S \subset [N]: |S| = m } \sum_{j \in S} \abs{X_j}_\infty. 
		\end{equation}
	\end{lemma}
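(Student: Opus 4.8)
The plan is to realize \textbf{REDUCE} as a short piece of standard iterated linear-algebraic rounding, in the spirit of the Beck--Fiala argument and the Lov\'asz--Spencer--Vesztergombi bound on linear discrepancy; the ``differencing'' picture is simply the special case in which each update touches only two coordinates. Throughout, assume $N\ge m$ (otherwise the right-hand side is read with $|S|=N$ and the bound is the triangle inequality with all signs $+1$). Consider the polytope
\[
P=\Big\{\,y\in[-1,1]^N:\ \sum_{i=1}^N y_iX_i=0\,\Big\}\subseteq\R^N,
\]
where $\sum_i y_iX_i=0$ is a system of $m$ scalar linear equations. Since $0\in P$, the set $P$ is nonempty and compact, hence has a vertex $y^\star$.

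The first key step is a dimension count: letting $A=\{i\in[N]:|y^\star_i|<1\}$ be the set of ``floating'' coordinates at $y^\star$, I claim $|A|\le m$. At a vertex of a polytope in $\R^N$ the active constraint normals span $\R^N$; the $m$ equality constraints span a subspace of dimension at most $m$, so at least $N-m$ of the box constraints $y_i=\pm1$ must be active, i.e. at least $N-m$ coordinates of $y^\star$ lie in $\{\pm1\}$. (This is robust to degeneracy: if the equality system has lower rank, even more box constraints must be active, so $|A|$ only shrinks.) The second key step is to round \emph{to the nearest sign}: set $\sigma_i=y^\star_i$ for $i\notin A$ and $\sigma_i=\sign(y^\star_i)$ (with $\sign(0)=1$, say) for $i\in A$, so that $|\sigma_i-y^\star_i|=1-|y^\star_i|\le 1$ for every $i$. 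Using $\sum_i y^\star_iX_i=0$ and $|A|\le m$,
\[
\abs{\sum_{i=1}^N\sigma_iX_i}_\infty=\abs{\sum_{i\in A}(\sigma_i-y^\star_i)X_i}_\infty\le\sum_{i\in A}|\sigma_i-y^\star_i|\,\abs{X_i}_\infty\le\sum_{i\in A}\abs{X_i}_\infty\le\max_{S\subset[N]:\,|S|=m}\sum_{j\in S}\abs{X_j}_\infty,
\]
which is the desired inequality, and $\sigma\in\{\pm1\}^N$ is a genuine signed sum of $X_1,\dots,X_N$.

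To make this algorithmic and visibly polynomial (rather than invoking a black-box vertex solver), I would iterate: start from $y^{(0)}=0$; while the current floating set $A$ has $|A|>m$, the vectors $\{X_i:i\in A\}$ are linearly dependent, so solve a linear system for a nonzero direction $d$ supported on $A$ with $\sum_i d_iX_i=0$, walk along $+d$ until the first floating coordinate reaches $\pm1$, and freeze that coordinate. The invariant $\sum_i y_iX_i=0$ is preserved, each iteration freezes at least one coordinate, so there are at most $N$ iterations, each solving a linear system of size $m\times|A|$; the loop ends with $|A|\le m$, and the rounding of the previous paragraph applies. Each such update generalizes a pairwise difference $X_i-X_j$ (the case where $d$ has two nonzero entries), which is why it is legitimately called a differencing method. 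A full description of this routine goes in Appendix~\ref{appendix:reduce}.

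The only genuinely delicate point is getting the constant exactly $1$ rather than $2$: this is precisely why one must start from the \emph{balanced} fractional point (so that $\sum_i y^\star_i X_i=0$) and round each floating coordinate to the \emph{nearest} element of $\{\pm1\}$, yielding displacement at most $1$; rounding an arbitrary feasible fractional solution, or rounding all the way across the interval, would only give displacement $2$ and hence a factor $2$ in the bound. Everything else — nonemptiness and compactness of $P$, the vertex dimension count, termination and running time of the iterative version — is routine, and the match with the informal ``partition, difference, clean-up'' narrative is purely cosmetic.
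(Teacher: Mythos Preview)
Your proof is correct and is essentially the same as the paper's: both start from the zero fractional signing, walk iteratively in the kernel of $\bX$ (restricted to the floating coordinates) until only $\le m$ coordinates remain unfrozen, and then round each floating coordinate to its nearest sign to get displacement at most $1$ per coordinate. Your additional framing via the vertex of the polytope $P$ is a nice non-constructive gloss, but the algorithmic description you give afterward matches the paper's \textbf{REDUCE} routine line-for-line, and your observation about why the constant is $1$ rather than $2$ is exactly the point the paper's bound $|\sgn(s_i)-s_i|\le 1$ encodes.
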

	
	In the explicit description of \textbf{PRDC} below, $\gamma > 0$ denotes a fixed absolute constant to be set later (see Appendix \ref{appendix:efficient_clean_up}). 
	
	\medskip
	
	
	
	\vspace{0.3 cm}
	\noindent \textbf{PRDC}:
	
	\textbf{Input}: A number $\alpha_t > 0$. A set of vectors $S_t \subset [-\alpha_t, \alpha_t]^m$. A single vector $v_t \subset \gamma m[-\alpha_t, \alpha_t]^m$. A pdf $g_t:[-\alpha_t, \alpha_t]^m \to \mb{R}$. Define $N_{t} = 2^m \ceil{|S_t|^{1/(4m)}}^m$. 
	\begin{enumerate}
		
		\item \textbf{Partition:} Define $\alpha_{t+1} = \alpha_t/\ceil{|S_t|^{1/(4m)}}$. Divide the cube $[-\alpha_t, \alpha_t]^m$ into $N_{t}$ disjoint sub-cubes $C_1, \ldots, C_{N_t}$ that are of the form $\alpha_{t+1} z + [0, \alpha_{t+1}]^m$ for some integer vector $z \in \mb{Z}^m$.  
		\item \textbf{Resample:} Independently for every vector $x$ in $S_t$, if $x \in C_{j}$, then label $x$ as `good' with probability $(\min_{y \in C_j} g_t(y))/g_t(x)$. Otherwise, label $x$ to be `bad.' Let $G_t$ denote the set of good points and $B_t$ denote the set of bad points. 
		\item \textbf{Difference:} For every sub-cube $C_j$, pick uniformly at random two points in $G_t \cap C_j$, include their difference in $G'_t$, and remove them from $G_t$. Continue this until $G_t \cap C_j$ has at most $1$ good point for every $j$. Let $B_t'$ be the union of $B_t, v_t$, and the leftover good points.
		\item \textbf{Clean-up:} 
		\begin{enumerate}
			\item Apply \textbf{REDUCE} to the vectors in $B'_{t}$ to obtain $\sigma$. Define $v\rp{0}_t = \sum_{b_i \in B'_{t}} \sigma_i b_i$. 
			\item For $k = 0, 1, 2, \ldots$ 
			
			\indent If $\abs{v\rp{k}_t}_2 \geq \gamma m \alpha_{t+1}$: remove uniformly at random a point $x \in G_{t}'$. Define $v\rp{k+1}_t = v\rp{k}_t + a^* x$ where $a^* = \argmin_{a \in \{ \pm 1\}} |v\rp{k}_t + a x  |_2$. Define $G_{t}' \leftarrow G'_{t} \backslash \{ x \}$. 
			
			\indent Else: $v_{t+1} := v\rp{k}_t$. BREAK
		\end{enumerate}
	\end{enumerate}
	
	\textbf{Output:} $S_{t+1} := G'_{t}$, $v_{t+1}$, $\alpha_{t+1} := \alpha_t/\ceil{|S_t|^{1/(4m)}}$ 
	
	\vspace{0.3 cm}
	
	
	Now we explicitly describe our main algorithm \textbf{GKK} in terms of the subroutine \textbf{PRDC}. Recall that $\rho$ is the density corresponding to a particular entry of $\mf{X}$. First we need the following definition. 
	\begin{definition}[Triangular distribution]
		\label{def:tri_distr}
		A random vector $\y \in \mathbb{R}^m$ follows a \emph{triangular distribution} on the cube $[-R, R]^m$ if the distribution of $\y$ is given by $\mathbf{u} - \mathbf{v}$, where $\mathbf{u}$ and $\mathbf{v}$ are independent and uniformly distributed on $[0, R]^m$. Notationally, we write $\y \sim \mr{Tri}[-R, R]^m$. 
	\end{definition}
	
	\vspace{0.3 cm}
	\noindent \textbf{GKK:}
	
	\textbf{Input:} An $m \times n$ matrix $\mf{X}$. A probability density function $\rho:[-\Delta, \Delta] \to \mathbb{R}$. 
	Let $T = \ceil{ C^* \log n }$ where $C^*:= (2\log(10/3))^{-1}$.   
	\begin{enumerate}
		\item Set $S_1 = \mathrm{col}(\mf{X}), \alpha_1 = \Delta,$ $v_1 = \mf{0},$ and $g_1 = \rho^{\otimes m}$.  
		\item For $t = 1, 2, \ldots, T:$
		\begin{enumerate}
			\item Run $\textbf{PRDC}$ on the input data $S_{t}, v_{t}, \alpha_{t}, g_t$ to output $S_{t+1}, v_{t+1}$, and $\alpha_{t+1}$. 
			\item Set $g_{t+1}(x) = \frac{1}{\alpha_{t+1}} f(x/\alpha_{t+1})$, where $f(x)$ is the triangular density on $[-1,1]^m$. 
		\end{enumerate} 
		\item Apply \textbf{REDUCE} to the vectors in $S_T \cup \{ v_T \}$ to obtain $\sigma$. Let $v = \sum_{s_i \in S_T \cup \{ v_T \} } \sigma_i s_i$.
	\end{enumerate}
	
	\textbf{Output:} $|v|_\infty$
	\vspace{0.3 cm}
	
	
	
	We remark that the first three steps of \textbf{PRDC} are similar to those in the corresponding subroutine in \cite{KarKar82} for the one-dimensional case. The clean-up step and its analysis on the other hand are quite different. In particular, we use \textbf{REDUCE} to combine the `bad' vectors left over from resampling into a single bad vector $v\rp{0}$. This subroutine is quite similar to the algorithm used by Beck--Fiala to show that $t$-sparse vectors have discrepancy at most $2t - 1$ \citep{BecFia81}. In contrast, \cite{KarKar82} use a greedy iterative algorithm for dealing with bad points in dimension $1$, but it is not clear how to generalize their algorithm to also work in higher dimensions. In the next part of the clean-up step, we must bring the bad vector $v\rp{0}$ into a smaller range.  \cite{KarKar82} do this by randomly sampling points from $G'$ and greedily differencing them against $v\rp{0}$ until the resulting number is small. Here we use the same approach, but since we are working in higher dimensions, we measure the resulting vector in the Euclidean norm. In this part of the clean-up step, the key difference between our work and \cite{KarKar82} lies in our analysis, which includes elements of the analysis of stochastic gradient descent, as well as martingale concentration and the Khintchine inequality (see Appendix \ref{appendix:efficient_clean_up}). 

    We also comment on the reason for the bound $m = O(\sqrt{\log{n}})$ in Theorem \ref{thm:gkk}. First observe that by our choice of $\alpha = n^{-\Omega(1/m)}$ for the side-lengths of the sub-cubes at the first phase, it is necessary that $m = O(\log{n})$; otherwise the sub-cubes are not smaller than the original cube. The reason we require the stronger condition $m = O(\sqrt{\log{n}})$ is so that not too many points are labeled `bad' in the resampling step of our algorithm. We direct the reader to Appendix \ref{appendix:many_differences} for the analysis and further discussion.	

	\subsection{Analysis of GKK} 
	
	The proof of Theorem \ref{thm:gkk} follows from a sequence of inductive assumptions. Recall that $S_t$ denotes the points input to the $t^{\mathrm{th}}$ phase of \textbf{PRDC}, excluding the single `bad' vector $v_t \in \gamma m [-\alpha_t, \alpha_t]^m$, where $\gamma$ is a fixed absolute constant to be determined. Recall that $C^* = (2 \log(10/3))^{-1}$, as set in the definition of \textbf{GKK}, and that $\Delta > 0$ is the side length of the cube containing the initial set of vectors $S_1$.
	
	\begin{proposition}
		\label{prop:GKK_sizes_condl_distr}
		Let $X_1, \ldots, X_n$ be iid random vectors, each having a joint density $g:[-\Delta, \Delta]^m \to \mb{R}$. Consider the output $S_t, v_t, \alpha_t$ that results after the $(t-1)$-th phase of \textbf{\emph{PRDC}} in step 2 of \textbf{\emph{GKK}}. Then conditioned on $|S_j| = n_j$ for $1 \leq j \leq t$, we have
		\begin{itemize}
			\item the $n_t$ points in $S_t$ are iid and follow a triangular distribution on $[-\alpha_t, \alpha_t]^m$, and
			\item the random vector $v_t$ is independent of the vectors in $S_t$.
		\end{itemize} 
	\end{proposition}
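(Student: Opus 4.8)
The plan is an induction on $t$ that tracks how one phase of \textbf{PRDC} acts on the conditional law of the pair $(S_t,v_t)$ given the size profile $\{|S_j|=n_j\}_{j\le t}$. The base case $t=1$ is immediate: $v_1=\mathbf{0}$ is deterministic, hence independent of $S_1=\mathrm{col}(\mathbf{X})$, whose columns are iid with density $g$. For the inductive step I assume that, conditionally on $\{|S_j|=n_j\}_{j\le t}$, the points of $S_t$ are iid with a density $g_t$ supported on $[-\alpha_t,\alpha_t]^m$ (the triangular density when $t\ge 2$) and $v_t$ is independent of $S_t$, and I show the same at level $t+1$ with $g_{t+1}$ the triangular density on $[-\alpha_{t+1},\alpha_{t+1}]^m$. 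Throughout I work conditionally on the \emph{fine} information $\mathcal{I}_t$ recording, for every sub-cube $C_j$ produced by \textbf{Partition}, how many points of $S_t$ fall in $C_j$ and how many of those are labeled good in \textbf{Resample}; at the very end I remove this extra conditioning.

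First I would analyze \textbf{Resample} and \textbf{Difference}. Because a point of $S_t$ landing in $C_j$ has conditional density proportional to $g_t$ on $C_j$ and is kept with probability $(\min_{C_j}g_t)/g_t(x)$, the kept density is proportional to the constant $\min_{C_j}g_t$; hence, conditionally on $\mathcal{I}_t$, the good points in $C_j$ are iid uniform on $C_j$, the bad points in $C_j$ are iid with the complementary density, and — since the resampling coins are independent across points — good and bad points in distinct sub-cubes are mutually independent. In \textbf{Difference} each good point of $C_j$ is used in at most one pair, so the differences coming out of $C_j$ are differences of \emph{disjoint} pairs of iid uniforms on a cube of side $\alpha_{t+1}$; by Definition \ref{def:tri_distr} each is distributed as $\mathrm{Tri}[-\alpha_{t+1},\alpha_{t+1}]^m$ (the common offset of $C_j$ cancels), they are independent because the pairs are disjoint, and exchangeability of iid vectors makes the uniformly random choice of pairs and of orientation irrelevant. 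Combining over $j$, $G_t'$ is, conditionally on $\mathcal{I}_t$, a set of iid $\mathrm{Tri}[-\alpha_{t+1},\alpha_{t+1}]^m$ vectors, while the at-most-one leftover good point in each $C_j$ uses a point disjoint from all the pairs and is therefore independent of $G_t'$. Together with the inductive hypothesis $v_t\perp S_t$ — all labels, pairings, and differences being measurable functions of $S_t$ and fresh algorithmic randomness independent of $v_t$ — this shows that $B_t'=B_t\cup\{v_t\}\cup(\text{leftovers})$, and hence the vector $v_t^{(0)}$ obtained by applying \textbf{REDUCE} to $B_t'$, is independent of $G_t'$ conditionally on $\mathcal{I}_t$.

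The delicate step is \textbf{Clean-up}, which simultaneously outputs $v_{t+1}$ and deletes points from $G_t'$ to form $S_{t+1}$. Writing $G_t'=\{Z_1,\dots,Z_N\}$, exchangeability lets me assume the points are consumed in index order; with $\mathcal{F}_k=\sigma(v_t^{(0)},Z_1,\dots,Z_k)$, each update $v_t^{(k)}=v_t^{(k-1)}+a^{*}Z_k$ is $\mathcal{F}_k$-measurable, so the number $K$ of points consumed before $|v_t^{(k)}|_2<\gamma m\alpha_{t+1}$ is a stopping time for $(\mathcal{F}_k)$. On $\{K=k\}$ the remaining vectors $Z_{k+1},\dots,Z_N$ are independent of $\mathcal{F}_k$, hence independent of $\{K=k\}$ and of $v_{t+1}=v_t^{(k)}$; so conditionally on $\mathcal{I}_t$ and $K=N-n_{t+1}$, the surviving set $S_{t+1}=G_t'\setminus\{Z_1,\dots,Z_K\}$ consists of $n_{t+1}$ iid $\mathrm{Tri}[-\alpha_{t+1},\alpha_{t+1}]^m$ vectors independent of $v_{t+1}$, and $v_{t+1}\in\gamma m[-\alpha_{t+1},\alpha_{t+1}]^m$ by the stopping rule. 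Finally I would de-condition: for every value of $\mathcal{I}_t$ compatible with $\{|S_j|=n_j\}_{j\le t+1}$ the conditional law of $S_{t+1}$ is the \emph{same} product of $n_{t+1}$ triangular measures and $v_{t+1}\perp S_{t+1}$, so averaging over $\mathcal{I}_t$ preserves both statements and yields the proposition. The main obstacle is precisely this \textbf{Clean-up} analysis: one must ensure that the adaptive, data-dependent stopping does not bias the law of the points that survive, which is exactly what the optional-stopping/exchangeability argument above buys; the remaining ingredients (rejection sampling, differences of uniforms, disjointness of pairs) are routine.
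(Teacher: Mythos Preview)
Your proposal is correct and follows essentially the same route as the paper: the paper likewise inducts on the phase, conditions on the fine ``configuration'' data produced by \textbf{Partition}/\textbf{Resample} (it records for each input point its sub-cube and good/bad label, whereas you record only the per-cube counts---an immaterial difference by exchangeability), uses the rejection-sampling identity to get uniformity of good points, obtains the triangular law for $G_t'$, handles \textbf{Clean-up} via the same stopping-time argument that the adaptively consumed prefix leaves the iid suffix untouched, and finally marginalizes out the configuration data. Your de-conditioning step is also the paper's: since the conditional law of $S_{t+1}$ is the \emph{same} product of triangulars for every compatible configuration, averaging preserves both the product form and the independence from $v_{t+1}$.
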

	
	Proposition \ref{prop:GKK_sizes_condl_distr} ensures that the distribution of the output of each phase of recursion is preserved, allowing us to apply induction. At the heart of this result is the following marginal calculation which implies that the good points have a uniform distribution on their respective sub-cubes. Conditioning on $X_1 \in C_1$, if $L$ denotes the label of $X_1$ as `good' or `bad', then $(X_1, L)$ has a mixed joint density $p(x, \ell)$ where $x \in C_1$ and $\ell \in \{ `\text{good'}, \, `\text{bad'}\}$, which by Bayes' rule satisfies
	\[
	p(x|L=\text{`good'}) = \frac{ p(x, \, \text{`good'}) }{ \mb{P}[L = \text{`good'}] }
	= \frac{ g(x) \cdot \frac{ \min_{y \in C_1} g(y)}{g(x)} }{ \int_{C_1} p(y, \, \text{`good'}) \mr{d}y } = \frac{1}{\text{Vol}(C_1)},
	\]
	for all $x \in C_1$. 
	
	The proofs of Propositions \ref{prop:many_differences} and \ref{prop:efficient_clean_up} below are postponed to Appendices \ref{appendix:many_differences} and \ref{appendix:efficient_clean_up}, respectively. The former relies on showing that a large fraction of the points input to the $t^{\mr{th}}$ phase are labeled `good' in the \textbf{resample} step, and the latter requires us to show that few of the random differences created in step 3 of \textbf{PRDC} are lost in the \textbf{clean-up} step.
	
	
	\begin{proposition}
		\label{prop:many_differences}
		Suppose that $1 \leq t \leq C^* \log n$ and $m \leq C \sqrt{(\log n)/\max(1, \log \Delta)}$, where $C$ is a sufficiently small absolute constant. Then for some fixed $\theta$, conditioned on the events $|S_j| \geq \theta^{j-1} n$ for all $1 \leq j \leq t$, it holds that the set $G_{t}'$ of random differences created in step 2 of the $t^{\mathrm{th}}$ phase of \textbf{\emph{PRDC}} satisfies $|G_t'| \geq \beta |S_t|$ for some fixed $\beta$ with probability at least $1 - \exp(-c_1\sqrt{n})$, where $c_1>0$ is an absolute constant. In particular, we may set $\theta = 0.3$ and $\beta = 0.4$. 
	\end{proposition}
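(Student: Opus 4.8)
\textbf{Proof proposal for Proposition \ref{prop:many_differences}.}

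The plan is to control, phase by phase, how many points survive the \textbf{resample} step as ``good'' and then how many ``good'' points are lost to the parity obstruction in the \textbf{difference} step. By Proposition \ref{prop:GKK_sizes_condl_distr}, conditioned on the sizes $|S_j| = n_j$, the $n_t$ points entering phase $t$ are i.i.d.\ with a triangular density on $[-\alpha_t, \alpha_t]^m$; write $g_t$ for this density (suitably rescaled, it is the triangular density on $[-1,1]^m$). First I would bound the probability that a given point $x \in C_j$ is labeled ``good'': this probability is $(\min_{y\in C_j} g_t(y))/g_t(x) \ge (\min_{y \in C_j} g_t(y))/(\max_{y \in C_j} g_t(y))$, so the good-point retention is governed by the oscillation of $g_t$ on a single sub-cube $C_j$ of side length $\alpha_{t+1} = \alpha_t/\lceil n_t^{1/(4m)}\rceil$. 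Since the triangular density is a product of $m$ one-dimensional triangular (tent) densities, each of which is $O(1)$-Lipschitz on $[-1,1]$ after rescaling, the multiplicative oscillation of $g_t$ over a sub-cube of relative side length $n_t^{-1/(4m)}$ is at most $(1 + O(n_t^{-1/(4m)}))^m = 1 + O(m \, n_t^{-1/(4m)})$, using $\log(1+x)\le x$ and $m\cdot n_t^{-1/(4m)} \to 0$ under the hypothesis $m = O(\sqrt{\log n})$ and $n_t \ge \theta^{t-1} n$ with $t \le C^*\log n$. Hence each point is good with probability at least, say, $1/2$ (in fact $1 - o(1)$), uniformly over sub-cubes. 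Some care is needed near the point $0$ where the one-dimensional tent density is non-differentiable and near the boundary where it vanishes; on sub-cubes straddling these features one uses the crude bound that the retention probability is at least the ratio of the min to the max of a tent function on a short interval, which is still $1 - O(m\, n_t^{-1/(4m)})$ except on the $O(1)$ sub-cubes touching a coordinate hyperplane $\{x_i = 0\}$ or the boundary, whose total mass is negligible.

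Next I would pass from the expected number of good points to a high-probability lower bound. Conditioned on the sizes, the labels of the $n_t$ points are independent Bernoulli variables (given which sub-cube each point falls in, which only helps), so $|G_t|$ stochastically dominates a sum of $n_t$ independent Bernoullis each with success probability $\ge 1/2$; by a Chernoff bound, $|G_t| \ge 0.45\, n_t$ except with probability $\exp(-\Omega(n_t)) \le \exp(-\Omega(\theta^{t-1} n)) \le \exp(-\Omega(n^{1 + C^*\log\theta}))$. Choosing the constants so that $1 + C^*\log\theta > 1/2$ — which is exactly why the proposition can take $\theta = 0.3$ and $C^* = (2\log(10/3))^{-1}$, since then $\theta^{C^*\log n} = n^{C^*\log\theta} = n^{-1/2}$ — gives the claimed failure probability $\exp(-c_1\sqrt n)$.

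Finally, in the \textbf{difference} step, within each sub-cube $C_j$ containing $|G_t \cap C_j|$ good points we repeatedly remove pairs and emit their difference into $G_t'$, so the number of differences contributed by $C_j$ is $\lfloor |G_t \cap C_j|/2 \rfloor$ and at most one good point per sub-cube is left behind. Therefore $|G_t'| \ge \tfrac{1}{2}(|G_t| - N_t)$, where $N_t = 2^m \lceil n_t^{1/(4m)}\rceil^m$ is the number of sub-cubes. The key point is that $N_t \le 2^{m+1} n_t^{1/4}$, and since $m = O(\sqrt{\log n}) = o(\log n)$ we have $2^m = n^{o(1)}$, so $N_t = n_t^{1/4 + o(1)} = o(n_t)$; thus $|G_t'| \ge \tfrac12(0.45 n_t - o(n_t)) \ge 0.4\, n_t = \beta |S_t|$ for $n$ large, on the same high-probability event. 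I expect the main obstacle to be the first step: carefully quantifying the oscillation of the (rescaled) triangular density across a sub-cube uniformly over all sub-cubes, including the exceptional ones near the non-smooth point $0$, the coordinate hyperplanes, and the boundary of the cube, and verifying that the cumulative loss from those exceptional sub-cubes does not erode the constant $\beta$ — this is where the product structure of the triangular distribution and the constraint $m = O(\sqrt{\log n})$ are both essential.
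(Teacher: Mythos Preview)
Your overall strategy matches the paper's: bound the fraction of points labeled ``bad'' in the \textbf{resample} step, apply Hoeffding, and then account for the at-most-one-per-sub-cube loss in \textbf{difference}. Your arithmetic on $\theta, C^*$ and the resulting $n_t \ge \sqrt{n}$ is exactly right, as is the counting $N_t \le 2^m n_t^{1/4} = o(n_t)$.

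The one substantive difference is how you control the bad-label probability. You work with the \emph{multiplicative} oscillation $(\min_{C_j} g_t)/(\max_{C_j} g_t)$, whereas the paper works with the \emph{additive} oscillation via an $\ell_1$-Lipschitz bound: it shows (Lemma~\ref{lem:Lipschitz_density}) that a product density $g = \rho^{\otimes m}$ is $LD^{m-1}$-Lipschitz in $\ell_1$, and then computes directly
\[
\p[X_1 \text{ bad}] \;=\; \sum_{j} \int_{C_j} \bigl(g(x) - \min_{C_j} g\bigr)\,dx \;\le\; \sum_j \mathrm{Vol}(C_j)\cdot LD^{m-1}\cdot m\alpha' \;=\; (2\Delta)^m LD^{m-1} m\alpha'.
\]
This bound is uniform over all sub-cubes, including those touching the boundary where $g$ vanishes, so no separate boundary analysis is needed at all.

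Your route can be made to work but contains a concrete error: the number of sub-cubes touching the boundary of $[-1,1]^m$ is not $O(1)$; with $K = \lceil n_t^{1/(4m)}\rceil$ sub-cubes per side it is $(2K)^m - (2K-2)^m$, which is a positive fraction of $N_t$ when $m$ is moderate. What \emph{is} true (and what you'd need) is that the total \emph{mass} of the strip within $\alpha'$ of the boundary is small, of order $O(m(\alpha')^2)$ for the triangular density, and more generally that the multiplicative oscillation degrades gracefully as one approaches the boundary so that the integrated contribution to $\p[\text{bad}]$ stays $O(m\alpha')$. This is fixable but requires a layered argument over distance-to-boundary. The paper's additive Lipschitz bound sidesteps this entirely and is the cleaner implementation of the same idea.
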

	
	\begin{proposition}
		\label{prop:efficient_clean_up}
		Suppose that $1 \leq t \leq C^* \log n$ and $m \leq C \sqrt{\log{n}}$, where $C$ is a sufficiently small absolute constant. Then conditioned on the events $|G_t'| \geq \beta |S_t|$ and $|S_j| \geq \theta^{j-1} n$ for $1 \leq j \leq t$, it holds that the set $S_{t+1}$ ( the input to the $(t+1)$-th iteration of \textbf{\emph{PRDC}}) satisfies $|S_{t+1}| \geq \theta |S_t|$ with probability at least $1 - \exp(-c_2 n^{1/4} )$, where $c_2>0$ is an absolute constant. In particular, we may choose $\beta = 0.4$ and $\theta = 0.3$. 
	\end{proposition}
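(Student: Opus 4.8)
The plan is to reduce the claim to a bound on the number $\tau$ of iterations of the inner loop in the \textbf{Clean-up} step of the $t$-th phase of \textbf{PRDC}. Since the output $S_{t+1}=G'_t$ is precisely the set of random differences with those $\tau$ vectors deleted, on the event $\{|G'_t|\ge\beta|S_t|\}$ we have $|S_{t+1}|=|G'_t|-\tau\ge\beta|S_t|-\tau$, so it suffices to show $\tau\le(\beta-\theta)|S_t|=0.1|S_t|$ with probability at least $1-\exp(-c_2n^{1/4})$; this also guarantees $G'_t$ is not exhausted during clean-up, as $0.1|S_t|<\beta|S_t|$. Throughout I condition on the stated size events, under which $|S_t|\ge\theta^{t-1}n\ge n^{1/2}$, because $t\le C^*\log n$, $\theta=0.3$, and $C^*=(2\log(10/3))^{-1}$ are calibrated exactly so that $\theta^{C^*\log n}=n^{-1/2}$.

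First I would control the starting vector $v\rp{0}_t$: applying Lemma~\ref{lem:reduce} to $B'_t$ and using that every element of $B'_t$ other than $v_t$ lies in $[-\alpha_t,\alpha_t]^m$ while $v_t\in\gamma m[-\alpha_t,\alpha_t]^m$ gives $|v\rp{0}_t|_\infty\le(\gamma+1)m\alpha_t$, hence $|v\rp{0}_t|_2\le(\gamma+1)m^{3/2}\alpha_t$. The heart of the argument is then the analysis of the update $v\rp{k+1}_t=v\rp{k}_t+a^*\uu_k$ with $a^*=\argmin_{a\in\{\pm1\}}|v\rp{k}_t+a\uu_k|_2$. By the reasoning behind Proposition~\ref{prop:GKK_sizes_condl_distr}, conditionally on the relevant sizes the vectors of $G'_t$ are i.i.d.\ $\mr{Tri}[-\alpha_{t+1},\alpha_{t+1}]^m$ and independent of $v\rp{0}_t$; since the loop samples them without replacement, $\uu_k$ is $\mr{Tri}[-\alpha_{t+1},\alpha_{t+1}]^m$-distributed conditionally on the history $\mathcal{F}_k$. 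Set $W_k=|v\rp{k}_t|_2$. The crucial point---and the one that makes the concentration work---is to track $W_k$ rather than $W_k^2$: from $|v\rp{k}_t\pm\uu_k|_2\in[W_k-|\uu_k|_2,\,W_k+|\uu_k|_2]$ and $|\uu_k|_2\le\sqrt m\,\alpha_{t+1}$ one gets the bounded increment $|W_{k+1}-W_k|\le\sqrt m\,\alpha_{t+1}$, \emph{uniformly in the current value of the norm}. For the drift, $W_{k+1}^2=W_k^2+|\uu_k|_2^2-2|\langle v\rp{k}_t,\uu_k\rangle|$, so concavity of $\sqrt{\cdot}$ gives $W_{k+1}-W_k\le\frac{|\uu_k|_2^2}{2W_k}-\big|\langle\uu_k,v\rp{k}_t/W_k\rangle\big|$; taking $\E[\,\cdot\mid\mathcal{F}_k]$, using $\E|\uu_k|_2^2=m\alpha_{t+1}^2/6$, and invoking a Khintchine/Paley--Zygmund lower bound $\E|\langle\uu_k,\hat v\rangle|\ge c_0\alpha_{t+1}$ valid for every unit vector $\hat v$ (the coordinates of $\uu_k$ being i.i.d.\ symmetric with variance $\alpha_{t+1}^2/6$ and bounded kurtosis) yields, whenever $W_k\ge\gamma m\alpha_{t+1}$, i.e.\ as long as $k<\tau$,
\[
\E[\,W_{k+1}-W_k\mid\mathcal{F}_k\,]\ \le\ \frac{\alpha_{t+1}}{12\gamma}-c_0\alpha_{t+1}\ \le\ -\tfrac{c_0}{2}\,\alpha_{t+1},
\]
provided the absolute constant $\gamma$ in \textbf{PRDC} is taken large enough, say $\gamma\ge 1/(6c_0)$.

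I would finish via Azuma--Hoeffding applied to the stopped martingale $\tilde M_k=W_{k\wedge\tau}-W_0-\sum_{j<k\wedge\tau}\E[W_{j+1}-W_j\mid\mathcal{F}_j]$, whose increments are bounded by $2\sqrt m\,\alpha_{t+1}$. Put $K_0=\flo{0.1|S_t|}$. On $\{\tau>K_0\}$ one has $W_{K_0}\ge 0$ while the drift sum is $\le-\tfrac{c_0}{2}\alpha_{t+1}K_0$, forcing $\tilde M_{K_0}\ge\tfrac{c_0}{2}\alpha_{t+1}K_0-W_0\ge\tfrac{c_0}{4}\alpha_{t+1}K_0$, the last step because $W_0\le(\gamma+1)m^{3/2}\ceil{|S_t|^{1/(4m)}}\,\alpha_{t+1}=o(\alpha_{t+1}|S_t|)$ when $m=O(\sqrt{\log n})$ and $|S_t|\ge n^{1/2}$. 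Azuma then gives $\p[\tau>K_0]\le\exp(-c\,K_0/m)$ for an absolute $c>0$; since $K_0/m=\Omega(n^{1/2}/\sqrt{\log n})\ge n^{1/4}$ for $n$ large, this is at most $\exp(-c_2n^{1/4})$. On the complement, $|S_{t+1}|\ge\beta|S_t|-K_0\ge 0.3|S_t|=\theta|S_t|$, which is the claim.

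The hard part is precisely this concentration step. The naive route tracks $W_k^2$, a supermartingale with one-step drift $-\Theta(m\alpha_{t+1}^2)$, and yields only $\E[\tau]=O\!\big(m^2\ceil{|S_t|^{1/(4m)}}^2\big)$, whence Markov's inequality gives a merely polynomially small failure probability, nowhere near $\exp(-cn^{1/4})$. Passing to $W_k$, whose per-step fluctuation is $O(\sqrt m\,\alpha_{t+1})$ \emph{regardless of the current value} while its negative drift stays $\Theta(\alpha_{t+1})$, is what produces a sub-Gaussian tail of the right order. The remaining technical ingredient is the uniform Khintchine-type lower bound on $\E|\langle\uu_k,\hat v\rangle|$, which---together with the stochastic-gradient-descent-flavored one-step inequality coming from concavity of the square root---is where the ``Khintchine inequality'' and ``analysis of stochastic gradient descent'' mentioned in the overview enter the picture.
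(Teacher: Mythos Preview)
Your proof is correct and rests on the same ingredients as the paper's: the distributional lemma guaranteeing that conditionally on the sizes the $\uu_k$ are i.i.d.\ $\mr{Tri}[-\alpha_{t+1},\alpha_{t+1}]^m$ and independent of $v_t^{(0)}$, the Khintchine-type lower bound $\E|\langle\hat v,\uu\rangle|\ge c^*\alpha_{t+1}$, and Azuma--Hoeffding with increments of order $\sqrt m\,\alpha_{t+1}$. The packaging differs. You track $W_k=|v_t^{(k)}|_2$ directly and extract the negative drift via the concavity inequality $\sqrt{W_k^2+\Delta}\le W_k+\Delta/(2W_k)$, then apply Azuma to the stopped Doob martingale of $W_k$. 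The paper (Lemma~\ref{lem:ell_2_argument}) instead works with the submartingale $M_j=\sum_{k\le j}\big(|\langle\nu^{(k)},\uu_{k+1}\rangle|-c^*\alpha_{t+1}\big)$, where $\nu^{(k)}=v_t^{(k)}/W_k$: on the bad event $\{\tau>K\}$ the telescoping identity $0\le W_{K+1}^2=W_0^2+\sum(-2|\langle v^{(k)},\uu_{k+1}\rangle|+|\uu_{k+1}|_2^2)$ forces $M_K$ to be too negative, and Azuma bounds that probability. Both routes produce the same tail $\exp(-cK/m)$. One further cosmetic difference: you take $K_0=\lfloor 0.1|S_t|\rfloor$, whereas the paper uses the tighter $K=(g'_t)^{3/4}$ and then argues $(g'_t)^{3/4}\le 0.01\,g'_t$; either choice delivers $|S_{t+1}|\ge 0.3|S_t|$, and your larger $K_0$ in fact yields a stronger probability bound than required.
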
    

    The proof of Theorem \ref{thm:gkk} follows easily from the previous two propositions and is found in Appendix \ref{appendix:GKK_thm}.

    \appendix
    
    \section{Proofs from Section\texorpdfstring{ \MakeLowercase{\ref{sec:sub-linear_discrepancy}}}{} } 
    \label{appendix:sub-linear_discrepancy}
    
    First, we calculate the first and second moments of $S$ as defined in \eqref{eqn:defSvec}.   
    
    \begin{proof}[Proof of Lemma \ref{lem:moments}]
		Let $X_i^{(j)}$ denote the $j$th element of the vector $X_i$. Since these elements are independent, we get
		\begin{align*}
		\E[S] 
		&= \sum_{\sigma \in \{\pm 1 \}^n} \, \prod_{j = 1}^m \p\big(\big| \sum_{i = 1}^{n} \sigma_i X_i^{(j)} \big| \leq \eps \big)  = 2^n \p\big( |Z| \leq \frac{\eps}{\sqrt{n}} \big)^m 
		\end{align*}
		where $Z \sim \cN(0,1)$. This completes the proof of~\eqref{eqn:1mm}.

		To prove~\eqref{eqn:2mm}, let $d(\tau, \sigma)$ denotes the Hamming distance between $\sigma$ and $\tau$. Observe that if $\tau$ and $\sigma$ satisfy $d(\tau, \sigma) = k$, then $X:=\frac{1}{\sqrt{n}}\sum_{i = 1}^{n} \sigma_i X_i^{(j)}$ and $Y:=\frac{1}{\sqrt{n}}\sum_{i = 1}^{n} \tau_i X_i^{(j)}$ are $\rho_k$-correlated standard Gaussians  random variables. Thus
		\begin{align}
		\E[S^2] &= \sum_{\sigma, \tau \in \{\pm{1}\}^{n}} \p\big(\big|\sum_{i = 1}^{n} \sigma_i X_i\big|_\infty \leq \eps \, , \, \, \big|\sum_{i = 1}^{n} \tau_i X_i\big|_\infty \leq \eps  \big) \nonumber \\
		&= \sum_\sigma \sum_{k = 0}^{n} \, \sum_{\tau: \, d(\tau, \sigma) = k} \prok\big( \big|\sqrt{n}X\big| \leq \eps \, , \, \, \big|\sqrt{n}Y\big| \leq \eps  \big)^m \nonumber \\
		&= 2^n \sum_{k = 0}^n \binom{n}{k} \prok\left( \abs{\sqrt{n}X} \leq \eps \, , \, \, \abs{\sqrt{n}Y} \leq \eps  \right)^m \nonumber,
		\end{align}
		which proves the lemma. 
	\end{proof}
    
    The following small-ball probability estimates are required for the proof of the truncation argument, Lemma \ref{lem:truncation}. 
	
	\begin{lemma}
		\label{lem:small_ball_probs}
		Let $Z$ denote a standard Gaussian random variable, and let $X, Y$ denote $\rho$-correlated standard Gaussian random variables with $\rho \in (-0.5, 0.5)$. Then for $0 < z < 1$, we have for some absolute constant $c > 0$ that
		\begin{equation}
		-c z^3 \leq \p\big[ |Z| \leq z \big] - \sqrt{\frac{2}{\pi} } z  \leq 0, \label{eqn:gauss_approx} 
		\end{equation}
		and for all $z \in (0, \infty)$, we have
		\begin{equation}
		\p_\rho\big[ |X| \leq z, |Y| \leq z ] \leq \frac{2}{\pi \sqrt{1 - \rho^2}} z^2.  \label{eqn:rho_cor_approx}
		\end{equation} 
	\end{lemma}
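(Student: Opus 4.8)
The statement to prove is Lemma \ref{lem:small_ball_probs}, which consists of two small-ball probability estimates: \eqref{eqn:gauss_approx} bounding the deviation of $\p[|Z|\le z]$ from its linear approximation $\sqrt{2/\pi}\,z$, and \eqref{eqn:rho_cor_approx} bounding the joint small-ball probability for $\rho$-correlated Gaussians.

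\medskip

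\textbf{Proof proposal.} For \eqref{eqn:gauss_approx}, the plan is to work directly from the density. Write $\p[|Z|\le z] = \int_{-z}^z \frac{1}{\sqrt{2\pi}}e^{-t^2/2}\,dt = \sqrt{\frac{2}{\pi}}\int_0^z e^{-t^2/2}\,dt$. Since $e^{-t^2/2}\le 1$ for all $t$, we immediately get the upper bound $\p[|Z|\le z]\le \sqrt{2/\pi}\,z$, which gives the right-hand inequality. For the lower bound, use the elementary inequality $e^{-t^2/2}\ge 1 - t^2/2$ valid for all $t\in\mathbb{R}$, so that $\int_0^z e^{-t^2/2}\,dt \ge z - z^3/6$. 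Multiplying by $\sqrt{2/\pi}$ yields $\p[|Z|\le z] - \sqrt{2/\pi}\,z \ge -\sqrt{2/\pi}\,z^3/6 \ge -c z^3$ for $c = \frac{1}{6}\sqrt{2/\pi}$ (in fact any $c \ge \frac{1}{6}\sqrt{2/\pi}$ works, and the restriction $0<z<1$ is not even needed for this side, though it is harmless to keep). This disposes of the first display.

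\medskip

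For \eqref{eqn:rho_cor_approx}, the plan is to bound the joint density of $(X,Y)$ uniformly on the square $[-z,z]^2$ and multiply by the area $4z^2$. The bivariate standard Gaussian density with correlation $\rho$ is
\[
p_\rho(x,y) = \frac{1}{2\pi\sqrt{1-\rho^2}}\exp\!\left(-\frac{x^2 - 2\rho xy + y^2}{2(1-\rho^2)}\right).
\]
The quadratic form $x^2 - 2\rho xy + y^2$ is positive semidefinite (since $|\rho|<1$), hence the exponential factor is at most $1$ everywhere, giving $p_\rho(x,y)\le \frac{1}{2\pi\sqrt{1-\rho^2}}$ pointwise. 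Therefore
\[
\p_\rho[|X|\le z,\ |Y|\le z] = \int_{-z}^z\!\int_{-z}^z p_\rho(x,y)\,dx\,dy \le (2z)^2\cdot\frac{1}{2\pi\sqrt{1-\rho^2}} = \frac{2z^2}{\pi\sqrt{1-\rho^2}},
\]
which is exactly \eqref{eqn:rho_cor_approx}. Note this bound holds for all $z\in(0,\infty)$ as claimed, and the hypothesis $\rho\in(-0.5,0.5)$ is only relevant insofar as it keeps $\sqrt{1-\rho^2}$ bounded away from zero (so that the constant $\frac{2}{\pi\sqrt{1-\rho^2}}$ is itself an absolute constant in applications); it is not needed for the inequality itself.

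\medskip

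\textbf{Main obstacle.} There is essentially no obstacle here: both parts reduce to one-line density estimates (the quadratic lower bound $e^{-t^2/2}\ge 1-t^2/2$ and positive semidefiniteness of the correlation matrix). The only mild care needed is checking that the constant in \eqref{eqn:gauss_approx} can be taken absolute (it can, e.g. $c = 1$ suffices) and confirming the claimed ranges of $z$ — the upper bound in \eqref{eqn:gauss_approx} and the bound \eqref{eqn:rho_cor_approx} actually hold for all $z>0$, while the lower bound in \eqref{eqn:gauss_approx} is stated for $0<z<1$ only because that is the regime in which it is applied (and is where the cubic error term is genuinely small relative to the linear term).
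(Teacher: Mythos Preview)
Your proof is correct and takes essentially the same approach as the paper. The paper phrases the upper bound in \eqref{eqn:gauss_approx} via concavity of $z\mapsto\p[|Z|\le z]$ (equivalent to your $e^{-t^2/2}\le 1$) and the lower bound via Taylor expansion (equivalent to your pointwise inequality $e^{-t^2/2}\ge 1-t^2/2$), and handles \eqref{eqn:rho_cor_approx} identically by bounding the joint density at its maximum using positive semidefiniteness of the quadratic form.
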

    
    \begin{proof}
	    Observe that $z \mapsto \p[|Z| \leq z]$ is a concave function for $z \geq 0$. Hence, it lies below the tangent line to this curve at $z = 0$, which is precisely the function $z \mapsto \sqrt{2/\pi}z$. This proves the right-hand-side of \eqref{eqn:gauss_approx}. To prove the left-hand-side, we apply Taylor expansion and observe that for $|z| \leq 1$, it holds that
	    \[
	    \p[|Z| \leq z]  = \sqrt{\frac{2}{\pi}} z - \frac{1}{6} \sqrt{\frac{2}{\pi}} z^3 \pm O(z^5) \geq \sqrt{\frac{2}{\pi}} z - c z^3
	    \] 
	    for some absolute constant $c > 0$. 
	    
	    To prove \eqref{eqn:rho_cor_approx}, note that the joint density $\psi_\rho(x,y)$ of a pair of standard normal $\rho$-correlated Gaussians satisfies
	    \[
	    \psi_{\rho}(x,y)= \frac{1}{2\pi\sqrt{1-\rho^2}}\exp\big(-\frac{x^2-2\rho xy+y^2}{2-2\rho^2}\big) \leq \frac{1}{2\pi\sqrt{1-\rho^2}}.
	    \]
	    The upper bound follows by positive-semidefiniteness of the covariance matrix. Hence, integrating over the rectangle $|x| \leq z, |y| \leq z$ and applying the above upper bound yields the desired result.
	    
	\end{proof}
    
    \begin{lemma}
		\label{lem:truncation}
		Suppose that $\omega(1) = m = o(n)$ and let $\eps = \eps(n) = \gamma 2^{-n/m} \sqrt{\pi n/2} $ for some $\gamma > 1$. Then
		\begin{align}
		\label{eqn:truncation}
		2^n \sum_{k = 0}^{n/4} \binom{n}{k} \prok\left( \abs{\sqrt{n}X} \leq \eps \, , \, \, \abs{\sqrt{n}Y} \leq \eps  \right)^m = o( \E[S]^2 ). \\
		2^n \sum_{k = 3n/4}^{n} \binom{n}{k} \prok\left( \abs{\sqrt{n}X} \leq \eps \, , \, \, \abs{\sqrt{n}Y} \leq \eps  \right)^m = o( \E[S]^2 ). \label{eqn:truncation_sym}
		\end{align}
	\end{lemma}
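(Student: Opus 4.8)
\textbf{Proof proposal for Lemma \ref{lem:truncation}.}

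The plan is to bound each summand $2^n \binom{n}{k} \prok(\abs{\sqrt{n}X} \le \eps, \abs{\sqrt{n}Y} \le \eps)^m$ for $k \le n/4$ (the case $k \ge 3n/4$ being symmetric, since $\rho_{n-k} = -\rho_k$ and the two-point probability depends only on $\rho^2$), and then sum these bounds and compare against $\E[S]^2 = 2^{2n} \prok[|Z| \le \eps/\sqrt n]^{2m}$, which by \eqref{eqn:1mm} and \eqref{eqn:gauss_approx} is $\simeq_n \gamma^{2m} (2/\pi)^m (\eps/\sqrt n)^{2m} 2^{2n}$. First I would apply the correlated small-ball estimate \eqref{eqn:rho_cor_approx} to get $\prok(\abs{\sqrt n X} \le \eps, \abs{\sqrt n Y} \le \eps) \le \frac{2}{\pi \sqrt{1-\rho_k^2}} (\eps/\sqrt n)^2$, so the $k$-th summand is at most $2^n \binom{n}{k} \big(\frac{2}{\pi}\big)^m (1-\rho_k^2)^{-m/2} (\eps/\sqrt n)^{2m}$. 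Dividing by $\E[S]^2$, the target reduces to showing
\[
\sum_{k=0}^{n/4} \binom{n}{k} 2^{-n} (1-\rho_k^2)^{-m/2} \gamma^{-2m} = o(1)\,.
\]

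The next step is to split the sum at a threshold, say $k_0 = \lceil \kappa n/m \rceil$ for a suitable absolute constant $\kappa$ (recall $m = o(n)$, so $k_0 \ll n$). For the ``small $k$'' range $0 \le k \le k_0$, I would bound $\binom{n}{k} 2^{-n} \le 2^{-n} \binom{n}{k_0} (k_0+1)$ — or more cleanly use $\sum_{k \le k_0}\binom nk \le 2^{n h(k_0/n)}$ with $h(k_0/n) = o(1)$ since $k_0/n = o(1)$ — giving a factor like $2^{-\Omega(n)}$ at worst, against which the factor $(1-\rho_k^2)^{-m/2}$ is negligible: for $k \le k_0$ we have $\rho_k = 1 - 2k/n$ close to $1$, so $1 - \rho_k^2 = \frac{4k}{n}(1 - \frac kn) \ge \frac{4k}{n} \cdot \frac34$ when $k \ge 1$, hence $(1-\rho_k^2)^{-m/2} \le (3k/n)^{-m/2} \le (n/3)^{m/2}$, which is $e^{O(m \log n)} = e^{o(n)}$ since $m = o(n)$ — in fact the cleanest route is the Remark's hypothesis $m = O(n/\log n)$, but here $m = o(n)$ already suffices because $2^{-\Omega(n)}$ beats $e^{o(n) \cdot \log n}$... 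I should be careful: $(n/3)^{m/2} = e^{(m/2)\log(n/3)}$ and with only $m = o(n)$ this is $e^{o(n)\log n}$ which need not be $o(2^{\Omega(n)})$. The fix is to note the $k=0$ term is special ($\rho_0 = 1$, and $\prok$ there is just the one-point probability, so the $k=0$ summand equals exactly $\E[S]$, which is $\Theta(\gamma^m) = o(\E[S]^2)$ trivially), and for $1 \le k \le k_0$ use the sharper bound $\binom nk \le (en/k)^k$ so that $\binom nk 2^{-n}(n/3)^{m/2} \le (en/k)^k 2^{-n} (n/3)^{m/2}$; taking $k_0 = o(n/\log n)$ — which is automatic once we also invoke $m = O(n/\log n)$ for the Remark, or for the Gaussian theorem just note we only need this for $m = o(n)$ and can take $k_0$ as small as we like, e.g. $k_0 = \sqrt n$ — makes $(en/k)^{k_0} (n/3)^{m/2} = e^{o(n)}$, dominated by $2^{-n}$. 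Summing over $1 \le k \le k_0$ contributes $\le k_0 \cdot 2^{-\Omega(n)} = o(1)$.

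For the ``bulk'' range $k_0 \le k \le n/4$, the binomial is now subexponentially close to its value at $n/4$ in the sense that $\binom nk 2^{-n} \le 2^{-n(1-h(k/n))} \le 2^{-n(1-h(1/4))}$, an exponentially small factor $2^{-cn}$ with $c = 1 - h(1/4) > 0$, while $1 - \rho_k^2 = \frac{4k}{n}(1-\frac kn) \ge \frac{4 k_0}{n} \cdot \frac34 \ge \frac{3\kappa}{m}$ (using $k \ge k_0 \ge \kappa n/m$), so $(1-\rho_k^2)^{-m/2} \le (3\kappa/m)^{-m/2} = (m/(3\kappa))^{m/2} = e^{O(m \log m)}$; combined with $\gamma^{-2m}$ this is $e^{O(m\log m)}$, and since $m = o(n)$ we have $m \log m = o(n)$, so the whole summand is $\le 2^{-cn} e^{o(n)} = 2^{-\Omega(n)}$, and summing over $\le n/4$ terms still gives $o(1)$. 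I would then combine the two ranges and the $k=0$ term to conclude \eqref{eqn:truncation}; \eqref{eqn:truncation_sym} follows identically by the substitution $k \mapsto n-k$, using $\rho_{n-k} = -(1 - 2k/n)$ and that both \eqref{eqn:rho_cor_approx} and the $\binom nk$ factor are invariant under this swap.

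The main obstacle — and the place demanding care — is the tension between the exponentially small binomial factor $2^{-cn}$ (or $2^{-n(1-h(k/n))}$) and the potentially large factor $(1-\rho_k^2)^{-m/2}$ which blows up as $k \to 0$ where $\rho_k \to 1$: one must choose the splitting threshold $k_0$ so that for $k < k_0$ the binomial is small enough to kill $(1-\rho_k^2)^{-m/2}$ even at its worst ($k=1$, where $1-\rho_1^2 \approx 4/n$ giving a factor $n^{m/2}$), and for $k \ge k_0$ the correlation $\rho_k$ is bounded enough away from $1$ that $(1-\rho_k^2)^{-m/2}$ is only $e^{O(m\log m)} = e^{o(n)}$, comfortably beaten by $2^{-cn}$. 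The constraint this imposes, roughly $m \log n = o(n)$ near $k = 1$ — equivalently $m = o(n/\log n)$ — is exactly why the companion Remark requires $m = O(n/\log n)$; for the Gaussian Theorem \ref{thm:sub-linear_discrepancy} where $m = o(n)$, one simply separates out the handful of smallest $k$ (each of whose summands is individually $o(\E[S]^2)$ by a direct computation, the $k=0$ term being literally $\E[S]$) before running the dyadic argument.
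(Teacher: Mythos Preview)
Your two-range split and your use of \eqref{eqn:rho_cor_approx} are the right ideas, and for the sub-regime $m=o(n/\log n)$ your argument goes through essentially as written. But the lemma is stated for the full range $\omega(1)=m=o(n)$, and there your proposal has a real gap in \emph{both} ranges, traceable to your choice of cutoff $k_0=\kappa n/m$.

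In the bulk you bound $(1-\rho_{k_0}^2)^{-m/2}\le(m/(3\kappa))^{m/2}=e^{O(m\log m)}$ and then assert ``since $m=o(n)$ we have $m\log m=o(n)$''. This implication is false: for $m=n/\log n$ one has $m\log m\sim n$, and for $m=n/\sqrt{\log n}$ the ratio $m\log m/n$ diverges. So with your cutoff the bulk bound $2^{-cn}e^{O(m\log m)}$ need not be $o(1)$. In the small-$k$ range you correctly recognise that \eqref{eqn:rho_cor_approx} breaks down (already the $k=1$ term $n\,2^{-n}(n/4)^{m/2}$ diverges once $m\gtrsim n/\log n$), but your repair---``separate out the handful of smallest $k$ by a direct computation''---is not an argument: you give no bound other than \eqref{eqn:rho_cor_approx}, which is precisely the one that fails there.

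The paper closes both ends with a different cutoff and a different bound near $k=0$. Writing $m=n/g_n$ with $g_n\to\infty$, it splits at $k_0=n/g_n^2=m^2/n$. For $k\le k_0$ it abandons correlation entirely and uses the trivial marginal bound $\prok(|\sqrt nX|\le\eps,\,|\sqrt nY|\le\eps)\le\p(|\sqrt nZ|\le\eps)$, so that each summand is at most $\binom{n}{k}\E[S]$; the crucial input is then $\gamma>1$, which via \eqref{eqn:gauss_approx} gives $\E[S]\gtrsim_n((\gamma+1)/2)^m$, and one checks that $\sum_{k\le k_0}\binom{n}{k}\le(eg_n^2)^{n/g_n^2}$ is beaten by this factor. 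For $k_0\le k\le n/4$ the paper uses \eqref{eqn:rho_cor_approx} as you do, but now the worst factor is $(n^2/(4k_0(n-k_0)))^{m/2}\asymp g_n^m$, whose exponent $m\log g_n=(n/g_n)\log g_n$ is $o(n)$ for \emph{every} $g_n\to\infty$---which is exactly what your exponent $m\log m$ is not. (A minor aside: your displayed target carries a spurious $\gamma^{-2m}$; the $(2/\pi)^m(\eps/\sqrt n)^{2m}$ factors cancel in the ratio and no $\gamma$ survives. This slip only helps you and is not the issue.)
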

    
    \begin{proof}
		Note that \eqref{eqn:truncation_sym} follows from \eqref{eqn:truncation} by symmetry, so it suffices to prove \eqref{eqn:truncation}. We may write $m = n/g_n$ for some sequence $g_n$ such that $\omega(1) = g_n = o(n)$. For notational convenience, define
		\[
		f_n(\rho) = \p_{\hspace{-.6ex}\rho}( \abs{\sqrt{n} X} \leq \eps, \abs{\sqrt{n} Y} \leq \eps ). 
		\] 
		By Lemma \ref{lem:moments}, we have
		\begin{multline}
		\label{eqn:AB}
		\frac{2^n \sum_{k = 0}^{n/4} \binom{n}{k} \prok\left( \abs{\sqrt{n}X} \leq \eps \, , \, \, \abs{\sqrt{n}Y} \leq \eps  \right)^m}{\mathbb{E}[S]^2} 
		\\= \underbrace{ \sum_{k = 0}^{n/(g_n)^2} \frac{\binom{n}{k}}{2^n} \left( \frac{ f_n(\rho_k) }{f_n(0) } \right)^m }_{=: A} + \underbrace{  \sum_{k = n/(g_n)^2}^{n/4} \frac{\binom{n}{k}}{2^n} \left( \frac{ f_n(\rho_k) }{f_n(0) } \right)^m }_{=: B}. 
		\end{multline}
		For $\eps$ as above and $Z \sim N(0,1)$, we have by applying \eqref{eqn:gauss_approx} that 
		\begin{equation}
		\label{eqn:first_moment_large}
		2^n \p(|Z| < \eps/\sqrt{n})^m \geq 2^n  \left( \sqrt{\frac{2}{\pi n} } \eps \right)^m (1 - c\eps^2/n )^m  \gtrsim_n \left( \frac{\gamma + 1}{2} \right)^m, 
		\end{equation}
		where $c$ is an absolute constant. To obtain the right-hand-side, note that $\eps/\sqrt{n} \xrightarrow{n \to \infty} 0$ since $m = o(n)$. Thus, for $n$ sufficiently large it holds that 
		\[
		1 - c\eps^2/n \geq \frac{1}{2}\left( 1 + \frac{1}{\gamma} \right),
		\]
		which yields the right-hand-side of \eqref{eqn:first_moment_large}. Now using the crude bound $f_n(\rho_k) \leq \p(|\sqrt{n}Z| \leq \eps)$, \eqref{eqn:first_moment_large}, the fact that $f_n(0) = \p(|\sqrt{n}Z| \leq \eps)^2$, and the inequality
		\[
		\sum_{k = 0}^j \binom{n}{k} \leq \left( \frac{ne}{j} \right)^j, 
		\]
		we have
		\begin{align}
		A &= \sum_{k = 0}^{n/(g_n)^2} \frac{\binom{n}{k}}{2^n} \left( \frac{ f_n(\rho_k) }{f_n(0) } \right)^{m} \nonumber \\
		&\lesssim_n 
		\left(\frac{\gamma + 1}{2}\right)^{-m} ( e \, g_n^2 )^{n/g_n^2} \nonumber \\
		&= \exp\left(- \frac{n \log \frac{1}{2}(1 + \gamma)}{g_n} + \frac{ n}{g_n^2} + \frac{2n \log g_n}{g_n^2} \right)  = o(1)  \label{eqn:A_bound} 
		\end{align} 
		because $(1/2)(1 + \gamma) > 1$, $g_n \to \infty$, and $n/g_n \to \infty$ as $n \to \infty$. 
		
		By \eqref{eqn:gauss_approx} and \eqref{eqn:rho_cor_approx} (noting again that $f_n(0) = \p(|\sqrt{n} Z| \leq \eps)^2$ ), we have
		\begin{align}
		\label{eqn:rho_cor_bound}
		B = \sum_{k = n/(g_n)^2}^{n/4} \frac{\binom{n}{k}}{2^n} \left( \frac{ f_n(\rho_k) }{f_n(0) } \right)^m \lesssim_n (c')^m \sum_{k = n/(g_n)^2}^{n/4} \frac{\binom{n}{k}}{2^n} \left( \frac{n^2}{k(n - k)} \right)^{m/2}  
		\end{align}
		where $c'$ is an absolute constant. By the Hoeffding bound, letting $c''$ denote another absolute constant, we have
		\begin{align*}
		\eqref{eqn:rho_cor_bound} \lesssim_n (c'')^m g_n^m e^{-n/8} = 
		\exp\left( \frac{n \log c''}{g_n} + \frac{n \log g_n}{g_n} - \frac{n}{8} \right) = o(1)
		\end{align*}
		since $g_n \to \infty$. Since $A, B = o(1)$, we conclude by \eqref{eqn:AB} that \eqref{eqn:truncation} holds, as desired.
		
	\end{proof}
    
    \begin{lemma}
		\label{lem:logconcave}
		Suppose that $m=o(n)$ and set $\eps=\gamma2^{-n/m}\sqrt{n\pi/2}$. Then the function $\alpha \mapsto \phi_n(\alpha)$ defined in~\eqref{eqn:defphi_n} is asymptotically strictly concave on $(0.25,0.75)$. More precisely,
		$$
		\lim_{n \to \infty}\frac1n \frac{\partial^2}{\partial \alpha^2}\phi_n(\alpha)= - \frac{1}{\alpha(1 - \alpha)} < -4\,, \qquad \forall \,\alpha \in (0.25, 0.75)\, ,
		$$
		and the convergence is uniform over $\alpha \in (0.25, 0.75)$. Moreover, for $n$ large enough, $\phi_n(\alpha)$ has a unique maximum over $(0.25, 0.75)$ located at $\alpha = 0.5$.  
	\end{lemma}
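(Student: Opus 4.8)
The plan is to analyze $\phi_n(\alpha) = nh(\alpha) + m\log(\p_{1-2\alpha}[\,|\sqrt{n}X| \le \eps, |\sqrt{n}Y|\le \eps\,]) - \tfrac12\log\alpha(1-\alpha)$ term by term after dividing by $n$. The first term contributes $h''(\alpha) = -\tfrac{1}{\alpha(1-\alpha)}$ exactly. The last term contributes $O(1/n)$ and is negligible uniformly on the compact interval $[0.25,0.75]$. So the entire content of the lemma is to show that the middle term, $\tfrac{m}{n}\,\partial_\alpha^2 \log f_n(1-2\alpha)$ where $f_n(\rho) := \p_\rho[\,|\sqrt{n}X|\le\eps,\,|\sqrt{n}Y|\le\eps\,]$, tends to $0$ uniformly on $(0.25,0.75)$. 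Since $m/n \to 0$, it suffices to show that $\partial_\alpha^2 \log f_n(1-2\alpha)$ stays bounded (say by an absolute constant, or even by $o(n/m)$) uniformly for $\alpha \in [0.25,0.75]$, equivalently $\rho = 1-2\alpha \in [-0.5,0.5]$.

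The key step is therefore a small-ball asymptotic expansion of $f_n(\rho)$ for $\rho$ in the compact set $[-0.5,0.5]$. Writing $\delta_n := \eps/\sqrt{n} = \gamma 2^{-n/m}\sqrt{\pi/2} \to 0$, we have $f_n(\rho) = \p_\rho[\,|X|\le\delta_n,\,|Y|\le\delta_n\,] = \int_{-\delta_n}^{\delta_n}\int_{-\delta_n}^{\delta_n}\psi_\rho(x,y)\,dx\,dy$, where $\psi_\rho$ is the bivariate normal density with correlation $\rho$. Since $\delta_n \to 0$, a Taylor expansion of $\psi_\rho(x,y)$ around $(0,0)$ gives $f_n(\rho) = \tfrac{1}{2\pi\sqrt{1-\rho^2}}(2\delta_n)^2\bigl(1 + O(\delta_n^2)\bigr)$, where the $O(\delta_n^2)$ error is uniform in $\rho \in [-0.5,0.5]$ (the error involves $\partial_{xx}\psi_\rho, \partial_{yy}\psi_\rho$ evaluated at $0$ times $\delta_n^2$, and these second derivatives are bounded on the compact $\rho$-range). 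Hence $\log f_n(\rho) = \log(2\delta_n^2/\pi) - \tfrac12\log(1-\rho^2) + O(\delta_n^2)$. Crucially, the leading $n$-dependent term $\log(2\delta_n^2/\pi)$ does not depend on $\rho$, so it vanishes under $\partial_\rho$; what remains is $-\tfrac12\log(1-\rho^2)$ plus a small correction. I would differentiate the expansion in $\alpha$ (via $\rho = 1-2\alpha$, so $\partial_\alpha = -2\partial_\rho$): the main term yields $\partial_\alpha^2\bigl(-\tfrac12\log(1-(1-2\alpha)^2)\bigr) = \partial_\alpha^2\bigl(-\tfrac12\log(4\alpha(1-\alpha))\bigr)$, which is bounded on $[0.25,0.75]$, and the correction term's derivatives must be controlled. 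Multiplying by $m/n \to 0$ kills everything, giving the claimed limit $\tfrac1n\partial_\alpha^2\phi_n(\alpha) \to h''(\alpha) = -\tfrac{1}{\alpha(1-\alpha)} < -4$ on $(0.25,0.75)$, uniformly.

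The main obstacle I anticipate is justifying that the error terms in the small-ball expansion remain well-behaved after \emph{two} differentiations in $\alpha$ (equivalently in $\rho$), uniformly over $\rho \in [-0.5,0.5]$. Differentiating an integral like $\int_{-\delta_n}^{\delta_n}\int_{-\delta_n}^{\delta_n}\psi_\rho(x,y)\,dx\,dy$ twice in $\rho$ brings down factors from $\partial_\rho \psi_\rho$ and $\partial_\rho^2\psi_\rho$, which contain terms with denominators $(1-\rho^2)^k$ and polynomial factors in $x,y$; on the compact range $|\rho|\le 0.5$ these are uniformly bounded, and integrating against the tiny region $|x|,|y|\le\delta_n$ produces quantities of order $\delta_n^2$ with controllable dependence on $\rho$. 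One then needs $f_n(\rho)$ bounded below — which is immediate since $f_n(\rho) \ge \tfrac{1}{2\pi\sqrt{1-\rho^2}}(2\delta_n)^2 e^{-\delta_n^2}$ for $|\rho|\le 0.5$ — so that $\partial_\rho^2 \log f_n = \partial_\rho^2 f_n / f_n - (\partial_\rho f_n / f_n)^2$ is controlled. Once the expansion and its two derivatives are pinned down uniformly, the rest is bookkeeping: collect the three terms of $\tfrac1n\phi_n$, take $n\to\infty$ using $m/n\to 0$, and observe $-\tfrac{1}{\alpha(1-\alpha)} \le -4$ with equality only at $\alpha=1/2$, which also forces the unique interior maximum at $\alpha = 1/2$ by strict concavity together with the symmetry $\phi_n(\alpha) = \phi_n(1-\alpha)$ (visible from $h(\alpha)=h(1-\alpha)$, $f_n(1-2\alpha) = f_n(2\alpha-1)$, and $\alpha(1-\alpha)$ symmetric).
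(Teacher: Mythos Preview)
Your proposal is correct and follows essentially the same route as the paper: decompose $\phi_n$, reduce to showing $\partial_\rho^2 \log f_n(\rho) = O(1)$ uniformly on $[-1/2,1/2]$, and handle this via $\partial_\rho^2\log f_n = f_n''/f_n - (f_n'/f_n)^2$ by differentiating under the integral sign and using the shrinking-box limit (your ``obstacle'' paragraph is exactly the paper's argument). The only cosmetic difference is that for the location of the maximum the paper verifies $\phi_n'(1/2)=0$ by computing $f_n'(0)=\bigl(\int_{-\ell}^{\ell} x\phi(x)\,dx\bigr)^2=0$ directly, whereas you invoke the equivalent symmetry $\phi_n(\alpha)=\phi_n(1-\alpha)$.
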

    
    \begin{proof} 
		Because $|\partial_\alpha^2 \log \alpha(1 - \alpha)| = O(1)$ for $\alpha \in (0.25, 0.75),$ $m = o(n)$, and 
		\[
		h''(\alpha) = - \frac{1}{\alpha(1 - \alpha)},
		\]
		to verify the strict concavity of $\phi_n(\alpha)$, it suffices to show that
		\begin{equation}
		\label{eqn:cor2deriv_limit}
		\abs{ \frac{\partial^2}{\partial \alpha^2}  \log \p_{1 - 2\alpha}\left[ \abs{\sqrt{n}X} \leq \eps \, , \, \, \abs{\sqrt{n}Y} \leq \eps  \right]} = O(1), \quad \alpha \in (0.25, 0.75).
		\end{equation}
		
		For notational convenience, we write $f_n(\rho)=\p_{\hspace{-.5ex}\rho}\left( |\sqrt{n}X| \leq \eps \, , \, \, |\sqrt{n}Y| \leq \eps  \right)$. We study the logarithmic second derivative
		\begin{equation}
		\label{EQ:defJn}
		J_n(\rho):=\frac{f_n''(\rho)}{f_n(\rho)}-\big(\frac{f'_n(\rho)}{f_n(\rho)}\big)^2
		\end{equation}
		by controlling each term individually.
		
		First, recall that for any $\rho \in (-1,1)$, the distribution $\p_{\rho}$ admits a density with respect to the Lebesgue measure over $\mathbb{R}^2$ given by
		$$
		\psi_{\rho}(x,y)=\frac{1}{2\pi\sqrt{1-\rho^2}}\exp\big(-\frac{x^2-2\rho xy+y^2}{2-2\rho^2}\big).
		$$
		It holds that
		$$
		f_n'(\rho)=\iint_{[-\frac{\eps}{\sqrt{n}},\frac{\eps}{\sqrt{n}}]^2}\drho\psi_{\rho}(x,y)\ud x\ud y.
		$$
		Thus since $\eps=o(\sqrt{n})$ we get,
		\begin{equation}
		\label{EQ:f'/f}
		\lim_{n \to \infty}\frac{f'_n(\rho)}{f_n(\rho)}=\lim_{n \to \infty}\frac{\frac{\eps^2}{n}\iint_{[-\frac{\eps}{\sqrt{n}},\frac{\eps}{\sqrt{n}}]^2}\partial_\rho\psi_{\rho}(x,y)\ud x\ud y}{\frac{\eps^2}{n}\iint_{[-\frac{\eps}{\sqrt{n}},\frac{\eps}{\sqrt{n}}]^2}\psi_{\rho}(x,y)\ud x\ud y}=\frac{\drho \psi_{\rho}(0,0)}{\psi_{\rho}(0,0)}=\drho \log(\psi_\rho)(0,0).
		\end{equation}
		Similarly,
		\begin{equation}
		\label{eqn:f''/f}
		\lim_{n \to \infty}\frac{f''_n(\rho)}{f_n(\rho)}=\frac{\drho^2 \psi_{\rho}(0,0)}{\psi_{\rho}(0,0)}=\drho^2 \log(\psi_\rho)(0,0)+\big(\drho \log(\psi_\rho)(0,0)\big)^2.
		\end{equation}
		Together with~\eqref{EQ:defJn} and~\eqref{EQ:f'/f}, the above display yields
		$$
		\lim_{n \to \infty}J_n(\rho)=\frac{1+\rho^2}{(1-\rho^2)^2} = O(1), 
		$$
		if $\rho \in (-0.5, 0.5)$. Moreover, the convergence in \eqref{EQ:f'/f} and\eqref{eqn:f''/f} is uniform over $\rho \in (-0.5, 0.5)$. This is because the functions $\psi_\rho, \partial_\rho \psi_\rho,$ and $\partial^2_\rho \psi_\rho$ are all $C$-Lipschitz on $\mathbb{R}^2$ for some absolute constant $C > 0$, provided that we restrict $\rho \in (-0.5, 0.5)$. Next, changing variables via $\rho = 1- 2 \alpha$, this verifies \eqref{eqn:cor2deriv_limit}. Thus we have shown that $\phi_n(\alpha)$ is strictly concave on $(0.25, 0.75)$ for $n$ sufficiently large, completing the first part of the proof.
		
		The strict concavity verifies that $\phi_n(\alpha)$ has a unique maximum on $(0.25, 0.75)$. We show that it occurs at $\alpha = 0.5$. It is easy to check that both $h(\alpha)$ and $\alpha \mapsto \log \frac{1}{\sqrt{\alpha(1 - \alpha)}}$ have a critical point at $\alpha = 1/2$. So, applying the change of variables $\rho = 1 - 2 \alpha$, we just need to verify that $f_n'(0) = 0$. Let $\phi(x) = \frac{1}{\sqrt{2 \pi}} e^{-x^2/2}$ denote the density of a standard Gaussian and set $\ell = \eps/\sqrt{n}$. Straightforward calculus shows that 
		\[
		\frac{\partial}{\partial \rho}\bigg|_{\rho = 0} \psi_\rho(x,y) = x y \phi(x) \phi(y).  
		\]
		Therefore,
		\[
		\frac{\partial}{\partial \rho}\bigg|_{\rho = 0} f_n(\rho) = 
		\left( \int_{-\ell}^{\ell} x \phi(x) \right)^2 = 0.
		\]
		This proves the second part of the lemma, so we're done.
	\end{proof}	
	

	\section{Gaussian discrepancy in small linear dimension}
	\label{appendix:small_linear}
	
	The goal of this appendix is to prove the result below, which combined with Theorem \ref{thm:sub-linear_discrepancy} and Theorem 2 of \cite{ChaVem14} provides a precise characterization of asymptotic Gaussian discrepancy. 
	
	\begin{theorem}
		\label{thm:small_linear_discrepancy}
		Let $X_1, \ldots, X_{n} \stackrel{iid}{\sim} \cN(0,I_m)$ be independent standard Gaussian random vectors. Let $\gamma > 1$ denote an arbitrary absolute constant. Then there exists $\Delta = \Delta(\gamma)$ such that for $m \leq \Delta n$, 
		\begin{equation}	
		\label{eqn:linear_upper_bound}
		\liminf_{n \to \infty}\p\Big[\mc{D}(X_1, \ldots, X_n) \leq\gamma\sqrt{\frac{\pi n}{2}}2^{-n/m}\Big] \geq 0.99 \,.
		\end{equation}
		
	\end{theorem}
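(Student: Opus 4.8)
The plan is to reuse the second-moment machinery developed for Proposition \ref{prop:sub-linear}, but now track constants carefully since in the linear regime $m = \Theta(n)$ the ratio $\mathbb{E}[S^2]/\mathbb{E}[S]^2$ will no longer tend to $1$ — instead it will be bounded by a constant close to $1$, and we extract a $99\%$ probability via the Paley--Zygmund inequality rather than via \eqref{eqn:second_moment}. Specifically, recall $S$ counts signings of discrepancy at most $\eps = \gamma 2^{-n/m}\sqrt{\pi n/2}$, and by Lemma \ref{lem:moments} we have $\mathbb{E}[S] = 2^n \p(|Z| \le \eps/\sqrt n)^m$ and $\mathbb{E}[S^2] = 2^n \sum_k \binom{n}{k} \prok(|\sqrt n X| \le \eps, |\sqrt n Y| \le \eps)^m$. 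First I would verify that with $m \le \Delta n$ and $\Delta = \Delta(\gamma)$ sufficiently small, $\eps/\sqrt n = \gamma 2^{-n/m}\sqrt{\pi/2}$ is a bounded (indeed small, since $2^{-n/m} \le 2^{-1/\Delta}$) quantity, so the small-ball estimates of Lemma \ref{lem:small_ball_probs} still apply and give $\mathbb{E}[S] \gtrsim_n \big(\tfrac{\gamma+1}{2}\big)^m \to \infty$.

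Next I would redo the two-step analysis (truncation, then Laplace) of Section \ref{sec:sub-linear_discrepancy}. The truncation step (Lemma \ref{lem:truncation}) goes through essentially verbatim as long as $m = o(n/\log n)$ is replaced by the requirement that $\Delta$ be small enough that $(1/2)(1+\gamma) > (\text{const})^{\Delta}$ and the Hoeffding-type bound $e^{-n/8}$ dominates $(c'')^m g_n^m$; this is exactly where the smallness of $\Delta = \Delta(\gamma)$ is used, and one gets that the $k \le n/4$ and $k \ge 3n/4$ contributions are $o(\mathbb{E}[S]^2)$. For the central range, Lemma \ref{lem:logconcave} still shows $\phi_n$ is strictly concave on $(0.25,0.75)$ with maximum at $\alpha = 1/2$, but now the Laplace-method computation of Lemma \ref{lem:Laplace} will produce $L \lesssim_n \kappa(\gamma,\Delta)\, \mathbb{E}[S]^2$ for a constant $\kappa$ that can be made arbitrarily close to $1$ by shrinking $\Delta$ — the difference from the sub-linear case is that the term $m \log(\p_{1-2\alpha}[\cdots])$ in $\phi_n''(1/2)$ is now of order $n$ rather than $o(n)$, so $\phi_n''(1/2) \simeq_n -4n(1 + O(\Delta))$ rather than $\simeq_n -4n$, and one must carry this correction through the Gaussian integral. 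Combining, $\mathbb{E}[S^2] \le \kappa(\gamma,\Delta)\,\mathbb{E}[S]^2 (1+o(1))$.

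Finally, apply the Paley--Zygmund (second-moment) inequality in the form $\p[S > 0] \ge \mathbb{E}[S]^2/\mathbb{E}[S^2] \ge 1/\kappa(\gamma,\Delta) - o(1)$, and choose $\Delta = \Delta(\gamma)$ small enough that $1/\kappa(\gamma,\Delta) \ge 0.995$; then $\liminf_n \p[S > 0] \ge 0.99$, which is exactly \eqref{eqn:linear_upper_bound} since $\{S > 0\} = \{\mc{D}(X_1,\dots,X_n) \le \eps\}$. I expect the main obstacle to be the bookkeeping in the Laplace step: one must obtain an \emph{explicit} control on how $\kappa(\gamma,\Delta) \to 1$ as $\Delta \to 0$, which requires quantitative versions of the limits in Lemma \ref{lem:logconcave} (in particular a uniform-in-$n$, uniform-in-$\rho$ estimate on $f_n''(\rho)/f_n(\rho)$ showing the correction to $\phi_n''(1/2)$ is $O(\Delta n)$ rather than merely $o(n)$). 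A secondary subtlety is confirming that the constant absorbed in $\lesssim_n$ throughout does not secretly depend on $n$ in a way that spoils the $99\%$ threshold; keeping all inequalities in the sharp $u_n \le v_n(1+o(1))$ form handles this.
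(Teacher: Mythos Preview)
Your proposal is correct and follows essentially the same route as the paper: redo the truncation (the paper's Lemma~\ref{lem:linear_truncation}) and the log-concavity/Laplace step (Lemma~\ref{lem:linear_logconcave}) with $\delta = m/n \le \Delta(\gamma)$ fixed small, obtain $\mathbb{E}[S^2] \lesssim_n \kappa(\eta)\,\mathbb{E}[S]^2$ with $\kappa(\eta) = \sqrt{4/(4-\eta)}$, and conclude via \eqref{eqn:second_moment}. The paper parameterizes the Laplace correction by an auxiliary $\eta$ (set to $10^{-5}$) rather than directly by $\Delta$, but your anticipated ``quantitative version of Lemma~\ref{lem:logconcave}'' is exactly the content of Lemma~\ref{lem:linear_logconcave}, and your $\kappa(\gamma,\Delta) \to 1$ as $\Delta \to 0$ is precisely $\sqrt{4/(4-\eta)} \to 1$ as $\eta \to 0$.
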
 
	
	In particular, combining Theorem \ref{thm:small_linear_discrepancy} with Theorem 2 of \cite{ChaVem14}, we can now estimate the discrepancy up to constant factor, with probability asymptotically larger than $99 \%$, in the entire linear regime $m = \delta n$ where $\delta > 0$. Note that our guarantee on the probability here is weaker than that of the high-probability upper bound from Theorem \ref{thm:sub-linear_discrepancy}. The constant $0.99$ can be boosted to be arbitrarily close to $1$ by choosing smaller $\Delta$, though our techniques do not allow us to set the right-hand-side to be $1$ for any fixed $\Delta > 0$. 
	
	The closely related work of \cite{AubPerZde19} also considered Gaussian discrepancy in the linear regime $m = \delta n$ for fixed $\delta > 0$. Subject to a certain numerical hypothesis, the authors showed that
	\begin{equation}
	\label{eqn:AubPerZde_upper_bound}
	\liminf_{n \to \infty} \p\left[ \mc{D}(X_1, \ldots, X_n) \leq c(\delta) \sqrt{n} \right]   > 0,
	\end{equation}
	where $c(\delta)$, as a function of $\delta$, is the inverse of the function $x \mapsto \log(1/2)/\p[|Z| \leq x]$ and $Z \sim N(0,1)$. Their proof is an application of the second moment method, similar to ours. They also showed the following high-probability lower bound using the first moment method:
	\begin{equation}
	\label{eqn:AubPerZde_lower_bound}
	\lim_{n \to \infty} \p\left[ \mc{D}(X_1, \ldots, X_n) \geq (c(\delta) - \eps)\sqrt{n} \right]   = 1,
	\end{equation}
	where $\eps >0$ is an arbitrary absolute constant. \cite{AubPerZde19} conjectures, with strong evidence using heuristics from statistical mechanics, that the event in \eqref{eqn:AubPerZde_upper_bound} holds with probability tending to $1$. We remark that as $\delta \to 0$, we have $c(\delta)  = \Theta(2^{-1/\delta}) = \Theta(2^{-n/m})$. Theorem \ref{thm:small_linear_discrepancy} shows that with a constant factor's worth of `extra room' in the discrepancy threshold, the asymptotic probability in \eqref{eqn:AubPerZde_upper_bound} can be boosted to be arbitrarily close to $1$.  
	
	On the algorithmic side, using a mild extension of the techniques of \cite{ChaVem14}, in dimension $m = \delta n $ with $\delta \in (0,1)$, one can show an algorithmic bound of $O(\sqrt{ \delta n })$ on the discrepancy, and this is the best known result for this regime. Hence, Theorem \ref{thm:small_linear_discrepancy} suggests the possibility of a statistical-to-computational gap in the small linear regime $m = \delta n$ for $\delta \in (0, 1)$. Note that for $\delta > 1$, the results of \cite{ChaVem14} confirm an absence of statistical-to-computational gaps in the discrepancy.  
	
	
	The proof of Theorem \ref{thm:small_linear_discrepancy} follows closely the steps from Section \ref{sec:sub-linear_discrepancy} with some modifications. We begin with a truncation argument as in Lemma \ref{lem:truncation}.
	
	\begin{lemma}
		\label{lem:linear_truncation}
		Let $\gamma > 1$ denote an arbitrary absolute constant. Then there exists $\Delta = \Delta(\gamma)$ such that if $m = \delta n$ for $\delta \leq \Delta$ and $\eps = \eps(n) = \gamma 2^{-1/\delta} \sqrt{\pi n/2} $, then
		
		\begin{align}
		\label{eqn:linear_truncation}
		2^n \sum_{k = 0}^{n/4} \binom{n}{k} \prok\left( \abs{\sqrt{n}X} \leq \eps \, , \, \, \abs{\sqrt{n}Y} \leq \eps  \right)^m = o( \E[S]^2 ). \\
		2^n \sum_{k = 3n/4}^{n} \binom{n}{k} \prok\left( \abs{\sqrt{n}X} \leq \eps \, , \, \, \abs{\sqrt{n}Y} \leq \eps  \right)^m = o( \E[S]^2 ). \label{eqn:linear_truncation_sym}
		\end{align}
	\end{lemma}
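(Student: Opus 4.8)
The plan is to mirror the two-piece truncation in the proof of Lemma~\ref{lem:truncation}, the only new wrinkle being that with $m=\delta n$ the competing exponents no longer collapse, so constants must be tracked and $\delta$ taken small depending on $\gamma$. Abbreviate $g=1/\delta$ and $\ell:=\eps/\sqrt n=\gamma 2^{-1/\delta}\sqrt{\pi/2}$, and note that $\ell$ is a \emph{small} constant once $\delta$ is small. Put $\mu:=\p(|Z|\le\ell)$ with $Z\sim\cN(0,1)$; by Lemma~\ref{lem:moments}, $\E[S]=2^n\mu^m$ and $f_n(0)=\mu^2$, and by \eqref{eqn:gauss_approx} one has $\mu=\gamma 2^{-1/\delta}(1-r)$ with $0\le r=O(\ell^2)=O(2^{-2/\delta})$. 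By the symmetry $\rho_{n-k}=-\rho_k$ together with $f_n(-\rho)=f_n(\rho)$, it suffices to prove \eqref{eqn:linear_truncation}. Writing $f_n(\rho)=\p_{\rho}(|\sqrt n X|\le\eps,\,|\sqrt n Y|\le\eps)$, Lemma~\ref{lem:moments} shows the left side of \eqref{eqn:linear_truncation} divided by $\E[S]^2$ equals $\sum_{k\le n/4}\binom nk 2^{-n}\big(f_n(\rho_k)/f_n(0)\big)^m$, which I split at $k^\ast=\flo{n/g^2}=\flo{\delta^2 n}$ into a head $A$ (over $k\le k^\ast$) and a bulk $B$ (over $k^\ast<k\le n/4$).

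For the head I use the crude bound $f_n(\rho_k)\le\mu$, so $\big(f_n(\rho_k)/f_n(0)\big)^m\le\mu^{-m}=\gamma^{-m}2^{n}(1-r)^{-m}$; combined with $\sum_{k\le k^\ast}\binom nk\le e^{n h(\delta^2)}$ this gives $A\le\exp\big(n[h(\delta^2)-\delta\log\gamma+\delta|\log(1-r)|]\big)$. Since $h(\delta^2)=O(\delta^2\log(1/\delta))$ and $\delta|\log(1-r)|=O(\delta 2^{-2/\delta})$ are each $o(\delta\log\gamma)$ for $\delta$ small, the bracket is a negative constant and $A\to 0$. For the bulk I invoke \eqref{eqn:rho_cor_approx}, which gives $f_n(\rho_k)\le\frac{2}{\pi\sqrt{1-\rho_k^2}}\ell^2$; together with $f_n(0)=\mu^2\ge\frac2\pi\ell^2(1-r)^2$ and $1-\rho_k^2=4\alpha_k(1-\alpha_k)$ for $\alpha_k=k/n$, each summand of $B$ is at most $\exp(n\Psi(\alpha_k))$ where
\[
\Psi(\alpha)=h(\alpha)-\log 2-\tfrac\delta2\log\big(4\alpha(1-\alpha)\big)+2\delta|\log(1-r)|.
\]
On $[\delta^2,1/4]$ one has $h(\alpha)-\log 2\le h(1/4)-\log 2=:-\eta_0<0$ (an absolute constant), $-\tfrac\delta2\log(4\alpha(1-\alpha))\le\tfrac\delta2\log\frac{1}{4\delta^2(1-\delta^2)}=O(\delta\log(1/\delta))$, and $2\delta|\log(1-r)|=O(\delta 2^{-2/\delta})$, so $\sup_{[\delta^2,1/4]}\Psi\le-\eta_0/2$ once $\delta\le\Delta(\gamma)$ is small enough; hence $B\le n\,e^{-n\eta_0/2}\to 0$. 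Combining, $A+B\to 0$, which is \eqref{eqn:linear_truncation}, and \eqref{eqn:linear_truncation_sym} follows from the $k\mapsto n-k$ symmetry noted above.

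The main obstacle is exactly this calibration forcing $m\le\Delta(\gamma)n$: the slack from $\gamma>1$ must simultaneously dominate the binomial entropy $h(\delta^2)$ in the head and the blow-up $(1-\rho_k^2)^{-m/2}$ in the bulk, and the cut $k^\ast=\flo{\delta^2 n}$ must be placed so that $h(k^\ast/n)$ stays below the $\gamma$-slack while $k^\ast/n$ remains bounded away from $1/2$, keeping $h(\alpha)-\log 2$ uniformly negative on the bulk. A secondary point, absent in the sub-linear regime where the Gaussian small-ball correction was $o(1)$, is that here $r$ is a small constant, so $(1-r)^{-m}=e^{\Theta(\delta n)}$; one must use $r=O(2^{-2/\delta})$ to absorb this factor against both $\log\gamma$ and $\eta_0$.
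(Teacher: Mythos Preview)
Your argument is correct and follows essentially the same route as the paper's proof: the same split at $k^\ast=\lfloor\delta^2 n\rfloor$, the same crude bound $f_n(\rho_k)\le\mu$ for the head, and the same use of \eqref{eqn:rho_cor_approx} together with $1-\rho_k^2=4\alpha_k(1-\alpha_k)$ for the bulk. The only cosmetic differences are that the paper packages the head via the inequality $\sum_{k\le j}\binom nk\le (ne/j)^j$ and the bulk via Hoeffding's tail bound $\sum_{k\le n/4}\binom nk 2^{-n}\le e^{-n/8}$, whereas you use the entropy bound $\binom nk\le e^{nh(k/n)}$ termwise in both pieces; and the paper absorbs the Gaussian small-ball error into a $(\tfrac{\gamma+1}{2})^{-m}$ factor while you track it as $(1-r)^{-m}$ with $r=O(2^{-2/\delta})$, which is a slightly sharper (and arguably cleaner) way to see why $\Delta$ must depend on $\gamma$.
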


	\begin{proof}
		
%
		The proof follows closely that of Lemma \ref{lem:truncation}, setting $g_n \equiv 1/\delta$. We set 
		\[
		f_\delta(\rho) = \p_{\hspace{-.6ex}\rho}( \abs{\sqrt{n} X} \leq \eps, \abs{\sqrt{n} Y} \leq \eps ) = \p_{\hspace{-.6ex}\rho}( \abs{X} \leq \gamma 2^{-1/\delta} \sqrt{\pi/2} , \, \, \abs{ Y} \leq \gamma 2^{-1/\delta} \sqrt{\pi/2}  ). 
		\] 
		Note that the function $f_\delta$ is independent of $n$ by our choice of $\eps$. As in \eqref{eqn:AB} from Lemma \ref{lem:truncation}, we let 
		\[
		A = \sum_{k = 0}^{\delta^2 n} \frac{\binom{n}{k}}{2^n} \left( \frac{ f_\delta(\rho_k) }{f_\delta(0) } \right)^m, \quad B =  \sum_{k = \delta^2 n}^{n/4} \frac{\binom{n}{k}}{2^n} \left( \frac{ f_\delta(\rho_k) }{f_\delta(0) } \right)^m.
		\]
		
		Note that for $\delta$ sufficiently small (depending on $\gamma$), it holds that $\eps/\sqrt{n} \leq 1$. Therefore, similar to \eqref{eqn:first_moment_large}, we can apply the lower bound from Lemma \ref{lem:small_ball_probs} to conclude that
		\begin{equation}
		\label{eqn:first_moment_linear}
		2^n \p[|Z| < \eps/\sqrt{n}]^m \geq 2^n\left( \sqrt{\frac{2}{\pi n}} \eps  \right)^m(1 - c \eps^2/n)^m \geq \left( \frac{\gamma +1}{2} \right)^m,
		\end{equation} 
		Hence, as in \eqref{eqn:A_bound} we have
		\begin{align}
		A 
		\lesssim_n 
		\left(\frac{\gamma + 1}{2}\right)^{-m} ( e \delta^{-2} )^{\delta^2 n} = \exp\left(- \delta n \log\left(\frac{1}{2}(1 + \gamma)\right) + \delta^2 n + 2 \delta^2 n \log(1/\delta)  \right).  
		\end{align} 
		Hence, if $\delta \leq \Delta(\gamma)$ for $\Delta(\gamma)$ sufficiently small, then we have that $A = o(1)$.
		
		Similar to \eqref{eqn:rho_cor_bound}, we have by applying \eqref{eqn:gauss_approx} and \eqref{eqn:rho_cor_approx} that 
		\begin{equation}
		\label{eqn:linear_cor_bound}
		B \lesssim_n (c'(\gamma))^m \sum_{k = \delta^2 n}^{n/4} \frac{\binom{n}{k}}{2^n} \left( \frac{n^2}{k(n-k)} \right)^{m/2}.
		\end{equation}
		By the Hoeffding bound (letting $c''(\gamma)$ denote another constant depending on $\gamma$), 
		we have
		\begin{equation}
		\eqref{eqn:linear_cor_bound} \lesssim_n (c''(\gamma))^m \delta^{-m} e^{-n/8} = \exp\left( \delta n \log(c''(\gamma)) + \delta n \log(1/\delta)  - n/8\right) = o(1),
		\end{equation}
		provided that $\delta \leq \Delta(\gamma)$ for $\Delta(\gamma)$ sufficiently small. Since $A = o(1)$ as well for this setting of parameters, the lemma follows. 
	\end{proof}

	Our next lemma is a version of Lemma \ref{lem:logconcave} corresponding to the linear regime. We use the log-concavity of the function $\phi_n$ when we apply the Laplace method to the second moment, as in the sub-linear regime.   

	\begin{lemma}
		\label{lem:linear_logconcave}
		Let $\eta > 0$ and $\gamma > 1$ be arbitrary constants, and let $\Delta = \Delta(\gamma, \eta)$ denote a sufficiently small absolute constant. Suppose that $m = \delta n$ for $\delta \leq \Delta$, and set $\eps = \gamma 2^{-1/\delta} \sqrt{n \pi/2}$. Then the function $\alpha \mapsto \phi_n(\alpha)$ defined in \eqref{eqn:defphi_n} is strictly concave on $(0.25, 0.75)$. More precisely, 
		\begin{equation}
		\label{eqn:phi_deriv_linear}
		\frac1n \frac{\partial^2}{\partial \alpha^2}\phi_n(\alpha) \leq - \frac{1}{\alpha(1 - \alpha)} + \eta < -4 + \eta \,, \qquad \forall \,\alpha \in (0.25, 0.75).
		\end{equation}
		Moreover, $\phi_n(\alpha)$ has a unique maximum over $(0.25, 0.75)$ located at $\alpha = 0.5$.
	\end{lemma}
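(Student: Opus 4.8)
The plan is to follow the proof of Lemma~\ref{lem:logconcave} almost verbatim, exploiting that in the linear regime the function whose logarithm appears in~\eqref{eqn:defphi_n} does not depend on $n$. Indeed, with $m=\delta n$ and $\eps=\gamma 2^{-1/\delta}\sqrt{n\pi/2}$ we have $\ell:=\eps/\sqrt n=\gamma 2^{-1/\delta}\sqrt{\pi/2}$, a constant that tends to $0$ as $\delta\to 0$, and $\p_{1-2\alpha}\left[\,|\sqrt n X|\le\eps,\,|\sqrt n Y|\le\eps\,\right]=f_\delta(1-2\alpha)$ where $f_\delta(\rho):=\p_\rho\left[\,|X|\le\ell,\,|Y|\le\ell\,\right]$. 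Thus the hypothesis ``$\eps=o(\sqrt n)$'' used in Lemma~\ref{lem:logconcave} is here replaced by ``$\ell(\delta)$ small,'' i.e.\ ``$\delta$ small.''

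First I would record, using $h''(\alpha)=-1/(\alpha(1-\alpha))$ and the chain rule for $\rho=1-2\alpha$ (which produces a factor $4$), the decomposition
\[
\frac1n\,\phi_n''(\alpha)=h''(\alpha)+4\delta\,K_\delta(1-2\alpha)-\frac1n\,\frac{\partial^2}{\partial\alpha^2}\Big(\tfrac12\log\alpha(1-\alpha)\Big),\qquad K_\delta(\rho):=\frac{f_\delta''(\rho)}{f_\delta(\rho)}-\Big(\frac{f_\delta'(\rho)}{f_\delta(\rho)}\Big)^{2}.
\]
The last term is $O(1/n)$ uniformly on $(0.25,0.75)$, hence $\le\eta/2$ for $n$ large. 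The quantity $K_\delta(\rho)$ is precisely the $J_n(\rho)$ from the proof of Lemma~\ref{lem:logconcave} with $\ell$ in place of $\eps/\sqrt n$; so differentiating $f_\delta(\rho)=\iint_{[-\ell,\ell]^2}\psi_\rho(x,y)\,\ud x\,\ud y$ under the integral sign and using that $\psi_\rho$, $\partial_\rho\psi_\rho$, $\partial_\rho^2\psi_\rho$ are $C$-Lipschitz on $\mathbb{R}^2$ uniformly for $\rho\in(-0.5,0.5)$, one obtains $K_\delta(\rho)\to\partial_\rho^2\log\psi_\rho(0,0)=\dfrac{1+\rho^2}{(1-\rho^2)^2}$ as $\delta\to 0$, uniformly over $\rho\in(-0.5,0.5)$.

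Since $\dfrac{1+\rho^2}{(1-\rho^2)^2}\le 3$ on $(-0.5,0.5)$, there is an absolute constant $\Delta_0$ such that $|K_\delta(\rho)|\le 4$ for all $\rho\in(-0.5,0.5)$ whenever $\delta\le\Delta_0$. Setting $\Delta=\Delta(\gamma,\eta):=\min(\Delta_0,\eta/32)$ then forces $4\delta\,|K_\delta(1-2\alpha)|\le\eta/2$, and combined with the $O(1/n)$ bound this gives $\tfrac1n\phi_n''(\alpha)\le h''(\alpha)+\eta=-1/(\alpha(1-\alpha))+\eta$ on $(0.25,0.75)$ for $n$ large, which is~\eqref{eqn:phi_deriv_linear} (the bound $<-4+\eta$ then follows from $\alpha(1-\alpha)\le 1/4$), and in particular gives strict concavity. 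For the location of the maximizer I would argue as at the end of the proof of Lemma~\ref{lem:logconcave}: $h'(1/2)=0$, $\tfrac{d}{d\alpha}\big(-\tfrac12\log\alpha(1-\alpha)\big)\big|_{\alpha=1/2}=0$, and $\partial_\rho\psi_\rho(x,y)\big|_{\rho=0}=xy\,\phi(x)\phi(y)$ is odd in each variable, so $f_\delta'(0)=\big(\int_{-\ell}^{\ell}x\phi(x)\,\ud x\big)^{2}=0$; hence $\phi_n'(1/2)=0$, and strict concavity makes $\alpha=1/2$ the unique maximizer on $(0.25,0.75)$.

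The one genuinely new point relative to Lemma~\ref{lem:logconcave} --- and the step to watch --- is that one cannot simultaneously let $\delta\to 0$ and keep $\delta$ fixed, so one must check that $\Delta_0$ above is a genuine absolute constant with no hidden dependence on $\eta$. This is clear: the Lipschitz constants of $\psi_\rho,\partial_\rho\psi_\rho,\partial_\rho^2\psi_\rho$ over $\rho\in(-0.5,0.5)$ are absolute, so the rate of the uniform convergence $K_\delta\to\tfrac{1+\rho^2}{(1-\rho^2)^2}$, and hence the threshold below which $|K_\delta|\le 4$, is absolute as well. Beyond this, everything is a routine transcription of Lemma~\ref{lem:logconcave}.
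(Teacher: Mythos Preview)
Your proof is correct and follows essentially the same approach as the paper: both introduce $f_\delta(\rho)$ (which is $n$-independent in this regime), study the logarithmic second derivative $K_\delta(\rho)=J_\delta(\rho)$, show via the Lipschitzness of $\psi_\rho,\partial_\rho\psi_\rho,\partial_\rho^2\psi_\rho$ that it converges uniformly to $(1+\rho^2)/(1-\rho^2)^2$ as $\delta\to 0$, and then verify $f_\delta'(0)=0$ exactly as at the end of Lemma~\ref{lem:logconcave}. Your version is slightly more explicit in writing out the decomposition with the factor $4\delta$ and in quantifying the choice $\Delta=\min(\Delta_0,\eta/32)$; one small imprecision is that $\Delta_0$ does depend on $\gamma$ (since $\ell(\delta)=\gamma\,2^{-1/\delta}\sqrt{\pi/2}$), but this is harmless because the lemma only requires $\Delta=\Delta(\gamma,\eta)$.
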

	
	\begin{proof}
		Recall that
		\[
		f_\delta(\rho) = \p_{\hspace{-.6ex}\rho}( \abs{X} \leq \gamma 2^{-1/\delta} \sqrt{\pi/2} , \, \, \abs{ Y} \leq \gamma 2^{-1/\delta} \sqrt{\pi/2}  ). 
		\] 
		As in the proof of Lemma \ref{lem:logconcave}, it suffices to study the logarithmic second derivative with respect to $\rho$
		\begin{equation}
		\label{eqn:def_linear_Jn}
		J_{\delta}(\rho):=\frac{f_\delta''(\rho)}{f_\delta(\rho)}-\big(\frac{f'_\delta(\rho)}{f_\delta(\rho)}\big)^2
		\end{equation} 
		and show that $|J_\delta(\rho)| = O(1)$ for $\rho \in (-0.5, 0.5)$. Recall that $\psi_\rho$ denotes the density associated to $\p_{\hspace{-.6ex}\rho}$.
		
		Since $\eps/\sqrt{n} \to 0$ as $\delta \to 0$, we have, similar to \eqref{EQ:f'/f}, that
		\begin{equation}
		\label{EQ:linear_f'/f}
		\lim_{\delta \to 0}\frac{f'_\delta(\rho)}{f_\delta(\rho)}=\lim_{\delta \to 0}\frac{\frac{\eps^2}{n}\iint_{[-\frac{\eps}{\sqrt{n}},\frac{\eps}{\sqrt{n}}]^2}\partial_\rho\psi_{\rho}(x,y)\ud x\ud y}{\frac{\eps^2}{n}\iint_{[-\frac{\eps}{\sqrt{n}},\frac{\eps}{\sqrt{n}}]^2}\psi_{\rho}(x,y)\ud x\ud y}=\frac{\drho \psi_{\rho}(0,0)}{\psi_{\rho}(0,0)}=\drho \log(\psi_\rho)(0,0).
		\end{equation}
		And similar to \eqref{eqn:f''/f}, we have
		\begin{equation}
		\label{eqn:linear_f''/f}
		\lim_{\delta \to 0}\frac{f''_\delta(\rho)}{f_\delta(\rho)}=\frac{\drho^2 \psi_{\rho}(0,0)}{\psi_{\rho}(0,0)}=\drho^2 \log(\psi_\rho)(0,0)+\big(\drho \log(\psi_\rho)(0,0)\big)^2.
		\end{equation}
		It follows that
		\[
		\lim_{\delta \to 0} J_\delta(\rho) = \frac{1 + \rho^2}{(1 - \rho^2)^2} = O(1)
		\]
		for $\rho \in (-0.5, 0.5)$. Moreover, similar to the proof of Lemma \ref{lem:logconcave}, the convergence in \eqref{EQ:linear_f'/f} and \eqref{eqn:linear_f''/f} is uniform in $\delta$ by the Lipschitzness of $\psi_\rho, \partial_\rho \psi_\rho$, and $\partial_\rho^2 \psi_\rho$ over the interval $\rho \in (-0.5, 0.5)$. Therefore, if we take $\delta$ sufficiently small with respect to $\gamma, \eta$, then \eqref{eqn:phi_deriv_linear} holds. 
		
		Note that independent of $\eps$, we have that $\rho = 0$ is a critical point of $\phi_n$, as shown at the end of the proof of Lemma \ref{lem:logconcave}. Applying this and making the change of variables $\rho = 1 - 2\alpha$ verifies the last statement of Lemma \ref{lem:linear_logconcave}.
	\end{proof}
		
	\begin{proof}[Proof of Theorem \ref{thm:small_linear_discrepancy}]
		Recall from the definition in \eqref{eqn:main_term} that
		\[
		L:=2^n \sum_{k = n/4}^{3n/4} \binom{n}{k} \prok\left( \abs{\sqrt{n}X} \leq \eps \, , \, \, \abs{\sqrt{n}Y} \leq \eps  \right)^m.
		\]
		Applying Stirling's formula and a Riemann sum approximation as in \eqref{eqn:Stirling} and \eqref{eqn:Riemann_sub-linear}, respectively, we have that
		\begin{equation}
		\label{eqn:linear_Stirling}
		L \lesssim_n 2^n \sqrt{\frac{n}{2 \pi}} \int_{1/4}^{3/4} \exp( \phi_n(\alpha) ) d\alpha. 
		\end{equation}
		Since $ \phi_n(\alpha)/n$ is independent of $n$, we can apply the Laplace method directly \citep[see][]{ murray} along with Lemma \ref{lem:linear_logconcave} to see that
		\begin{equation}
		\label{eqn:linear_Laplace}
		\int_{1/4}^{3/4} \exp( \phi_n(\alpha) ) d\alpha \lesssim_n \sqrt{ \frac{2 \pi}{|\phi_n''(1/2)|}} \exp( \phi_n(1/2) ) \leq \sqrt{\frac{2 \pi}{n(4 - \eta)}} 2^{n+1} f_\delta(0)^m.  
		\end{equation}
		assuming $\delta \leq \Delta$ for $\Delta(\gamma, \eta)$ sufficiently small. 
		
		Therefore, by Lemma \ref{lem:truncation}, \eqref{eqn:linear_Stirling}, \eqref{eqn:linear_Laplace}, Lemma \ref{lem:moments}, the definition of $f_\delta$, and assuming that $\delta \leq \Delta$ for $\Delta(\gamma, \eta)$ sufficiently small, we have
		\[
		\mathbb{E}[S^2] \lesssim_n L \lesssim_n \sqrt{\frac{4}{4 - \eta}} (2^{n} \p[|\sqrt{n} Z| \leq \eps ]^m)^2  = \sqrt{\frac{4}{4 - \eta}} \mathbb{E}[S]^2. 
		\]
		Setting $\eta = 10^{-5}$, we have by the second moment method \eqref{eqn:second_moment} that
		\[
		\p[S > 0] \geq \frac{\mathbb{E}[S]^2}{\mathbb{E}[S^2]} \gtrsim_n \sqrt{1 - \eta/4} \geq 0.99,
		\]
		completing the proof of Theorem \ref{thm:small_linear_discrepancy}. 
	\end{proof} 
	
	\section{The REDUCE algorithm} 
    \label{appendix:reduce} 

    In this appendix we define the \textbf{REDUCE} algorithm, a simple procedure for combining a set of points into a single point whose $\ell_\infty$-norm is not too large.  This algorithm \textbf{REDUCE} is described explicitly below, and its main property of use is described in Lemma \ref{lem:reduce}, whose proof is given below. The analysis of this algorithm uses feasibility as in the classical proof of the Beck-Fiala theorem \citep{alonspencer}. 
	
	\vspace{0.3 cm}
	\noindent \textbf{REDUCE:}
	
	\textbf{Input}: $m \times N$ matrix $\mf{X}$ with columns $X_1, \ldots, X_N$. 
	
	If $N < m$: 
	
	\indent \indent Choose $s \in \{\pm 1\}^N$ arbitrarily. 
	
	Else:
	\begin{enumerate}
		\item Let $s\rp{0} = \mf{0} \in \mb{R}^N$, and let $T_0 = \emptyset$.
		\item For $k = 0, 1, 2, \ldots $
		
		\indent If $|T_{k}| < N - m$
		\begin{enumerate}
			\item Find (e.g., using Gaussian elimination) a vector $v \neq \mf{0} \in \mb{R}^N$ such that $\mf{X} v = \mf{0}$ and $v_j = 0$ for all $j \in T_k$. 
			\item Define $s\rp{k+1} = s\rp{k} + \lambda v$, where $\lambda>0$ is the smallest real number such that $|s\rp{k}_j + \lambda v_j| = 1$ for some $j \notin T_k$. 
			\item Define $T_{k+1} = \{ j: \abs{s\rp{k+1}} = 1 \}$.  
		\end{enumerate}
		Else: $s := s\rp{k}$. BREAK
	\end{enumerate} 
	
	\textbf{Output:} $\sigma := \sgn(s)$
	\vspace{0.3 cm}

    \begin{proof}[Proof of Lemma \ref{lem:reduce}] 
		We suppose that $N > m$, otherwise, an arbitrary choice of signing gives the desired upper bound. 
		Suppose that we are in the $k$-th iteration of Step 2 of \textbf{REDUCE}. If $|T_k| < N - m$, then there are at most $m + |T_k| < N$ linear constraints on the vector $v \in \mathbb{R}^N$ in step 2(a). So by dimension-counting, there exists a nonempty subspace of feasible $v$. Next if $s\rp{k} \in [-1,1]^m$, then $\lambda$ from step 2(b) exists and furthermore $s\rp{k+1} \in [-1,1]^m$ by the choice of $j$ in step 2(b). Also, we have that $T_k \subset T_{k+1}$; if $|(s\rp{k})_j| = 1$, then the $j$-th coordinate remains unchanged for future iterations of step 2. Finally, $|T_k|$ increases at least by $1$ in each iteration, so the loop in step 2 is guaranteed to terminate after at most $N - m$ iterations. 
		
		It remains to verify that $\sigma$ satisfies the upper bound from Lemma \ref{lem:reduce}. Observe that $s \in [-1,1]^m$, $T := |\{j: \abs{s_j} = 1\}| \geq N - m$, and
		\[
		\sum_{i= 1}^N  s_i X_i = \mf{0}. 
		\] Therefore,
		\begin{align*}
		\abs{\sum_{i = 1}^N \sigma_i X_i}_\infty &\leq 
		\abs{\sum_{i = 1}^N s_i X_i}_\infty + \abs{\sum_{i \notin T} (\sgn(s_i) - s_i) X_i}_\infty  \\
		&\leq \max_{ S \subset [N]: |S| = m } \sum_{i \in S} \abs{X_i}_\infty. 
		\end{align*}  
	\end{proof}
	
	\section{Proof of Proposition\texorpdfstring{ \MakeLowercase{\ref{prop:many_differences}}}{} }
	\label{appendix:many_differences}
	
    We need to show that at each application of resampling in \textbf{GKK}, a small number of points are labeled `bad'. As discussed in the introduction, the restriction on the dimension $m = O(\sqrt{\log{n}})$ is needed in our analysis to show that the probability of a point being labeled `bad' is small. 

	We briefly describe the intuition for this condition by considering the first phase of the algorithm \textbf{GKK}. Suppose, for example, that $X_1, \ldots, X_n$ are independent triangularly distributed vectors on $[-1, 1]^m$. In step 1 of \textbf{PRDC}, the cube $[-1,1]^m$ is partitioned into sub-cubes of side length $\alpha' = n^{-\Omega(1/m)}$. Next, we enter the resampling step. We show below that the probability of a point being labeled `bad' is at most $O( 2^m m \alpha') = O(  2^m m n^{-\Omega(1/m)})$. Roughly speaking, the reason for this is that there are $2^m (\alpha')^{-m}$ sub-cubes, and the probability of a point in a particular sub-cube being labeled `bad' is controlled by the product of three terms: 1) the $\ell_1$-Lipschitz constant of the density of $X_1$, which is $1$, 2) the $\ell_1$-diameter of the sub-cube, which is $m \alpha'$, and 3) the volume of the sub-cube, which is $(\alpha')^m$. Hence, the probability of a point being labeled `bad' is a small constant, assuming that $m = O(\sqrt{\log{n}})$. 
	
	The next two lemmas present the above argument in full detail. 

	
	
	\begin{lemma}
		\label{lem:Lipschitz_density}
		Let $\rho:[-\Delta,\Delta] \to \mb{R}$ denote a pdf that is $L$-Lipschitz and bounded above by some constant $D > 0$. Let $g = \rho^{\otimes m}:[-\Delta,\Delta]^m \to \mb{R}$ denote the density of the distribution of $m$ independent random variables, each individually distributed according to $\rho$. Then $g$ is $L'$-Lipschitz in the $\ell_1$ norm:
		\[
		\forall \, x, y \in [-\Delta, \Delta]^m, \quad |g(x) - g(y)| \leq L' \abs{x - y}_1, 
		\]
		where 
		\[
		L' = L D^{m-1} .
		\]
	\end{lemma}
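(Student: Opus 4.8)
The plan is to write $g(x) - g(y)$ as a telescoping sum in which the coordinates of $x$ are swapped to those of $y$ one at a time. Concretely, for $x = (x_1, \ldots, x_m)$ and $y = (y_1, \ldots, y_m)$ in $[-\Delta, \Delta]^m$, define the intermediate points $z^{(k)} = (y_1, \ldots, y_k, x_{k+1}, \ldots, x_m)$ for $0 \le k \le m$, so that $z^{(0)} = x$ and $z^{(m)} = y$. Then
\[
g(x) - g(y) = \sum_{k=1}^{m} \big( g(z^{(k-1)}) - g(z^{(k)}) \big),
\]
and $z^{(k-1)}$ and $z^{(k)}$ differ only in the $k$-th coordinate ($x_k$ versus $y_k$).

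Next I would bound each summand. Since $g = \rho^{\otimes m}$, the two points $z^{(k-1)}$ and $z^{(k)}$ share all factors except the $k$-th, so
\[
g(z^{(k-1)}) - g(z^{(k)}) = \Big( \prod_{j \ne k} \rho(\cdot_j) \Big) \big( \rho(x_k) - \rho(y_k) \big),
\]
where the product runs over the common coordinates (which are $y_1, \ldots, y_{k-1}, x_{k+1}, \ldots, x_m$). Each of the $m-1$ factors $\rho(\cdot_j)$ is at most $D$ by the uniform bound on $\rho$, and $|\rho(x_k) - \rho(y_k)| \le L |x_k - y_k|$ by the $L$-Lipschitz hypothesis. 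Hence
\[
\big| g(z^{(k-1)}) - g(z^{(k)}) \big| \le L D^{m-1} |x_k - y_k|.
\]
Summing over $k$ and using $\sum_{k=1}^m |x_k - y_k| = |x - y|_1$ gives $|g(x) - g(y)| \le L D^{m-1} |x - y|_1$, which is the claim with $L' = L D^{m-1}$.

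There is essentially no obstacle here — the argument is a routine one-coordinate-at-a-time telescoping estimate for product functions, and the only things used are the product structure of $g$, the uniform upper bound $D$, and the Lipschitz constant $L$ of the one-dimensional marginal. The one minor point to be careful about is that the shared factors in each summand involve a mix of $y_j$'s and $x_j$'s, but since all coordinates lie in $[-\Delta, \Delta]$ where $\rho \le D$, the bound $D^{m-1}$ applies uniformly regardless of which point each factor is evaluated at.
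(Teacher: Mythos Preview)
Your proof is correct and follows essentially the same approach as the paper: a one-coordinate-at-a-time telescoping argument, bounding each summand by $LD^{m-1}|x_k - y_k|$ using the product structure of $g$, the $L$-Lipschitz bound on $\rho$, and the uniform bound $\rho \le D$ on the remaining $m-1$ factors. Your notation is in fact cleaner than the paper's (which has minor indexing inconsistencies), but the underlying argument is identical.
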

	
	\begin{proof}
		
	    Define $x^1 = x$, and for $2 \leq k \leq m$, define
		\[
		x^k = x^{k-1} + \mf{e}_k(y_k - x_k),
		\]
		where $\mf{e}_k$ denotes the $k$-th elementary basis vector. Then we have
		\begin{align*}
		\abs{g(x) - g(y)} &\leq \sum_{k = 1}^m \abs{ g(x^{k}) - g(x^{k-1}) } 
		\left(\prod_{i < k} g(y_i)\right) \left(  \prod_{i > k} g(x_i) \right) 
		\\
		&\leq \sum_{k = 1}^m L D^{m-1} |x_k - y_k| \\
		&= L D^{m-1} \abs{x - y}_1.
		\end{align*}
	\end{proof}
	
	\begin{lemma}
		\label{lem:lost_bad_points}
		Let $S = X_1, \ldots, X_s \in [-\Delta,\Delta]^m$ denote a sample of iid random vectors, each having a joint density $g = \rho^{\otimes m}$, where $\rho$ is $L$-Lipschitz and bounded above by $D > 0$.  Let $B$ denote the bad points created in step 2 of \textbf{\emph{PRDC}} run on the input $S, v = 0, \alpha = \Delta,$ and $g$. If $m \leq C \sqrt{\log(s)/\max(1, \log \Delta)}$ for a sufficiently small constant $C = C(D, L) > 0$, then 
		\[
		\p[ |B| > 0.1 s ] \leq \exp(-c_1 s),
		\]
		where $c_1$ is an absolute constant.
	\end{lemma}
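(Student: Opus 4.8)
The plan is to reduce Lemma \ref{lem:lost_bad_points} to a Chernoff bound for a binomial random variable. There are two ingredients: first, that the labels assigned in the \textbf{Resample} step of \textbf{PRDC} are mutually independent across the points of $S$, so that $|B|$ is \emph{exactly} a sum of i.i.d.\ Bernoulli variables; and second, that the single-point probability of being labeled `bad' is at most a small absolute constant, which is precisely where the hypothesis $m \le C\sqrt{\log(s)/\max(1,\log\Delta)}$ is used.

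First I would observe that, conditionally on $X_1,\dots,X_s$, the labels are assigned by independent coin flips, one per point, and that marginally each $X_i$ has the same law; hence $|B|=\sum_{i=1}^s \mathbf 1[X_i\text{ is bad}]$ is a sum of $s$ i.i.d.\ $\mathrm{Bernoulli}(p)$ variables, i.e.\ $|B|\sim \mathrm{Bin}(s,p)$, where, writing $C_1,\dots,C_N$ for the sub-cubes produced by \textbf{PRDC} on this input ($N = 2^m\lceil s^{1/(4m)}\rceil^m$, each $C_j$ a closed cube of side length $\alpha_2 = \Delta/\lceil s^{1/(4m)}\rceil$, so that $\sum_j\mathrm{Vol}(C_j)=(2\Delta)^m$), Bayes' rule gives
\[
p \;=\; \sum_{j=1}^{N}\int_{C_j}\Big(g(x)-\min_{y\in C_j} g(y)\Big)\,\ud x .
\]

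Next I would bound $p$ using Lemma \ref{lem:Lipschitz_density}, which says $g$ is $L'$-Lipschitz in the $\ell_1$-norm with $L' = LD^{m-1}$. Since the $\ell_1$-diameter of $C_j$ is $m\alpha_2$, we get $g(x)-\min_{y\in C_j}g(y)\le L'm\alpha_2$ for every $x\in C_j$, whence
\[
p \;\le\; L'm\alpha_2\sum_{j=1}^N\mathrm{Vol}(C_j)\;=\; LD^{m-1}\,m\,\alpha_2\,(2\Delta)^m\;\le\;\frac{Lm\Delta}{D}\,(2D\Delta)^m\,s^{-1/(4m)},
\]
using $\alpha_2\le \Delta\, s^{-1/(4m)}$. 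Rewriting the right side as $\tfrac{Lm\Delta}{D}\exp\!\big(m\log(2D\Delta)-\tfrac{\log s}{4m}\big)$ and using that $\log(2D\Delta)\le c_D\max(1,\log\Delta)$ for a constant $c_D=c_D(D)$ (assuming $D\ge1$ without loss of generality, and separating the cases $\Delta\ge1$, $\Delta<1$), the hypothesis $m\le C\sqrt{\log(s)/\max(1,\log\Delta)}$ yields $m\log(2D\Delta)\le c_DC^2\,(\log s)/m\le \tfrac12\cdot\tfrac{\log s}{4m}$ once $C$ is small enough, so $p\le \tfrac{Lm\Delta}{D}\,s^{-1/(8m)}$; a final elementary estimate (the term $\log(Lm\Delta/D)$ is $O(\log\log s)+\log\Delta$, dominated by $\tfrac{\log s}{8m}$, which is at least of order $\sqrt{\log s}$) gives $p\le 0.05$ for $s$ above an absolute threshold.

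Finally I would apply the Chernoff bound to $\mathrm{Bin}(s,p)$ with $p\le 0.05$: since $0.1>0.05$ and the Kullback--Leibler rate is monotone, $\p[|B|>0.1s]\le \exp\!\big(-s\,D_{\mathrm{KL}}(0.1\,\|\,0.05)\big)\le \exp(-c_1 s)$ with $c_1=D_{\mathrm{KL}}(0.1\|0.05)>0$ an absolute constant (the finitely many small values of $s$ are absorbed by shrinking $c_1$). The one genuinely delicate point is the calibration of $C$: the factor $\exp(m\log(2D\Delta))$ records (number of sub-cubes)$\times$(per-cube error) and must be made negligible against the $s^{-1/(4m)}$ gained from the fine partition; equating those two exponents is exactly what forces $m = O(\sqrt{\log s/\max(1,\log\Delta)})$. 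Dealing with the sign of $\log\Delta$ and reducing to $D\ge1$ are routine bookkeeping, and the independence step is immediate from the structure of the \textbf{Resample} step.
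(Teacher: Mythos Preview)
Your proposal is correct and follows essentially the same approach as the paper: both compute $p=\p[X_1\text{ is bad}]=\sum_j\int_{C_j}(g(x)-\min_{y\in C_j}g(y))\,\ud x$, bound it by $(2\Delta)^m LD^{m-1}m\alpha'$ via the $\ell_1$-Lipschitz estimate of Lemma~\ref{lem:Lipschitz_density}, argue that this is at most $0.05$ under the dimension constraint, and then apply a concentration inequality to $|B|$. The only cosmetic differences are that you spell out more carefully why the exponent $m\log(2D\Delta)$ is dominated by $\tfrac{\log s}{4m}$ (the paper simply asserts $p\le 0.05$ for $C$ small enough), and you invoke the Chernoff/KL bound where the paper uses Hoeffding's inequality---both yield $\exp(-c_1 s)$.
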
 
	
	\begin{proof}
		Let $\alpha' = \Delta/\ceil{s^{1/(4m)}}$. Let $C_1, \ldots, C_N$ denote the sub-cubes of side length $\alpha'$ formed by partitioning (step 1 of \textbf{PRDC}), recalling that $N = (2\Delta)^m (\alpha')^{-m}$. Since $X_1, \ldots, X_s$ are independent, we first study the probability that $X_1$ is bad and then apply a Hoeffding bound.
		\begin{align*}
		\p[ X_1 \, \text{is bad} ] &= \sum_{j = 1}^N \int_{C_j} \left(1 - \frac{\min_{y \in C_j} g(y)}{g(x)}\right) g(x) \, dx \\
		&= \sum_{j = 1}^N \int_{C_j} \left( g(x) - \min_{y \in C_j} g(y) \right) \,dx \\
		&\leq \sum_{j = 1}^N \mr{Vol}(C_j) L D^{m-1} \, \mr{diam}_{\ell_1}(C_j) \\
		&= (2 \Delta)^m L D^{m-1} m \alpha',
		\end{align*}
		where we measure the diameter in the $\ell_1$ norm and applied Lemma \ref{lem:Lipschitz_density}. Since \[m \leq C \sqrt{\log(s)/\max(1, \log \Delta)},\] we have
		\[
		p := (2 \Delta)^m L D^{m-1} m \alpha' \leq (2 \Delta)^m D^{m-1} m \Delta s^{-1/(4m)} \leq 0.05
		\]
		for $C = C(D, L) > 0$ sufficiently small. Since the $X_i$'s are independent, by Hoeffding's inequality,
		\[
		\p[ \, |B| \geq 0.1s] \leq  \p[ \, |B| - ps \geq   0.05s] \leq 
		\exp\left(-\frac{2(0.05)^2s^2}{s}\right),
		\]
		which completes the proof.
	\end{proof}
%
	
	\begin{proof}[Proof of Proposition \ref{prop:many_differences}]
		The proof is by induction on $t$. We first handle the base case $t =1$. By assumption the matrix $\mf{X}$ has independent entries, each having a pdf which is $L$-Lipschitz and bounded above by $D$. By Lemma \ref{lem:lost_bad_points}, with probability at least $1 - \exp(-c_1n)$, there are at most $0.1 n$ points labeled `bad'. Since $m \leq C \sqrt{\log(n)/\max(1, \log \Delta)}$, for $C$ sufficiently small, there are at most $N_1 \leq (2 \Delta)^m \alpha_2^{-m} \leq n^{0.6} $ sub-cubes created by partitioning (step 1 of \textbf{PRDC}). Thus, at most that many good points are leftover after random differencing in step 3 of \textbf{PRDC}. We conclude that with probability at least $1 -\exp(-c_1n)$, there are at least 
		\begin{equation}
		\label{eqn:halving}
		\frac{n - 0.01n - n^{0.6}}{2} \geq 0.4 n
		\end{equation}
		points in $G_1'$, the set of random differences. 
		
		Now we show the inductive step. Let $\mc{E}$ denote the event $|S_j| = n_j$ where $n_j \geq (0.3)^{j-1} n$ for all $1 \leq j \leq t$. It suffices to show that
		\begin{equation}
		\label{eq:many_differences}
		\p\left[ \, |G'_{t+1}| \leq 0.4 n_t \, \bigg| \, \mc{E} \right] \leq \exp\left( - c_1 \sqrt{n} \right).
		\end{equation}
		 By Proposition \ref{prop:GKK_sizes_condl_distr} in Appendix \ref{app:distribution_properties}, conditionally on $\mc{E}$, the distribution of the points in $S_t= \y_1, \ldots, \y_{n_t}$ are iid and follow a triangular distribution on $[-\alpha_t, \alpha_t]^m$. Hence, we have by Lemma \ref{lem:Lipschitz_density} that the density of $\alpha^{-1}_t\y_1, \ldots, \alpha^{-1}_t\y_{n_t}$ is $1$-Lipschitz with respect to $\ell_1$ and is bounded above by $D= 1$. Note that, by an application of the chain rule, the probability $\alpha^{-1}_t\y_j$ is labeled `good' using the triangular density on $[-1, 1]^m$ for $g$ in step 2 of \textbf{PRDC} is the same as the probability that $\y_j$ is labeled `good' using the triangular density on $[-\alpha_t, \alpha_t]^m$ for $g$ in step 2 of \textbf{PRDC}. 
		
		Since $t \leq \ceil{C^* \log n}$ and $n_j \geq (0.3)^{j-1} n$ for $1 \leq j \leq t$, we have that $n_t \geq \sqrt{n}$. In particular, for $C>0$ sufficiently small, $s = \sqrt{n}$ satisfies the required lower bound of Lemma \ref{lem:lost_bad_points}. Therefore, 
		\[
		\p\left[\, |B_{t+1}| \geq 0.1n_t \, \bigg| \, \mc{E}\right] \leq \exp(-c_1 n_t) \leq \exp(-c_1 \sqrt{n}). 
		\]  
		For $C$ sufficiently small and $m \leq C \sqrt{\log n}$, there are at most $N_t \leq 2^m n_t^{1/4} \leq n_t^{0.6}$ sub-cubes formed in step 1 of \textbf{PRDC}. Hence, at most $n_t^{0.6}$ good points are leftover after the random differencing step of \textbf{PRDC}. Halving the number of remaining points as in \eqref{eqn:halving} of the base case, we conclude that \eqref{eq:many_differences} holds with the desired probability in phase $t$.
	\end{proof}

	\section{Proof of Proposition\texorpdfstring{ \MakeLowercase{\ref{prop:efficient_clean_up}}}{} } 
	\label{appendix:efficient_clean_up}
	
	The goal of this subsection is to prove Proposition \ref{prop:efficient_clean_up}. The next technical lemma implies that a negligible fraction of points are lost in step 4(b), the clean-up step of \textbf{PRDC}.  
	
	\begin{lemma}
		\label{lem:ell_2_argument}
		Let $\alpha = \ceil{s^{1/(4m)}}^{-1}$, and let $\mc{U} = \uu_1, \ldots, \uu_s \stackrel{iid}{\sim} \mr{Tri}[-\alpha, \alpha]^m$ denote a sample from a triangular distribution. Let $v\rp{0} \in \mathbb{R}^m$ denote a random vector independent of $\mc{U}$ satisfying $\abs{v\rp{0}}_2 \leq R m^{3/2}$ for some absolute constant $R > 0$. For $k = 1, 2, \ldots, $ define a sequence of random vectors
		\[
		v\rp{k} =  v\rp{k-1} + a^* \uu_k 
		\]
		where
		\[
		a^* = \argmin_{a \in \{ \pm 1 \} } \abs{ v\rp{k-1} + a \uu_k }_2.
		\]
		Let $c^*$ denote the absolute constant from Claim \ref{claim:submg}. Suppose that $R'\geq 2/c^*$ and
		\[
		K \geq  \frac{8R^2 m^2 \sqrt{s}}{R'c^*}.   
		\]
		Then with probability at least
		\[
		1 - \exp\left( - \frac{(c^*)^2K}{8m} \right)
		\]
		there exists $k \leq K$ such that
		\[
		|v\rp{k}|_2 \leq R' m \alpha.
		\]
	\end{lemma}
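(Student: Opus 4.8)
The plan is to run an ``$\ell_2$-potential'' argument in the spirit of stochastic gradient descent, and to upgrade the resulting in-expectation decrease to a high-probability statement via the Azuma--Hoeffding inequality. Throughout I would write $r_k := |v\rp{k}|_2$ and $\mathcal{F}_{k-1} := \sigma(v\rp{0},\uu_1,\dots,\uu_{k-1})$, so that $v\rp{k-1}$ is $\mathcal{F}_{k-1}$-measurable while $\uu_k$ is a fresh independent draw from $\mathrm{Tri}[-\alpha,\alpha]^m$. The first observation is that, because $a^*$ minimizes $|v\rp{k-1}+a\uu_k|_2$ over $a\in\{\pm1\}$, expanding the squares gives the exact recursion
\[
r_k^2 \;=\; r_{k-1}^2 + |\uu_k|_2^2 - 2\,\bigl|\langle v\rp{k-1},\uu_k\rangle\bigr|\,,
\]
and that each coordinate of $\uu_k$ lies in $[-\alpha,\alpha]$, so $|\uu_k|_2 \le \sqrt{m}\,\alpha$ deterministically. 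I would normalize the ``progress'' by setting $Z_k := |\langle v\rp{k-1},\uu_k\rangle|/r_{k-1}$ whenever $r_{k-1}>0$; by Cauchy--Schwarz $0\le Z_k\le \sqrt{m}\,\alpha$, and the recursion becomes $r_k^2-r_{k-1}^2 = |\uu_k|_2^2 - 2 r_{k-1} Z_k$. Let $\tau := \inf\{k\ge0:r_k\le R'm\alpha\}$; the goal is $\p[\tau\le K]\ge 1-\exp(-(c^*)^2K/(8m))$.

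Next I would extract the deterministic consequence of \emph{not} having stopped. On $\{\tau>K\}$ we have $r_{k-1}>R'm\alpha>0$ for all $1\le k\le K$, hence $r_k^2-r_{k-1}^2 \le m\alpha^2 - 2R'm\alpha\,Z_k$; telescoping from $k=1$ to $K$ and using $r_K^2\ge0$ together with $r_0^2\le R^2m^3$ bounds $\sum_{k=1}^K Z_k$ above by $(R^2m^3+Km\alpha^2)/(2R'm\alpha)$. A short computation, using $\alpha^{-2}=\lceil s^{1/(4m)}\rceil^2\le 4\sqrt{s}$ and then the hypotheses $K\ge 8R^2m^2\sqrt{s}/(R'c^*)$ and $R'\ge 2/c^*$ to handle the two summands separately, shows this is at most $c^*\alpha K/2$. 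Thus $\{\tau>K\}\subseteq\bigl\{\sum_{k=1}^K Z_k\le c^*\alpha K/2\bigr\}$.

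It then remains to bound the probability that $\sum_{k=1}^K Z_k$ is that small. Conditionally on $\mathcal{F}_{k-1}$ we have $Z_k=|\langle w_{k-1},\uu_k\rangle|$ for the fixed unit vector $w_{k-1}=v\rp{k-1}/r_{k-1}$, so Claim \ref{claim:submg} (the Khintchine-type lower bound on the first absolute moment of a linear form in a triangular vector) gives $\mathbb{E}[Z_k\mid\mathcal{F}_{k-1}]\ge c^*\alpha$ on $\{r_{k-1}>0\}$. The differences $D_k:=Z_k-\mathbb{E}[Z_k\mid\mathcal{F}_{k-1}]$ form a martingale-difference sequence with $|D_k|\le\sqrt{m}\,\alpha$, so Azuma--Hoeffding yields $\p\bigl[\sum_{k=1}^K D_k\le -c^*\alpha K/2\bigr]\le\exp\bigl(-(c^*\alpha K/2)^2/(2Km\alpha^2)\bigr) = \exp\bigl(-(c^*)^2K/(8m)\bigr)$. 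On $\{\tau>K\}$ one has $\sum_{k=1}^K\mathbb{E}[Z_k\mid\mathcal{F}_{k-1}]\ge c^*\alpha K$ and $\sum_{k=1}^K Z_k\le c^*\alpha K/2$, hence $\sum_{k=1}^K D_k\le -c^*\alpha K/2$; combining the inclusions gives $\p[\tau>K]\le\exp(-(c^*)^2K/(8m))$, which is exactly the claim.

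The only substantive ingredient beyond routine bookkeeping is Claim \ref{claim:submg}, i.e.\ the reverse (Khintchine / Paley--Zygmund) inequality $\mathbb{E}|\langle w,\uu\rangle|\gtrsim\alpha$ for unit $w$: this is where boundedness, symmetry, and the $\Theta(\alpha^2)$ variance of the coordinates of $\uu$ enter, via a Paley--Zygmund argument lower-bounding $\mathbb{E}|\langle w,\uu\rangle|$ in terms of $\mathbb{E}\langle w,\uu\rangle^2$ and a fourth-moment bound. I expect the fiddly part to be the stopping-time bookkeeping --- checking that $Z_k$ is well defined on $\{\tau>K\}$ (it is, since $r_{k-1}>R'm\alpha>0$ there), that the drift estimate is applied pathwise on that event for every $k\le K$, and that the absolute constants ($R'$, the $4\sqrt{s}$ bound on $\alpha^{-2}$, and the lower bound on $K$) line up so the telescoped sum lands below $c^*\alpha K/2$ --- but no new idea is needed beyond the contraction identity, the Khintchine bound, and Azuma's inequality.
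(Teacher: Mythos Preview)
Your proposal is correct and follows essentially the same argument as the paper: the $\ell_2$ potential recursion, normalization to unit vectors, the Khintchine-type lower bound on $\mathbb{E}|\langle w,\uu_k\rangle|$ (this is exactly the paper's Claim~\ref{claim:submg}), and Azuma--Hoeffding with increments bounded by $\sqrt{m}\,\alpha$, with the same arithmetic on $K$, $R'$, and $\alpha^{-2}$ to close the constants. The only cosmetic difference is that the paper packages the drift as a submartingale $M_j=\sum_{k\le j}(|\langle\nu^{(k)},\uu_{k+1}\rangle|-c^*\alpha)$ rather than as centered martingale differences, but the two formulations are equivalent.
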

	
	\begin{proof}
		By the definition of $v\rp{k}$, we have that
		\[
		0 \leq \abs{v\rp{K+1}}_2^2 = \abs{v\rp{0}}_2^2 + \sum_{k = 0}^K \left(-2\abs{\langle v\rp{k}, \uu_{k+1} \rangle} + \abs{\uu_{k + 1}}_2^2\right). 
		\]
		
		Consider the event $\mc{E}$ that for all $1 \leq k \leq K$, we have $\abs{v\rp{k}}_2 \geq R' m \alpha$. Let $\nu\rp{k} = v\rp{k}/\abs{v\rp{k}}_2$. Observe that $\abs{\uu_k}_2^2 \leq \alpha^2 m$. Applying this and rearranging the inequality above, we have that the event $\mc{E}$ implies
		\begin{equation}
		\label{eq:rare_event_bdd}
		\sum_{k = 0}^K \abs{\langle \nu\rp{k}, \uu_{k + 1} \rangle} \leq \frac{R^2m^3 + \alpha^2  m K}{2R'm\alpha}.
		\end{equation}
		
		For $0 \leq j \leq K$, define a sequence of random variables
		\[
		M_j := \sum_{k = 0}^j \left( \abs{\langle \nu\rp{k}, \uu_{k+1} \rangle } - c^* \alpha \right).
		\] 
		For convenience, we also define $M_{-1} \equiv 0$. Note that $M_j$ is measurable with respect to the sigma-field $\Omega_j$ generated by the random variables $v\rp{0}, v\rp{1}, \ldots, v\rp{j+1}$. Therefore, $\Omega_{-1} \subset \Omega_0 \subset \ldots $ defines a filtration for the sequence of random variables $\{ M_j \}_{j \geq -1}$. 
		\begin{claim}
			\label{claim:submg}
			There exists an absolute constant $c^* > 0$ such that $\{M_j\}_{j \geq -1}$ is a submartingale with respect to the filtration $\{ \Omega_j\}_{j \geq -1}$.
		\end{claim}
		
		\begin{proof}
			Since $v\rp{0}$ is independent of $\mc{U}$ and $\mc{U}$ is an independent sample, it follows that $\uu_{k+1}$ is independent of $\nu\rp{k}$. 
			Observe that the coordinates of $\uu_{k+1}$ are subGaussian. By the Khintchine inequality for the $\ell_1$ norm \citep[see Exercises 2.6.5 and 2.6.6 of][]{Ver18}, we have
			\[
			\mb{E}\left[ \abs{\langle \nu\rp{k}, \uu_{k+1} \rangle } \, \, \bigg| \, \, v\rp{k} \right] = \mb{E}\left[ \abs{\langle \nu\rp{k}, \uu_{k+1} \rangle } \, \, \bigg| \, \, \nu\rp{k} \right] \geq \alpha c^* \abs{\nu\rp{k}}_2 = \alpha c^* > 0
			\]
			for an absolute constant $c^* > 0$. 
		\end{proof} 
		Let $c^* >0$ denote the absolute constant from Claim \ref{claim:submg}, and set $R' \geq 2/c^*$. Next, note the equivalence between the following inequalities:
		\begin{align}
		c^* \alpha K &\geq \frac{c^* \alpha K}{2} + \frac{R^2 m^3 + \alpha^2 m K}{2 R' m \alpha} \Leftrightarrow \label{eqn:Azuma_condition} \\
		K &\geq \frac{R^2 m^2}{R'(c^* - 1/R')} \alpha^{-2}, \nonumber
		\end{align}
		assuming that $c^* - 1/R' > 0$. Setting $R' \geq 2/c^*$, it follows that if
		\[
		K \geq \frac{8 R^2m^2 \sqrt{s} }{R'c^*},
		\]
		then \eqref{eqn:Azuma_condition} holds. Next, note by Cauchy-Schwarz that the submartingale $M_j$ has increments bounded by $\alpha \sqrt{m}$. Since \eqref{eqn:Azuma_condition} holds, we may apply the Hoeffding--Azuma inequality to conclude that for such choice of $K$ and $R'$ that
		\[
		\p[\mc{E}] \leq \p\left[ M_K  \leq \frac{R^2 + \alpha^2  m^2 K}{2R'm\alpha} - c^*\alpha K \right] \leq \p\left[ M_K \leq - \frac{c^* \alpha K}{2} \right] \leq
		\exp\left( - \frac{(c^*)^2 K}{8m} \right),
		\]
		as desired.
			
%
%
%
		
	\end{proof}
	
	
	\begin{proof}[Proof of Proposition \ref{prop:efficient_clean_up}]
			
		Let $t \geq 1$ denote the current phase. Let $\mc{E}$ denote the event that $|S_j| = n_j$ for all $1 \leq j \leq t$ and $|G'_t| = g'_t$ where $n_j \geq (0.3)^{j-1} n$ for all $1 \leq j \leq t$ and $g'_t \geq (0.4) n_t$. By Proposition \ref{prop:GKK_sizes_condl_distr} and Lemma \ref{lem:clean_up_distr} in Appendix \ref{app:distribution_properties}, conditionally on $\mc{E}$, the points $\z_1, \ldots, \z_{g'_t} \in G'_t$ are distributed as $\mr{Tri}[-\alpha_{t+1}, \alpha_{t+1}]^m$, and the leftover vector $v_t\rp{0}$ obtained in step 4(a) of \textbf{PRDC} is independent of this sample. Moreover, by Lemma \ref{lem:reduce} and the fact that $\abs{v_t}_\infty \leq \abs{v_t}_2 \leq \gamma m \alpha_t$, it follows that \[\abs{v_t\rp{0}}_\infty \leq (\gamma + 1) m \alpha_t.\] Hence, the Cauchy--Schwarz inequality yields that \[\abs{v_t\rp{0}}_2 \leq (\gamma + 1) m^{3/2} \alpha_t.\] 
		
		Next, apply Lemma \ref{lem:ell_2_argument} with $\mc{U} = \frac{1}{\alpha_t} \z_1, \ldots, \frac{1}{\alpha_t} \z_{g'_t}$, $v\rp{0} = \frac{1}{\alpha_t} v_t\rp{0}$, $R = \gamma + 1$, $R' = \gamma$, and $K = (g'_t)^{3/4}$ where $\gamma \geq 2/c^*$. Recall that by assumption $g'_t \geq (0.4)n_t \geq (0.4)(0.3)^{t-1} n$. Since $t \leq \ceil{C^* \log n}$, we have that $g'_t \geq \sqrt{n}$. So for $C$ sufficiently small in the bound $m \leq C \sqrt{\log{n}}$, we have that the lower bound		
		\[
		K =(g'_t)^{3/4} \geq \frac{8 (\gamma + 1)^2 m^2 \sqrt{g'_t}}{\gamma c^*}
		\]
		holds, and so indeed Lemma \ref{lem:ell_2_argument} applies. Therefore, conditioned on $\mc{E}$, with probability at least
		\[
		1 - \exp\left( -\frac{(c^*)^2 (g'_t)^{3/4}}{8m} \right) \geq 1 - \exp\left( -(c^*)^2 n^{1/4} \right) 
		\]
		there exists $k \leq K = (g'_t)^{3/4}$ with 
		\[
		\abs{v_t\rp{k}}_2 \leq \gamma m \alpha_{t+1}. 
		\]
		By the lower bounds $n \geq e^{(1/C)m^2}$ and $g'_t \geq \sqrt{n}$, for $C$ sufficiently small, it follows that $(g'_t)^{3/4} \leq (0.01) g'_t$. Hence, conditioned on $\mc{E}$, with probability at least $1 - \exp\left( -(c^*)^2 n^{1/4} \right)$ we have $|S_{t+1}| \geq g'_t - (g'_t)^{3/4} \geq (0.3) n_t$, as desired. 
	\end{proof}
	
	
	
	%
	
	\section{Proof of Theorem\texorpdfstring{  \MakeLowercase{\ref{thm:gkk}} }{} }
	\label{appendix:GKK_thm}
	Our main theorem is a direct consequence of Propositions \ref{prop:many_differences} and \ref{prop:efficient_clean_up}. 
	
	\begin{proof}[Proof of Theorem \ref{thm:gkk}]
		Recall that $T = \ceil{C^* \log{n}}$ where $C^* = (2 \log(10/3) )^{-1},$ and set $\theta = 0.3$. By the union bound over the $T$ phases of \textbf{PRDC} in \textbf{GKK}, induction, and Propositions \ref{prop:many_differences} and \ref{prop:efficient_clean_up}, we have that $|S_t| \geq \theta^{t-1} n$ for all $1 \leq t \leq T$ with probability at least $1 - \exp(-c_3 n^{1/4} )$, for some absolute constant $c_3> 0$. Since $\alpha_{t+1} = \alpha_t/\ceil{|S_t|^{1/(4m)}}$, this implies by induction that 
		\[\alpha_{T} \leq \max(1, \Delta) \theta^{-T^2/(4m)} n^{-T/(4m)} 
		\leq 
		\max(1, \Delta) \exp\left( -\frac{C^* \log^2 n}{8 m} \right)
		\]
		with probability at least $1 - \exp(-c_3 n^{1/4} )$.   
		
		Moreover, by the stopping criterion from step 4(b) of \textbf{PRDC}, $\abs{v_T}_\infty \leq \abs{v_T}_2 \leq \gamma m \alpha_T.$ Applying \textbf{REDUCE} to $S_T \cup \{v_T\}$, we see by Lemma \ref{lem:reduce} that the output $\abs{v}_\infty$ of \textbf{GKK} satisfies
		\[
		\abs{v}_\infty \leq \max(1, \Delta) (\gamma m + m - 1) \exp\left( -\frac{C^* \log^2 n}{8 m} \right) \leq  \exp\left( - \frac{c \log^2{n}}{m} \right) 
		\]
		for an absolute constant $c > 0$. Note that the right-hand-side follows if we take $C > 0$ sufficiently small in the bound $m \leq C \sqrt{\log(n)/\max(1, \log \Delta)}$. 
	\end{proof}
	
	\section{Distributional properties}
	\label{app:distribution_properties}
	Our analysis of \textbf{GKK} relies heavily on the fact that the operations in the algorithm preserve important features of the original distribution such as independence. Though not carefully proven in \cite{KarKar82}, these features are crucial to our analysis, so we provide explicit justification of these properties below for completeness. 
	
	%
	%
	
	First we introduce some notation. Given $\alpha > 0,$ a fixed collection of vectors $\z_1, \ldots, \z_s \subset [-\alpha, \alpha]^m,$ and a density $g:[-\alpha, \alpha]^m$, divide the cube $[-\alpha, \alpha]^m$ into $N := 2^m(\ceil{s^{1/(4m)}})^m$ sub-cubes $C_1, \ldots, C_N$ of side length $\alpha/\ceil{s^{1/(4m)}}$ as in step 1 of \textbf{PRDC}. Label the points $\z_1, \ldots, \z_s$ as in step (2) of \textbf{PRDC} using the density $g$. Define a random collection of ordered pairs $\mc{T}_{s,\alpha,g}  \subset ([N] \times \{0,1\})^s$ so that for $1 \leq i \leq s$, 
	\[
	(\mc{T}_{s,\alpha,g})_i = ( j, 1 )
	\]
	if and only if $\z_i \in C_j$ and if $\z_i$ is labeled `good', and
	\[
	(\mc{T}_{s,\alpha,g})_i = ( j, 0 )
	\]
	if and only if $\z_i \in C_j$ and $\z_i$ is labeled as `bad'.
	
	Usually $s, \alpha$ and $g$ are clear from context, in which case we write $\mc{T}$ for $\mc{T}_{s,\alpha,g}$. Observe that $\mc{T}$ keeps track of which sub-cube $v_i$ lands in and also whether it was labeled good or bad. We refer to $\mc{T}$ as the \textit{configuration vector} corresponding to the input of \textbf{PRDC}.
	
	We proceed by proving some preliminary lemmas, the first of which states roughly that given random vectors $\z_1, \ldots, \z_s$ with a nice conditional distribution, the good points in each sub-cube $C_j$ have a uniform distribution. 
	
	\begin{lemma}
		\label{lem:uniform_distr}
		Suppose that conditioned on an event $\mc{F}$,
		\begin{itemize}
			\item the random vectors $S = \z_1, \ldots, \z_s \in \mb{R}^m$ are iid, and each vector has the conditional joint density $g:[-\Delta, \Delta]^m \to \mb{R}$.
			\item $S \cup \{v\}$ is a collection of independent random vectors.
		\end{itemize}  Run the first  two steps of \textbf{\emph{PRDC}} with input $S = \z_1, \ldots, \z_s$,$v$, $\alpha = \Delta$, and density $g$. Let $G$ denote the good points, and let $B$ denote the bad points. Then conditioned on $\mc{T}_{s, \Delta, g}$ and $\mc{F}$,
		\begin{itemize}
			\item the random vectors in $B \cup G$ are mutually independent.
			\item For $1 \leq j \leq N$, a given good point in $C_j$ has a uniform distribution on $C_j$. 
		\end{itemize} 
	\end{lemma}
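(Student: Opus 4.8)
The plan is to prove both assertions at once by conditioning, for each index $i$, on the pair (value of $\z_i$, label of $\z_i$), and then using that the configuration vector $\mc{T}$ records an event which factorizes as a product over $i = 1, \dots, s$. Once this is in place, the first assertion is an instance of the fact that conditioning on a product event preserves independence, and the second is a one-line Bayes'-rule computation.

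First I would make the \textbf{resample} step of \textbf{PRDC} explicit. Enlarging the probability space if necessary, let $\xi_1, \dots, \xi_s \stackrel{iid}{\sim} \mathrm{Unif}[0,1]$ be independent of everything introduced so far (in particular of $S$, $v$, and of $\mc{F}$), and set $L_i := \1\big(\xi_i \le q(\z_i)\big)$, where for $x \in C_j$ we write $q(x) := \big(\min_{y \in C_j} g(y)\big)/g(x) \in [0,1]$ (defined, say, as $1$ on the $g$-null set $\{g = 0\}$). Then $L_i = 1$ iff $\z_i$ is labeled ``good'', and conditionally on $\{\z_i = x\}\cap\mc{F}$ this has probability $q(x)$, which is the prescription of step~2 of \textbf{PRDC}. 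Since, conditionally on $\mc{F}$, the vectors $\z_1, \dots, \z_s, v$ are mutually independent and $\z_1, \dots, \z_s$ are i.i.d.\ with density $g$, while the $\xi_i$ form a fresh independent i.i.d.\ sample, the pairs $W_i := (\z_i, L_i) = (\z_i, \1(\xi_i \le q(\z_i)))$ together with $v$ are mutually independent conditionally on $\mc{F}$, and $W_1, \dots, W_s$ are i.i.d.\ conditionally on $\mc{F}$.

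Next, $\mc{T}$ is a deterministic function of $(W_1, \dots, W_s)$: taking the sub-cubes $C_1, \dots, C_N$ to partition $[-\Delta, \Delta]^m$ (their boundaries form a Lebesgue-null set, hence are almost surely avoided by each $\z_i$, which has a density), one has $(\mc{T})_i = (j, L_i)$ for the unique $j$ with $\z_i \in C_j$. Fix a configuration $\mathbf{t} = \big((j_1, \ell_1), \dots, (j_s, \ell_s)\big)$ with $\p[\mc{T} = \mathbf{t} \mid \mc{F}] > 0$ (otherwise there is nothing to prove), and note $\{\mc{T} = \mathbf{t}\} = \bigcap_{i=1}^s A_i$ with $A_i := \{\z_i \in C_{j_i}\} \cap \{L_i = \ell_i\} \in \sigma(W_i)$. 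Because the $W_i$ are conditionally independent given $\mc{F}$ and $v$ is conditionally independent of them, conditioning further on the product event $\bigcap_i A_i$ keeps $W_1, \dots, W_s, v$ mutually independent, with $W_i$ now distributed as the conditional law of $W_i$ given $\mc{F} \cap A_i$. In particular the vectors $\z_1, \dots, \z_s$ — which, listed in some order, are exactly the elements of $B \cup G$ — are mutually independent conditionally on $\mc{F}$ and $\mc{T}$ (and $v$ remains independent of them), giving the first assertion. For the second, suppose $(\mc{T})_i = (j, 1)$; then the conditional law of $\z_i$ given $\mc{F} \cap \{\mc{T} = \mathbf{t}\}$ equals the conditional law of $\z_i$ given $\mc{F} \cap A_i$, and for any Borel $E \subseteq C_j$, using $g(x)q(x) = \min_{y \in C_j} g(y) =: m_j$ for all $x \in C_j$,
\[
\p\big[\z_i \in E \,\big|\, \mc{F}, A_i\big]
= \frac{\p\big[\z_i \in E,\ L_i = 1 \,\big|\, \mc{F}\big]}{\p\big[\z_i \in C_j,\ L_i = 1 \,\big|\, \mc{F}\big]}
= \frac{\int_E g(x) q(x)\,\ud x}{\int_{C_j} g(x) q(x)\,\ud x}
= \frac{m_j \,\mathrm{Vol}(E)}{m_j \,\mathrm{Vol}(C_j)}
= \frac{\mathrm{Vol}(E)}{\mathrm{Vol}(C_j)},
\]
so $\z_i$ is uniform on $C_j$ conditionally on $\mc{F}$ and $\mc{T}$.

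I do not anticipate a serious obstacle: the argument is pure bookkeeping with conditional distributions. The single point that deserves care is the structural observation that the level set $\{\mc{T} = \mathbf{t}\}$ is a product event in the conditionally independent coordinates $W_1, \dots, W_s$ — this is precisely what makes ``conditioning on $\mc{T}$'' compatible with independence — together with the (harmless) remark that the sub-cube boundaries have Lebesgue measure zero and so are invisible to the absolutely continuous law of each $\z_i$.
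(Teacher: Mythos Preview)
Your proof is correct and follows essentially the same approach as the paper: both argue that $(\mc{T})_i$ depends only on the $i$-th pair $(\z_i, L_i)$, so conditioning on $\mc{T}$ is conditioning on a product event and hence preserves independence, and both verify uniformity via the same Bayes'-rule computation reducing to $g(x)q(x)=\min_{y\in C_j}g(y)$. Your version is more explicit (introducing the auxiliary uniforms $\xi_i$ and noting the measure-zero boundary issue), but the argument is the same.
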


	\begin{proof}
		The first statement follows because (1) $G \cup B = \z_1, \ldots, \z_s$ is an independent sample, conditioned on $\mc{F}$, and (2) the ordered pair $(\mc{T}_{s, \Delta, g})_i$ is generated independently for each $i \in [s]$. Thus it suffices to show, by symmetry and passing to conditional densities, that
		\[
		g(z| \z_1 \in C_j, \z_1 \, \, \mr{good}) = \frac{1}{\mr{Vol}(C_j)} 
		\]
		for all $z \in C_j$.
		By Bayes' rule,
		\begin{align*}
		g(z| \z_1 \in C_j, \z_1 \, \, \mr{good}) &= \frac{\p[ \z_1 \,\, \mr{good} | \z_1 = z, \z_1 \in C_j, \, \mc{F}] \, g(z| \z_1 \in C_j)}{\p[\z_1 \,\, \mr{good} |\z_1 \in C_j, \, \mc{F}]} \\ 
		&= \left(\frac{\min_{x \in C_j} g(x) }{g(z)} \cdot \frac{g(z)}{ \p[\z_1 \in C_j \, | \, \mc{F}] }\right)\bigg/ \left(\frac{\mr{Vol}(C_j) \min_{x \in C_j} g(x)}{\p[\z_1 \in C_j| \, \mc{F}]}\right) \\
		&= \frac{1}{\mr{Vol}(C_j)}, 
		\end{align*}
		where the last line follows because
		\[
		\p[\z_1 \,\, \mr{good}, \z_1 \in C_j| \, \mc{F}] = \int_{C_j} \p[\z_1 \, \, \mr{good}| \z_1 = z, \, \mc{F}] g(z) \, dz = \mr{Vol}(C_j) \min_{x \in C_j} g(x). 
		\] 
	\end{proof}
	
	\begin{lemma}
		\label{lem:tri_distr}
		Consider the set-up of Lemma \ref{lem:uniform_distr}, and let $\alpha' = \alpha/\ceil{s^{1/(4m)}}$. Let $G'$ denote the set of random differences constructed after step 3. of \textbf{\emph{PRDC}} applied to $S$, $v$, $\alpha = \Delta$, and $g$. Then conditioned on the events $\mc{F}$ and $\mc{T} = \mf{T}$, the points in $G'$ are iid and have a triangular distribution on $[-\alpha', \alpha']^m$.
	\end{lemma}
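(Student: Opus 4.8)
The plan is to combine the conditional uniformity statement of Lemma~\ref{lem:uniform_distr} with an exchangeability argument for the random pairing carried out in step~3 of \textbf{PRDC}. By Lemma~\ref{lem:uniform_distr}, conditioned on $\mc{F}$ and $\mc{T} = \mf{T}$, the good points are mutually independent and every good point lying in the sub-cube $C_j$ is uniformly distributed on $C_j$. The configuration vector $\mf{T}$ determines, for each $j \in [N]$, the number $g_j$ of good points in $C_j$ (and which of the $\z_i$ they are); hence $|G'| = \sum_{j=1}^{N} \lfloor g_j/2 \rfloor$ is a deterministic function of $\mf{T}$, and it suffices to show that these $|G'|$ difference vectors are iid with law $\mr{Tri}[-\alpha', \alpha']^m$.

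Fix a sub-cube $C_j = \alpha' w_j + [0,\alpha']^m$ with $w_j \in \mb{Z}^m$ and write its good points as $W_1, \ldots, W_{g_j}$; conditionally on $\mc{F}$ and $\mc{T} = \mf{T}$ these are iid uniform on $C_j$. The differencing loop of step~3 repeatedly removes a uniformly random pair, which is equivalent to drawing a uniformly random (partial) matching of $\{1, \ldots, g_j\}$ into $\lfloor g_j/2\rfloor$ disjoint pairs using randomness independent of the point values; the single leftover point when $g_j$ is odd is discarded into $B_t'$ and does not enter $G'$. Since $W_1, \ldots, W_{g_j}$ are exchangeable and the matching is independent of their values, we may condition on the matching and assume without loss of generality that the pairs are $\{1,2\}, \{3,4\}, \ldots$, so that the differences are $D_i = W_{2i-1} - W_{2i}$, $i = 1, \ldots, \lfloor g_j/2 \rfloor$. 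These involve disjoint blocks of the independent vectors $W_\cdot$, so the $D_i$ are independent, and writing $W_{2i-1} = \alpha' w_j + U_i$, $W_{2i} = \alpha' w_j + V_i$ with $U_i, V_i$ iid uniform on $[0,\alpha']^m$ gives $D_i = U_i - V_i \sim \mr{Tri}[-\alpha', \alpha']^m$ by Definition~\ref{def:tri_distr}. Because the triangular law is symmetric about the origin, it is immaterial in which order the two points of a pair are subtracted (any implicit sign choice is independent across the disjoint pairs), so the conclusion is unaffected.

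Finally, the good points in distinct sub-cubes are mutually independent by Lemma~\ref{lem:uniform_distr}, and the matching randomness used in distinct sub-cubes is independent as well; since all difference vectors across all sub-cubes are then functions of disjoint collections of independent random vectors together with independent matching randomness, the full collection forming $G'$ is mutually independent, each vector with law $\mr{Tri}[-\alpha', \alpha']^m$. This is exactly the claim.

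The routine calculations are trivial; the one place that needs care is the bookkeeping of the conditioning, namely justifying that exchangeability of the good points within a sub-cube still applies after conditioning on $\mc{T} = \mf{T}$. This is precisely what Lemma~\ref{lem:uniform_distr} delivers (mutual independence plus uniform conditional law of each good point on its sub-cube), and the matching randomness in step~3 is generated independently of everything else, so the exchangeability argument goes through verbatim.
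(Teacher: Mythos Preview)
Your proof is correct and follows the same approach as the paper's: invoke Lemma~\ref{lem:uniform_distr} to get conditional independence and uniformity of the good points on their sub-cubes, then observe that the random pairing together with the translation $C_j = \alpha' w_j + [0,\alpha']^m$ makes each difference a $\mr{Tri}[-\alpha',\alpha']^m$ variable built from disjoint blocks of independent inputs. The paper compresses all of this into three sentences; your exchangeability argument for the random matching and the remark on sign-symmetry of the triangular law simply spell out the details behind the paper's phrase ``the points in $G$ are randomly differenced.''
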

	
	\begin{proof}
		Observe that $\mf{T}$ determines the number of points in $G'$. The points in $G'$ are independent by Lemma \ref{lem:uniform_distr} and the fact that the points in $G$ are randomly differenced in step 3. of \textbf{PRDC}. Since $C_j$ is a translation of the sub-cube $[-\alpha', \alpha']^m$, the difference of two independent, uniformly sampled points from $C_j$ have a triangular distribution on $[-\alpha', \alpha']^m$.
	\end{proof}
	
	\begin{lemma}
		\label{lem:points_lost_distr}
		Consider the set-up of Lemma \ref{lem:tri_distr}, and let $\ell \in \mb{Z}_{\geq 0}$. Let the random variable $\mc{L}$ denote the number of points removed from $G'$ in step 4(b) of \textbf{\emph{PRDC}} applied to $S$, $v$, $\alpha = \Delta$, and $g$. Let $S'$ and $v'$ denote the vectors output by \textbf{\emph{PRDC}}. Let $g' = |G'|$. Then conditioned on the events $\mc{F}$, $\mc{T} = \mf{T}$, and $\mc{L} = \ell$,
		\begin{itemize}
			\item The $g'- \ell$ points in $S'$ are iid and follow a triangular distribution on $[-\alpha', \alpha']^m$.
			\item The random vector $v'$ is independent of the vectors in $S'$.
		\end{itemize}   
	\end{lemma}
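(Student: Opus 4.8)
The plan is to expose the randomness consumed during the clean-up step~4(b) in the order in which it is used. Fix the conditioning event, write $g' = |G'|$ (this number is determined by $\mf{T}$ via the differencing in step~3), and recall from Lemma~\ref{lem:tri_distr} that, conditionally on $\mc{F}$ and $\mc{T} = \mf{T}$, the $g'$ points of $G'$ are iid with law $\mr{Tri}[-\alpha', \alpha']^m$. I also need that, under the same conditioning, the vector $v\rp{0}$ produced by \textbf{REDUCE} in step~4(a) is independent of $G'$: this follows from the mutual independence of $B \cup G$ established in Lemma~\ref{lem:uniform_distr} together with the fact that the pairing performed in step~3 is drawn independently of the point values, so that $v\rp{0}$ --- a deterministic function of $B$, of $v$, and of the leftover good points --- is a function of data independent of the differenced pairs that make up $G'$. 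Since the points of $G'$ are consumed in step~4(b) by repeatedly deleting a uniformly random element, I may fix upfront a uniformly random permutation of $G'$, independent of the point values, and enumerate the points as $\uu_1, \ldots, \uu_{g'}$ in that (removal) order; because a uniformly random permutation independent of the values applied to an iid sample returns an iid sample, conditionally on $\mc{F}$ and $\mc{T} = \mf{T}$ the sequence $(\uu_1, \ldots, \uu_{g'})$ is iid $\mr{Tri}[-\alpha', \alpha']^m$ and still independent of $v\rp{0}$.

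Next I would recast the clean-up loop as a stopping time. Let $\mc{A}_k = \sigma(v\rp{0}, \uu_1, \ldots, \uu_k)$. An induction on $k$ using the update rule $v\rp{k} = v\rp{k-1} + a^* \uu_k$ with $a^* = \argmin_{a \in \{\pm 1\}} |v\rp{k-1} + a \uu_k|_2$ shows that $v\rp{k}$ is $\mc{A}_k$-measurable. Hence $\mc{L}$, the first index $k$ at which $|v\rp{k}|_2 < \gamma m \alpha'$, is a stopping time with $\{\mc{L} = \ell\} \in \mc{A}_\ell$, and on this event the outputs of \textbf{PRDC} are $v' = v\rp{\ell}$, which is $\mc{A}_\ell$-measurable, and $S' = \{\uu_{\ell+1}, \ldots, \uu_{g'}\}$.

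To finish, I would use the elementary independence property of an iid sequence: the tail $(\uu_{\ell+1}, \ldots, \uu_{g'})$ is independent of $\mc{A}_\ell$, and since $\{\mc{L} = \ell\} \in \mc{A}_\ell$, conditioning further on $\{\mc{L} = \ell\}$ does not alter its law; thus, conditionally on $\mc{F}$, $\mc{T} = \mf{T}$, and $\mc{L} = \ell$, the $g' - \ell$ points of $S'$ are iid $\mr{Tri}[-\alpha', \alpha']^m$. Moreover $v' = v\rp{\ell}$ is $\mc{A}_\ell$-measurable while $S'$ is independent of $\mc{A}_\ell$, so $v'$ is independent of $S'$ under this conditioning, which proves both assertions. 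The main obstacle is the first paragraph: one must be careful that revealing the points of $G'$ in their order of consumption --- i.e., a data-independent uniform permutation truncated at the data-dependent index $\mc{L}$ --- genuinely leaves the unused tail iid and independent of $v\rp{0}$, and in particular that the leftover vector $v\rp{0}$ is independent of $G'$ to begin with (which rests on the pairing in step~3 being independent of the point values). Once this reduction is in place, the stopping-time bookkeeping and the final conditioning step are routine.
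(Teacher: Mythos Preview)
Your argument is correct and follows essentially the same route as the paper: both enumerate the points of $G'$ in their (uniformly random) removal order, observe that $v\rp{0}$ is independent of $G'$ under the stated conditioning, and then use that the event $\{\mc{L}=\ell\}$ is measurable with respect to the first $\ell$ removed points together with $v\rp{0}$, leaving the tail $\{\uu_{\ell+1},\ldots,\uu_{g'}\}$ iid triangular and independent of $v'=v\rp{\ell}$. Your stopping-time and filtration formulation is a slightly more explicit packaging of the paper's informal argument, but the content is the same.
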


	\begin{proof} Recall that $|G'| = g'$ is determined by $\mf{T}$. Label the points in $G'$ independently at random to be $G' = \y_1, \ldots, \y_{g'}$. The points in $G'$ are independent and triangularly distributed on $[-\alpha',\alpha']^m$ by Lemma \ref{lem:tri_distr}, conditionally on $\mc{F}$ and $\mc{T} = \mf{T}$. Recall the single vector $v$ that was input initially to \textbf{PRDC}. In step 4(a), this is combined with vectors in $B'$ to construct a single vector $v\rp{0}$. By Lemma \ref{lem:uniform_distr}, we have that $v\rp{0}$ is independent of $G'$, conditionally on $\mc{T} = \mf{T}$ and $\mc{F}$.
		
		Now in step 4(b) of \textbf{PRDC}, let us remove points from $G'$ in the order $\y_{g'}, \y_{g' - 1}, \ldots, \y_{g' - \ell + 1}$. By the stopping criterion for step 4(b), we have
		\[
		\{ \mc{L} = \ell \} = \left\{ \abs{v\rp{k}}_2 > \gamma m \alpha' \, \, \forall \, 1 \leq k \leq \ell-1, \, \, \abs{v\rp{\ell}}_2 < \gamma m \alpha'  \right\}.
		\]
		
		Since $v\rp{k} = v\rp{k-1} \pm \y_{g' - k + 1}$ for $1 \leq k \leq \ell$, the random vector $v\rp{k}$ is independent of $\y_1, \ldots, \y_{g' - \ell}$. Therefore, the sample $S' = \y_1, \ldots, \y_{g' - \ell}$ is independent of the event $\mc{L} = \ell$. Hence, further conditioning on $\mc{L} = \ell$ does not affect the distribution of $S'$, as desired.
	\end{proof}
	Summarizing the content of Lemmas \ref{lem:uniform_distr}, \ref{lem:tri_distr}, and \ref{lem:points_lost_distr}, we have the following proposition.
	
	\begin{proposition}
		\label{prop:PRDC_config_vector_condl}
		Suppose that conditioned on an event $\mc{F}$,
		\begin{itemize}
			\item the random vectors $S = \z_1, \ldots, \z_s \in \mb{R}^m$ are iid, and each vector has the conditional joint density $g:[-\Delta,\Delta]^m \to \mb{R}$.
			\item $S \cup \{v\}$ is a collection of independent random vectors.
		\end{itemize} 
		Let $S',  v'$ denote the vectors output by \textbf{\emph{PRDC}} applied to $S,$ $v,$ $\alpha = \Delta$, and $g$. Let $s' \in \mb{Z}_{\geq 0}$ and $\alpha' = \alpha/\ceil{s^{1/(4m)}}$. Then conditioned on $\mc{F}, \mc{T} = \mf{T}$, and $|S'| = s'$, 
		\begin{itemize}
			\item the $s'$ points in $S'$ are iid and follow a triangular distribution on $[-\alpha', \alpha']^m$.
			\item The random vector $v'$ is independent of the vectors in $S'$.
		\end{itemize}   
	\end{proposition}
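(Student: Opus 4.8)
The plan is to obtain Proposition~\ref{prop:PRDC_config_vector_condl} as an essentially immediate corollary of Lemmas~\ref{lem:uniform_distr}, \ref{lem:tri_distr}, and \ref{lem:points_lost_distr}. Those three lemmas already establish the conditional independence and triangular distribution of the output under conditioning on the configuration vector $\mc{T} = \mf{T}$ together with the number $\mc{L}$ of points lost in the clean-up loop~4(b); the only thing left to do is to rephrase the conditioning event $\{\mc{L} = \ell\}$ in terms of $\{|S'| = s'\}$, which is the form used in the proposition.

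First I would recall that the configuration vector $\mc{T} = \mc{T}_{s,\Delta,g}$ records, for each of the $s$ input vectors, which sub-cube $C_j$ it falls into and whether it is labeled \emph{good} or \emph{bad} in step~2 of \textbf{PRDC}. Consequently a fixed realization $\mf{T}$ of $\mc{T}$ determines deterministically the number $g' = g'(\mf{T})$ of random differences placed into $G'$ in step~3: it equals the sum over sub-cubes of $\lfloor (\text{number of good points in }C_j)/2\rfloor$. Since $S'$ is $G'$ with the $\mc{L}$ points deleted in step~4(b), we have $|S'| = g' - \mc{L}$ on $\{\mc{T} = \mf{T}\}$, so on that event the $\sigma$-algebras generated by $|S'|$ and by $\mc{L}$ coincide. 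Hence conditioning on $\{\mc{F},\, \mc{T} = \mf{T},\, |S'| = s'\}$ is identical to conditioning on $\{\mc{F},\, \mc{T} = \mf{T},\, \mc{L} = g' - s'\}$; if $s' > g'$ this latter event is empty and there is nothing to prove, and otherwise I set $\ell := g' - s' \geq 0$.

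Then I would simply invoke Lemma~\ref{lem:points_lost_distr} with this value of $\ell$. Its hypotheses on $(S, v, g)$ are precisely those assumed in the proposition, so it applies verbatim and yields that, conditioned on $\mc{F}$, $\mc{T} = \mf{T}$, and $\mc{L} = \ell$, the $g' - \ell = s'$ points of $S'$ are i.i.d.\ with the triangular distribution on $[-\alpha', \alpha']^m$ and $v'$ is independent of them. Combined with the rephrasing above, this is exactly the statement of Proposition~\ref{prop:PRDC_config_vector_condl}.

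I do not expect a genuine obstacle: the substance is entirely contained in the three preceding lemmas, and the argument here is pure bookkeeping. If forced to identify the one delicate point, it is the step already carried out inside the proof of Lemma~\ref{lem:points_lost_distr} showing that conditioning further on $\mc{L} = \ell$ does not disturb the law of the surviving sample $S'$; this is handled by deleting the points of $G'$ in a fixed labeled order $\y_{g'}, \y_{g'-1}, \dots$ and noting that $v\rp{k} = v\rp{k-1} \pm \y_{g'-k+1}$ depends only on $v\rp{0}$ and the last $k$ deleted points, so that $S' = \{\y_1, \dots, \y_{g'-\ell}\}$ is independent of the stopping index $\mc{L}$. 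I would cite that argument rather than reproduce it.
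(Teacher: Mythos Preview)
Your proposal is correct and matches the paper's approach exactly: the paper simply introduces this proposition as ``summarizing the content of Lemmas~\ref{lem:uniform_distr}, \ref{lem:tri_distr}, and \ref{lem:points_lost_distr},'' without a separate proof. Your write-up spells out the one bookkeeping step (that on $\{\mc{T}=\mf{T}\}$ the events $\{|S'|=s'\}$ and $\{\mc{L}=g'(\mf{T})-s'\}$ coincide) needed to translate Lemma~\ref{lem:points_lost_distr} into the proposition, which is precisely what the paper leaves implicit.
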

	
	Observe that Proposition \ref{prop:PRDC_config_vector_condl} and induction imply the next lemma, which guarantees that we have a nice distribution after every phase of \textbf{PRDC}, conditionally on the data $\mc{T}\rp{j}$ at each step.
	
	\begin{lemma}
		\label{lem:GKK_config_vector_condl}
		Let $X_1, \ldots, X_n$ be iid random vectors, each having a joint density $g:[-\Delta,\Delta]^m \to \mb{R}$, conditioned on some event $\mc{F}$. Consider the output $S_t, v_t, \alpha_t$ that results after the $(t-1)$-th phase of \textbf{\emph{PRDC}} in step 2 of \textbf{\emph{GKK}}. For $1 \leq j \leq t -1$, let $\mc{T}\rp{j}$ denote the configuration vector resulting from step 2 of the $j$-th phase of \textbf{\emph{PRDC}}. Then conditioned on $\mc{T}\rp{j} = \mf{T}\rp{j}$ for $1 \leq j \leq t-1$ and $|S_j| = n_j$ for $1 \leq j \leq t$, we have
		\begin{itemize}
			\item the $n_t$ points in $S_t$ are iid and follow a triangular distribution on $[-\alpha_t, \alpha_t]^m$.
			\item The random vector $v_t$ is independent of the vectors in $S_t$.
		\end{itemize} 
	\end{lemma}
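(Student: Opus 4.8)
The plan is to prove the statement by induction on $t$, with Proposition~\ref{prop:PRDC_config_vector_condl} serving as the single engine of the recursion; the crucial feature of Proposition~\ref{prop:PRDC_config_vector_condl} is that it admits an \emph{arbitrary} conditioning event $\mc{F}$, so at each phase I can fold the entire history accumulated so far into the role of $\mc{F}$. The induction is run for $t \geq 2$: for $t = 1$ the triangular conclusion is degenerate since the vectors in $S_1$ still carry the original density $g$, and the recursion only ``switches on'' after the first differencing phase.

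For the base case $t = 2$ I would apply Proposition~\ref{prop:PRDC_config_vector_condl} to the input of the first phase, namely $S = S_1 = (X_1,\ldots,X_n)$, $v = v_1 = \mf{0}$, $\alpha = \alpha_1 = \Delta$, and $g = g_1 = g$, conditioning on the given event $\mc{F}$. Its hypotheses hold: conditioned on $\mc{F}$ the $X_i$ are iid with conditional density $g$, and $v_1 = \mf{0}$ is a constant vector, so $S_1 \cup \{v_1\}$ is trivially an independent family; the extra conditioning on $|S_1| = n_1$ is vacuous since $|S_1| = n$. With $s = n$, $s' = n_2$, and $\alpha' = \Delta/\ceil{n^{1/(4m)}} = \alpha_2$, Proposition~\ref{prop:PRDC_config_vector_condl} yields exactly the $t=2$ assertion: conditioned on $\mc{F} \cap \{\mc{T}\rp{1} = \mf{T}\rp{1}\} \cap \{|S_2| = n_2\}$, the $n_2$ points of $S_2$ are iid triangular on $[-\alpha_2,\alpha_2]^m$ and $v_2$ is independent of them.

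For the inductive step, assuming the statement for some $t \geq 2$, I would set $\mc{F}_t := \mc{F} \cap \bigcap_{j=1}^{t-1}\{\mc{T}\rp{j} = \mf{T}\rp{j}\} \cap \bigcap_{j=1}^{t}\{|S_j| = n_j\}$ and invoke the inductive hypothesis to get that, conditioned on $\mc{F}_t$, the vectors of $S_t$ are iid triangular on $[-\alpha_t,\alpha_t]^m$ --- equivalently, iid with conditional density $g_t$, the density \textbf{GKK} uses at its $t$-th call of \textbf{PRDC} --- and $v_t$ is independent of them. The one thing to verify here is the trivial-but-necessary equivalence: ``$S_t$ iid'' plus ``$v_t$ independent of the vectors in $S_t$'' is precisely the statement that the family $S_t \cup \{v_t\}$ is mutually independent with the $S_t$-part having conditional density $g_t$, which is the hypothesis format of Proposition~\ref{prop:PRDC_config_vector_condl}. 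Applying that proposition with $\mc{F} \leftarrow \mc{F}_t$, $S \leftarrow S_t$, $v \leftarrow v_t$, $\alpha \leftarrow \alpha_t$, $g \leftarrow g_t$, $s = n_t$, $s' = n_{t+1}$, and $\alpha' = \alpha_t/\ceil{n_t^{1/(4m)}} = \alpha_{t+1}$ produces the statement for $t+1$, because the conditioning event it outputs, $\mc{F}_t \cap \{\mc{T}\rp{t} = \mf{T}\rp{t}\} \cap \{|S_{t+1}| = n_{t+1}\}$, is exactly $\mc{F}$ intersected with the configuration-and-size data through phase $t$. This closes the induction.

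There is no genuine analytic obstacle here: all the substance (Bayes' rule for the resampling step, the difference of two uniform vectors on a cube being triangular, and the fact that conditioning on how many points are lost in clean-up does not disturb the surviving sample) is already packaged into Lemmas~\ref{lem:uniform_distr}, \ref{lem:tri_distr}, \ref{lem:points_lost_distr} and Proposition~\ref{prop:PRDC_config_vector_condl}. The only delicate point is the bookkeeping of conditioning events --- consistently treating the accumulated configuration vectors $\mf{T}\rp{1},\ldots,\mf{T}\rp{t-1}$ and sizes $n_1,\ldots,n_t$ as part of the generic event $\mc{F}$ when re-invoking Proposition~\ref{prop:PRDC_config_vector_condl} at phase $t$. (Afterwards, Proposition~\ref{prop:GKK_sizes_condl_distr} follows by averaging this lemma over the admissible values of the $\mc{T}\rp{j}$.)
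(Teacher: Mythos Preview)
Your proposal is correct and follows essentially the same route as the paper: the paper's proof is the single sentence ``Observe that Proposition~\ref{prop:PRDC_config_vector_condl} and induction imply the next lemma,'' and you have simply written out that induction carefully, including the crucial bookkeeping trick of folding the accumulated configuration-and-size data into the generic conditioning event $\mc{F}$ at each step. Your explicit identification that the statement is meant for $t\ge 2$ (since at $t=1$ the points still carry the original density $g$, not the triangular one) is a useful clarification the paper leaves implicit.
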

	
	Next, marginalizing over all possible configuration vectors yields Proposition \ref{prop:GKK_sizes_condl_distr}.
	
	
	
	\begin{proof}[Proof of Proposition \ref{prop:GKK_sizes_condl_distr}]
		We induct on the phase $t$. Consider the base case $t = 2$. Let $\z_1, \ldots, \z_{n_2}$ denote the vectors in $S_2$, and let $I_i$ denote a measurable subset of $[-\alpha_2, \alpha_2]^m$ for $1 \leq i \leq n_2$. Recall that $\mf{T}\rp{1}$ determines the number of differences in $G_1'$, and $|S_2|$ determines the amount of points lost in step 4(b) of \textbf{PRDC}. Then we have, marginalizing over all possible choices of $\mf{T}\rp{1}$ compatible with $|S_2| = n_2$,  
		\begin{multline*}
		\p\left[\z_i \in I_i \, \forall \, 1 \leq i \leq n_2 \bigg| \, \, |S_2| = n_2 \right] \\= 
		\sum_{ \mf{T}\rp{1} } \p\left[\z_i \in I_i \, \forall \, 1 \leq i \leq n_2 \bigg| \mc{T}\rp{1} = \mf{T}\rp{1}, \, \, |S_2| = n_2\right] 
		\p\left[ \mc{T}\rp{1} = \mf{T}\rp{1} \bigg|  \, \, |S_2| = n_2\right]
		\end{multline*}
		By Lemma \ref{lem:GKK_config_vector_condl}, 
		\begin{equation*}
		\p\left[\z_i \in I_i \, \forall \, 1 \leq i \leq n_2 \bigg| \mc{T}\rp{1} = \mf{T}\rp{1}, \, \, |S_2| = n_2\right] = 
		\p\left[\uu_i \in I_i \, \forall \, 1 \leq i \leq n_2 \right]
		\end{equation*}
		where $\uu_1, \ldots, \uu_{n_2} \stackrel{iid}{\sim} \mr{Tri}[-\alpha_2, \alpha_2]^m$. Hence,
		\begin{equation*}
		\p\left[\z_i \in I_i \, \forall \, 1 \leq i \leq n_2 \bigg| \, \, |S_2| = n_2\right] = 
		\p[\uu_i \in I_i \, \forall \, 1 \leq i \leq n_2 ],
		\end{equation*}
		which confirms the first bullet point of Proposition \ref{prop:GKK_sizes_condl_distr} for the base case $t = 2$. Following a similar marginalization procedure, this also implies by Lemma \ref{lem:GKK_config_vector_condl} that $v_2$, the single vector output by \textbf{PRDC}, is independent of $S_2$ conditionally on $|S_2|$. 
		
		Now we handle the inductive step. Let $S_t = \y_1, \ldots, \y_{n_t}$ and $v_t$ denote the vectors output by the $(t-1)^{\mathrm{th}}$ phase of \textbf{PRDC}. Suppose that conditionally on $\mc{F} := \{|S_2| = n_2, \ldots, |S_t| = n_t\}$ that $S_t$ is an iid sample of triangularly distributed vectors on $[-\alpha_t, \alpha_t]^m$, and $v_t$ is independent of $S_t$. By Proposition \ref{prop:PRDC_config_vector_condl}, conditionally on $\mc{F}$, $|S_{t+1}| = n_{t+1}$, and the configuration vector $\mc{T}\rp{t} = \mf{T}\rp{t}$, the sample $S_{t+1}$ is an iid collection of triangularly distributed vectors on $[-\alpha_{t+1}, \alpha_{t+1}]^m$. Hence, conditioning on $\mc{F} \cup \{ |S_{t+1}| = n_{t+1} \}$ and applying the same marginalization over the configuration vector $\mf{T}\rp{t}$ as in the base case yields the first bullet point of Proposition \ref{prop:GKK_sizes_condl_distr} for the inductive step. The second bullet point follows similarly.
	\end{proof}
	
	The next lemma is used in Appendix \ref{appendix:efficient_clean_up}. We omit its proof because it is similar to that of Proposition \ref{prop:GKK_sizes_condl_distr}.
	
	
	\begin{lemma}
		\label{lem:clean_up_distr}
		Let $X_1, \ldots, X_n$ be iid random vectors, each having a joint density $g:[-\Delta, \Delta]^m \to \mb{R}$. Apply \emph{\textbf{GKK}} to the matrix $\mf{X}$ with columns $X_1, \ldots, X_n$, and consider the good points $G_t'$ created from random differencing in step 3 of the $t^{\mr{th}}$ phase of \emph{\textbf{PRDC}}. Also consider the random vector $v_t\rp{0}$ formed in step 4(a) of \emph{\textbf{PRDC}}. Then conditioned on $|S_j| = n_j$ for $1 \leq j \leq t$ and $|G_t'| = g'_t$ , 
		\begin{itemize}
			\item the random vectors in $G_t'$ form an independent sample of size $g'_t$ from $\mr{Tri}[-\alpha_{t+1}, \alpha_{t+1}]^m$.
			\item The random vector $v\rp{0}_t$ is independent of the vectors in $G'_t$.
		\end{itemize} 
	\end{lemma}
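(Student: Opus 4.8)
The plan is to mimic the proof of Proposition~\ref{prop:GKK_sizes_condl_distr} almost verbatim, the only change being that for the $t^{\mathrm{th}}$ phase of \textbf{PRDC} we stop the analysis right after step~3 (where $G_t'$ is formed) and step~4(a) (where $v_t\rp{0}$ is formed), instead of running through the clean-up iteration in step~4(b); consequently Lemma~\ref{lem:points_lost_distr} is not needed, and we rely only on Lemmas~\ref{lem:uniform_distr}, \ref{lem:tri_distr}, and \ref{lem:GKK_config_vector_condl}. First I would fix a tuple of configuration vectors $\mf{T}\rp{1}, \ldots, \mf{T}\rp{t}$ compatible with the conditioning events $\{|S_j| = n_j : 1 \le j \le t\}$ and $\{|G_t'| = g_t'\}$; note that $|G_t'|$ is a deterministic function of $\mf{T}\rp{t}$, while each $|S_j|$ with $j \le t$ depends on the earlier configuration vectors together with the number of points lost in step~4(b) of the previous phase (so these cannot in general be read off the configuration vectors alone). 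By Lemma~\ref{lem:GKK_config_vector_condl}, conditioned on $\mc{T}\rp{j} = \mf{T}\rp{j}$ for $1 \le j \le t-1$ and $|S_j| = n_j$ for $1 \le j \le t$ — call this event $\cF$ — the $n_t$ vectors in $S_t$ are iid with the triangular density on $[-\alpha_t, \alpha_t]^m$, and $v_t$ is independent of $S_t$.

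Next I would apply Lemma~\ref{lem:uniform_distr} to the input $S = S_t$, $v = v_t$, $\alpha = \alpha_t$, and $g$ equal to the triangular density on $[-\alpha_t,\alpha_t]^m$; its hypotheses hold by the previous paragraph. This gives that, conditioned on $\cF$ and $\mc{T}\rp{t} = \mf{T}\rp{t}$, the vectors in $B_t \cup G_t$ together with $v_t$ are mutually independent and each good point of sub-cube $C_j$ is uniform on $C_j$. Since the good points that are paired and differenced in step~3 form a subset of $G_t$ disjoint from the leftover good points, and differences are taken within disjoint pairs, Lemma~\ref{lem:tri_distr} yields that, conditioned on $\cF$ and $\mc{T}\rp{t} = \mf{T}\rp{t}$, the points of $G_t'$ are iid with distribution $\mr{Tri}[-\alpha_{t+1},\alpha_{t+1}]^m$, and moreover $G_t'$ is independent of the collection consisting of $B_t$, $v_t$, and the leftover good points. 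Because $v_t\rp{0}$ is produced in step~4(a) by applying \textbf{REDUCE} to the vectors in $B_t' = B_t \cup \{v_t\} \cup \{\text{leftover good points}\}$ — whose membership is determined by $\mf{T}\rp{t}$ — and \textbf{REDUCE} uses no information about $G_t'$, it follows that $v_t\rp{0}$ is independent of the vectors in $G_t'$ conditioned on $\cF$ and $\mc{T}\rp{t} = \mf{T}\rp{t}$.

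Finally, exactly as at the end of the proof of Proposition~\ref{prop:GKK_sizes_condl_distr}, I would remove the conditioning on the configuration vectors by the law of total probability. For measurable $I_1, \ldots, I_{g_t'} \subset [-\alpha_{t+1},\alpha_{t+1}]^m$ and measurable $J \subset \mb{R}^m$, writing $\z_1, \ldots, \z_{g_t'}$ for the vectors of $G_t'$, the probability
\[
\p\big[\, v_t\rp{0} \in J,\ \z_i \in I_i\ \forall\, i \ \big|\ |S_j| = n_j\ (1\le j \le t),\ |G_t'| = g_t'\,\big]
\]
equals a weighted average, over all tuples $(\mf{T}\rp{1}, \ldots, \mf{T}\rp{t})$ compatible with the conditioning, of the corresponding probabilities conditioned additionally on $\mc{T}\rp{j} = \mf{T}\rp{j}$ for $1\le j\le t$; by the previous paragraph each summand equals $\p[\uu_0 \in J]\prod_{i} \p[\uu_i \in I_i]$ with $\uu_i \stackrel{iid}{\sim}\mr{Tri}[-\alpha_{t+1},\alpha_{t+1}]^m$ and $\uu_0$ having the (tuple-independent, by the product structure and the conditional independence $v_t\rp{0}\perp G_t'$) law of $v_t\rp{0}$, and the weights sum to one. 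This gives both bullet points.

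The main obstacle is the independence claim for $v_t\rp{0}$: one must argue that, conditioned on $\mf{T}\rp{t}$, the leftover good points and the good points differenced into $G_t'$ form a fixed disjoint partition of the conditionally iid good sample — up to an exchangeable relabeling within each sub-cube that does not affect the joint law — so that $G_t'$ is independent of everything fed into \textbf{REDUCE} in step~4(a). This is precisely what is already packaged into Lemmas~\ref{lem:uniform_distr} and~\ref{lem:tri_distr}, so once those are invoked correctly the remaining marginalization is identical to the one used for Proposition~\ref{prop:GKK_sizes_condl_distr}.
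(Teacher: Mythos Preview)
Your approach is exactly what the paper intends: it omits the proof and states that it ``is similar to that of Proposition~\ref{prop:GKK_sizes_condl_distr},'' and you have correctly identified that the only modification is to stop at step~3/4(a) and invoke Lemmas~\ref{lem:uniform_distr} and~\ref{lem:tri_distr} rather than Lemma~\ref{lem:points_lost_distr}.

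One small slip in your final marginalization: you assert that the conditional law of $v_t\rp{0}$ given the configuration tuple is \emph{tuple-independent}. It is not---the number and sub-cube locations of the bad points and leftover good points feeding into \textbf{REDUCE} depend on $\mf{T}\rp{t}$, so the distribution of $v_t\rp{0}$ varies with the tuple. Fortunately this is not needed: what makes the factorization go through is that the conditional law of $G_t'$ is tuple-independent (always $\mr{Tri}[-\alpha_{t+1},\alpha_{t+1}]^{\otimes g_t'}$), so $\prod_i \p[\uu_i \in I_i]$ can be pulled out of the sum over tuples, leaving $\sum_{\mf{T}} \p[v_t\rp{0}\in J \mid \mf{T},\text{sizes}]\,\p[\mf{T}\mid\text{sizes}] = \p[v_t\rp{0}\in J \mid \text{sizes}]$. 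With that correction the argument is complete.
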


\medskip 

\noindent \textbf{Acknowledgments} We thank Christopher Harshaw, Abba Krieger, and Tselil Schramm for useful conversations on discrepancy and anonymous reviewers for their helpful suggestions.

\bibliographystyle{plainnat}
\bibliography{bib}

\end{document}